\theoremstyle{plain}
\newtheorem{proposition}{Proposition}
\newtheorem{theorem}[proposition]{Theorem}
\newtheorem{lemma}[proposition]{Lemma}
\newtheorem{corollary}[proposition]{Corollary}
\theoremstyle{definition}
\newtheorem{definition}[proposition]{Definition}
\theoremstyle{example}
\newtheorem{example}[proposition]{Example}
\newtheorem{remark}[proposition]{Remark}
\newcommand{\gf}{{\mathbb F}}
\newcommand{\GF}[2][2]{{\mathbb F}_{#1^{#2}}}
\newcommand{\Tr}{{\rm Tr}}
\newcommand{\F}{{\mathbb F}}
\newcommand{\C}{{\mathbb C}}
\newcommand{\im}{\textup{Im}}
\def\qu#1 {\fbox {\footnote {\ }}\ \footnotetext { From Qu: {\color{red}#1}}}
\def\si#1 {\fbox {\footnote {\ }}\ \footnotetext { From Sihem: {\color{blue}#1}}}
\def\hyin#1 {}
\newcommand{\mqu}[1]{{\color{black}#1}}
\begin{document}

\title{On two-to-one mappings over finite fields}
\author{Sihem Mesnager{\thanks{Department of Mathematics, University of Paris VIII, 93526 Saint-Denis, France, University of Paris XIII, CNRS, LAGA UMR 7539, Sorbonne Paris Cit\'e, 93430 Villetaneuse,
France, and Telecom ParisTech 75013 Paris. Email: smesnager@univ-paris8.fr}}, Longjiang Qu{\thanks{Corresponding author. \newline National University of Defense Technology, Department of Mathematics, Changsha, China. E-mail: ljqu\_happy@hotmail.com. The research of L.J. Qu is supported by  the Nature Science Foundation of China (NSFC) under Grant 61722213, 11531002, 61572026, and the National Key R\&D Program of China (No. 2017YFB0802000).}}}

\date{\today}
\maketitle
\begin{abstract}
Two-to-one ($2$-to-$1$)  mappings over finite fields play an important role in symmetric cryptography. In particular they allow to design 
APN functions, bent functions and semi-bent functions. In this paper we provide a systematic study of two-to-one mappings that
are defined over finite fields. We characterize such mappings by means of the Walsh transforms. We also present several constructions, including an AGW-like criterion, constructions with the form of $x^rh(x^{(q-1)/d})$, those from permutation polynomials, from linear translators and from APN functions.
Then we present  $2$-to-$1$ polynomial mappings in classical classes of polynomials: linearized polynomials and monomials, low degree polynomials, Dickson polynomials and Muller-Cohen-Matthews polynomials, etc. Lastly, we show applications of $2$-to-$1$ mappings over finite fields for
constructions of bent Boolean and vectorial bent functions, semi-bent functions, planar functions and permutation polynomials. 
In all those respects, we shall review what is known and provide several new results.
\end{abstract}
{\bf Keywords: }Two-to-one mappings, permutation polynomials, AGW criterion, linear translators, symmetric cryptography.

\section{Introduction}

Permutation mappings (or $1$-to-$1$ mappings) over finite fields have been extensively studied for
their applications in cryptography, coding theory, combinatorial
design, etc. For recent advances on permutation polynomials over finite fields, we refer to the excellent survey \cite{Hou} and the references therein. For a detailed study of involutions over finite fields (in characteristic 2), we send the reader to \cite{CharpinMesnagerSarkar}. Two-to-one ($2$-to-$1$)  mappings are involved in several criteria in particular to design special important primitives in symmetric cryptography such as APN functions, bent functions and more general plateaued functions. Despite their importance, they have never been studied in the literature. The objective of this paper is to provide a systematic study of two-to-one mappings over finite fields including characterizations, criteria and methods for handling and designing such functions as well as effective constructions.

The paper is organized as follows.
Section \ref{Preliminaries} gives preliminaries and fixes the notation.  In  Section \ref{def}, we first present the definitions of $2$-to-$1$ mappings over finite fields as well as basic properties, and then provide a characterization of $2$-to-$1$ mappings by means of the Walsh transforms.
Section \ref{cons} is devoted to the constructions of  $2$-to-$1$ mappings. We shall present several constructions. First, an AGW-like criterion for $2$-to-$1$ mappings is given. Next, constructions of $2$-to-$1$ polynomial mappings with the form of $x^rh(x^{(q-1)/d})$ are provided. Furthermore, constructions of $2$-to-$1$ mappings from permutation polynomials, from linear translators and from APN functions respectively are given. In Section \ref{Pol} we present  $2$-to-$1$ polynomial mappings in classical classes of polynomials: linearized polynomials and monomials, low degree polynomials, Dickson polynomials and Muller-Cohen-Matthews polynomials, etc. In  Section \ref{appli}, we are interested in applications of $2$-to-$1$ mappings over finite fields for
constructions of  bent Boolean and vectorial bent functions, semi-bent functions, planar functions and permutation polynomials. 
{It should be noted that this section  is not only an application of the obtained results, but also a motivation to study 2-to-1 mappings.} In all those sections, we shall review what is known and provide several new results.

\section{Notation and Preliminaries}\label{Preliminaries}
\mqu{For  a set $S$,  $\# S$  will denote the cardinality of $S$. For any field $E$, $E^{\ast}=E \setminus\{0\}$.  Let $\mathbb{N}$,  $\mathbb{R}$ and $\mathbb{C}$  be respectively the set  of all natural,  real  and complex numbers.} Let $p$ be a prime number and $n$ be a positive integer. The finite field with $q:=p^n$ elements is denoted by $\F_q$ or $\F_{p^n}$, which  can be viewed as an $n$-dimensional vector space over $\F_{p}$, and it is denoted by $\F_p^n$. 
Denote by $\overline{\gf}_q$  the algebraic closure of $\gf_{q}$.  
The trace function $\Tr_{p^n/p}: \mathbb {F}_{p^n} \rightarrow \mathbb {F}_{p}$ is defined as
\begin{displaymath}
\Tr_{p^n/p}(x) =\sum_{i=0}^{ n-1}
  x^{p^{i}}=x+x^{p}+x^{p^2}+\cdots+x^{p^{n-1}},
  \end{displaymath}
 which is called \textit{the absolute trace} of $x\in \mathbb {F}_{p^n}$. 
 More general, the trace function $\Tr_{q^n/q}: \mathbb {F}_{q^n} \rightarrow \mathbb {F}_{q}$ is defined as 
 \begin{displaymath}
\Tr_{q^n/q}(x) =\sum_{i=0}^{ n-1}
  x^{q^{i}}=x+x^{q}+x^{q^2}+\cdots+x^{q^{n-1}}.
  \end{displaymath}
  
 A linearized polynomial (or additive polynomial), $L\in \mathbb{F}_q[x]$ is a polynomial of the shape $L(x)=\sum_{i=0}^n a_i x^{p^i}$. A polynomial $A\in \mathbb{F}_q[x]$ is called an affine polynomial if it equals to the summation of a linearized polynomial and a constant term.
  
 Let $f$ be a  function from $\mathbb {F}_p^n$  to $\mathbb {F}_{p}$. 
We can give a corresponding complex-valued function  $\chi_f$ from $\F_p^n$ to $\C $ defined as  $\chi_f(x)=\xi_p^{f(x)}$ for  all $ x\in\F_p^n$ where $ \xi_p=e^{(2\pi \sqrt {-1})/p}$ is a complex  primitive $p$-th root of unity.
 The Walsh transform  of $f$ is the Fourier transform $\widehat{\chi}_f$  from $\F_p^n$ to $\C $ of  $\chi_f$  defined as
$\widehat{\chi}_f(\omega)=\sum_{x\in  \mathbb {F}_p^n} {\xi_p}^{{f(x)}- \omega \cdot x}$ 
for all $\omega\in  \F_p^n$, where $``\cdot"$ denotes an inner product (for instance, the usual inner product) in $\F_p^n$. We can take $\omega \cdot x=\Tr_{p^n/p}(\omega x)$  if $\F_p^n$ is identified with $\F_{p^n}$. Note that if $p=2$ then $\xi_p=-1$ and a function from $\mathbb {F}_{2^n}$  to $\mathbb {F}_{2}$ is said to be a Boolean function.


\section{Definitions and a characterization of $2$-to-$1$ mappings over finite fields}\label{def}
{\subsection{Definitions of $2$-to-$1$ mappings}}
\mqu{Firstly, we give the definition of $2$-to-$1$ mappings over any finite set. 

\begin{definition}
	Let $A$ and $B$ be two finite sets, and let $f$ be a mapping from $A$ to $B$. Then $f$ is called a $2$-to-$1$ mapping if one of the following two cases hold:
	\begin{enumerate}
		\item $\sharp A$ is even, and for any $b\in B$, it has either $2$ or $0$ preimages of $f$;\newline
		\item $\sharp A$ is odd, and for all but one $b\in B$, it has either $2$ or $0$ preimages of $f$, and the exception element has exactly one preimage. 
	\end{enumerate}
\end{definition}	

Throughout this paper, we mainly focus on the mappings over finite fields. 
Let $\gf_{p^n}$ and $\gf_{p^m}$ be two finite fields of order  $p^n$ and $p^m$, respectively. Let $F$ be a mapping from $\gf_{p^n}$ to $\gf_{p^m}$.  
Then according to the above definition, if $p=2$, then $F$ is  a
\emph {$2$-to-$1$ mapping}  if and only if the equation $F(x)=a$ has either zero or two solutions in $\gf_{2^n}$ for any $a\in \gf_{2^m}$, or equivalently, $\#\{x\in \gf_{2^n} | F(x)=F(y)\}=2$ for all $y\in \gf_{2^n}$.
While for an odd prime $p$, a mapping $F: \gf_{p^n}\rightarrow \gf_{p^m}$ is $2$-to-$1$ if and only if all but one elements in the image set of $F$ have two preimages and the exceptional element has one preimage, or equivalently, there exists a unique $b\in\gf_{p^m}$ such that $\# F^{-1}(b)=1$ and $\# F^{-1}(a)\in\{0,2\}$, for all $a\in \gf_{p^m}\setminus\{b\}$. Without loss of generality, we can assume that the exceptional element of $b$ is $0$. Moreover, if its unique preimage is also the zero element, then we have the following remark.  }

	\begin{remark}
		Let $F  :\gf_{p^n}\rightarrow \gf_{p^m}$ with {$F(x)=0$ if and only if $x=0$}, where $p$ is odd. Then $F$ is a $2$-to-$1$ mapping  if and only if, $F(x)=a$ has either zero or two solutions in $\gf_{p^n}$ for any $a\in \gf_{p^m}^\ast$.
		\end{remark} 


In the end of this {subsection}, we calculate  the number of all $2$-to-$1$ mappings over $ \gf_{2^n}$.  It seems to be a huge number. 
\begin{proposition}
	Denote by  $N_n$  the number of all $2$-to-$1$ mappings $F: \gf_{2^n}\rightarrow \gf_{2^n}$. Then $$N_n=\frac{(2^n!)^2}{2^{2^{n-1}}(2^{n-1}!)^2}\approx \frac{2^{n\cdot 2^n+2^{n-1}+1}}{e^{2^n}}.$$ 
\end{proposition}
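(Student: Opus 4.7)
The plan is to view a $2$-to-$1$ map $F:\gf_{2^n}\to\gf_{2^n}$ as the composition of two independent combinatorial choices, count each one, and then apply Stirling's formula for the asymptotic estimate.

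First I would observe that specifying a $2$-to-$1$ mapping on $\gf_{2^n}$ is equivalent to specifying the pair $(\mathcal{P},\varphi)$ where $\mathcal{P}$ is a perfect matching (an unordered partition into $2^{n-1}$ unordered pairs) of the domain $\gf_{2^n}$, and $\varphi$ is an injection from $\mathcal{P}$ into $\gf_{2^n}$ that assigns a distinct image value to each pair. Conversely any such $(\mathcal{P},\varphi)$ yields a unique $2$-to-$1$ map $F$ by setting $F(x)=\varphi(\{x,y\})$ whenever $\{x,y\}\in\mathcal{P}$. Thus $N_n$ factors as (number of perfect matchings of a $2^n$-set) $\times$ (number of injections from a $2^{n-1}$-set into a $2^n$-set).

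Next I would compute each factor by a standard double-counting argument. Any linear ordering of the $2^n$ domain elements yields a matching by pairing the $(2i-1)$-th with the $2i$-th element, and each matching arises from exactly $2^{2^{n-1}}(2^{n-1})!$ orderings (permuting the $2^{n-1}$ pairs and swapping the two elements within each pair). Hence the number of matchings is $(2^n)!/\bigl(2^{2^{n-1}}(2^{n-1})!\bigr)$. The number of injections from a $2^{n-1}$-set into a $2^n$-set is the falling factorial $(2^n)!/(2^{n-1})!$. Multiplying gives
\[
N_n \;=\; \frac{(2^n)!}{2^{2^{n-1}}(2^{n-1})!}\cdot \frac{(2^n)!}{(2^{n-1})!}\;=\;\frac{\bigl((2^n)!\bigr)^2}{2^{2^{n-1}}\bigl((2^{n-1})!\bigr)^2},
\]
which is the claimed closed form.

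For the asymptotic I would apply Stirling's formula $k!\sim \sqrt{2\pi k}\,(k/e)^k$ to both $(2^n)!$ and $(2^{n-1})!$. The polynomial $\sqrt{2\pi k}$ factors essentially cancel after squaring and dividing, leaving a ratio of pure exponentials. A short bookkeeping of the exponents of $2$ gives $n\cdot 2^{n+1} - (n-1)\cdot 2^n - 2^{n-1} = n\cdot 2^n + 2^{n-1}$, and the $e$-factors combine to $e^{2^{n+1}}/e^{2^n}=e^{2^n}$ in the denominator, yielding the stated estimate $N_n \approx 2^{n\cdot 2^n+2^{n-1}+1}/e^{2^n}$. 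No step is genuinely difficult; the only thing to watch is careful tracking of the exponents in the Stirling computation, which is the main source of potential arithmetic slips.
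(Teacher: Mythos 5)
Your proof is correct and follows essentially the same route as the paper: a direct elementary count yielding the closed form, followed by Stirling's formula for the asymptotic. The only cosmetic difference is that the paper first chooses the image set ($\binom{2^n}{2^{n-1}}$ ways) and then distributes the $2$-element preimages, whereas you first fix a perfect matching of the domain and then inject the pairs into the codomain; these are dual factorizations of the same product, and your Stirling bookkeeping is right (note that the $\sqrt{2\pi k}$ factors do not fully cancel but leave a ratio of $2$, which is exactly the $+1$ in the exponent of your final estimate).
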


\begin{proof}
	Let $F$ be a $2$-to-$1$ mapping over $\gf_{2^n}$. Then the size of its image set  is $2^{n-1}$. For the first element of the image set, its preimage  have $\binom{2^n}{2}$ choices, while for the second element, it has $\binom{2^n-2}{2}$ choices, so on and so forth, the last element has $\binom{2}{2}$ choices. Hence we have
	  $$N_n=\binom{2^n}{2^{n-1}}\cdot 
	  \binom{2^n}{2}\cdot \binom{2^n-2}{2}\cdots\binom{2}{2}= \frac{(2^n!)^2}{2^{2^{n-1}}(2^{n-1}!)^2}.$$	
	  {Then the result follows from the well-known String formula.}
\end{proof}

It is well known that the number of all mappings  (resp. bijective mappings) from $ \gf_{2^n}$ to itself is $2^{n\cdot 2^n}$ (resp. $(2^n)!$) . Denote the latter number by $B_n$. Then we have 
	$$\frac{N_n}{B_n}=\frac{2^n!}{2^{2^{n-1}}(2^{n-1}!)^2}\approx \frac{2^{2^{n-1}}}{\sqrt{\pi 2^{n-1}}}.$$
	
	\mqu{We list the ratio of these two numbers for $1\leq n\leq 8$ in the following table.  The values are rounded to three significant figures.
\begin{center}
	\begin{tabular}{|c|c|c|c|c|c|c|c|c|c|c|c|} \hline
		\label{table-N1}
		$n$ & $1$ & $2$ & $3$ & $4$ & $5$  & $6$ & $7$ & $8$  \\ \hline
			${N_n}/{B_n}$ & $1.00$ & $1.50$ & $4.38$ & $50.3$ & $9.17\cdot 10^3$  & $4.27\cdot 10^{8}$& $1.30\cdot 10^{18}$ & $1.70\cdot 10^{37}$  \\ \hline
	\end{tabular}
\end{center}

		It seems from the above table} that the number of all $2$-to-$1$ mappings over $ \gf_{2^n}$
		 is much greater than that of all bijective mappings over $ \gf_{2^n}$. 

{\subsection{A characterization of $2$-to-$1$ mappings over  $\gf_{2^n}$ by means of the Walsh transforms}\label{char}}

In this {subsection} we present a characterization of $2$-to-$1$ mappings over $\gf_{2^n}$ by means of the Walsh transforms. The main idea goes back to Carlet \cite{Carlet2017} who has characterized the differential uniformity of vectorial functions by the Walsh transform. Let $F(x)$ be a polynomial over $\gf_{2^n}$. Recall that $F(x)$ is $2$-to-$1$ if and only if, for every $b$ in $\gf_{2^n}$, the equation $F(x) =b$ has 0 or 2 solutions. Let $F: {\Bbb F}_{2^n}\rightarrow {\Bbb F}_{2^n}$ be a vectorial Boolean function. The Walsh transform of $F$ at
$(u,v)\in {\Bbb F}_{2^n}\times{\Bbb F}_{2^n}$ equals  by definition the Walsh transform
of the so-called component function $\Tr_{2^n/2}(v F(x))$ at $u$, that is:
$$W_F(u,v):=\sum_{x\in {\Bbb F}_{2^n}}(-1)^{\Tr_{2^n/2}(v F(x))+\Tr_{2^n/2}(ux)}.$$

\mqu{
Let $\phi(x)=\sum_{j\ge0}A_jx^j$ be any polynomial over $\mathbb{R}$ such that $\phi(x)=0$ for $x=0, 2$ and $\phi(x)>0$ for every $x\in \mathbb{N} \setminus \{0, 2\}$. Hence for any $F$ and $b\in\gf_{2^n}$, we have 
	$$\sum_{j\ge0}A_j (\#\{x\in \gf_{2^n}: F(x)+b=0\}  )^j\ge0, $$
	and $F$ is a two-to-one function if and only if this inequality is an equality for any  $b\in\gf_{2^n}$. Furthermore, for any $F$, we have 
	$$\sum_{j\ge0}A_j \sum_{b\in\gf_{2}^n}  (\#\{x\in \gf_{2^n}: F(x)+b=0\}  )^j \ge 0,  $$
	and $F$ is $2$-to-$1$ if and only if this inequality is an equality. 

We shall now characterize this condition by means of the Walsh transform. We
have:
\begin{equation*}
\#\{x\in \gf_{2^n}: F(x)+b=0\}=2^{-n}\sum_{x\in\gf_{2^n}, v\in\gf_{2^n}}(-1)^{\Tr_{2^n/2}(v(F(x)+b))}, 
\end{equation*}
and therefore, for $j\geq 1$:
\begin{eqnarray*}
	&& \sum_{b\in \gf_{2^n}}\left( \#\{x\in \gf_{2^n}: F(x)+b=0\} \right)^j\\
	&=& 2^{-jn}\sum_{b\in \gf_{2^n}}\sum_{\begin{array}{c}
			x_1, \cdots, x_j\in\gf_{2^n},\\
			v_1,\cdots,v_j\in\gf_{2^n}
	\end{array}} (-1)^{\sum_{i=1}^{j}\Tr_{2^n/2}(v_i(F(x_i)+b))}\\
	&=& 2^{-jn}\sum_{\begin{array}{c}
			x_1, \cdots, x_j\in\gf_{2^n},\\
			v_1,\cdots,v_j\in\gf_{2^n}
	\end{array}} (-1)^{\sum_{i=1}^{j}\Tr_{2^n/2}(v_iF(x_i))}\sum_{b\in \gf_{2^n}}(-1)^{\Tr_{2^n/2}(b\sum_{i=1}^{j}v_i)}\\
	&=& 2^{-(j-1)n} \sum_{\begin{array}{c}
			v_1,\cdots,v_j\in\gf_{2^n}\\
			\sum_{i=1}^jv_i=0
	\end{array}} \prod_{i=1}^{j} W_F(0, v_i).
\end{eqnarray*}
Hence we have the following characterization of $2$-to-$1$ mappings over  $\gf_{2^n}$ by the Walsh transform.
\begin{theorem}
	Let $F: {\Bbb F}_{2^n}\rightarrow {\Bbb F}_{2^n}$ be a vectorial Boolean
	function. Then
		$$A_0  + \sum_{j\ge1}A_j   2^{-(j-1)n} \sum_{\begin{array}{c}
			v_1,\cdots,v_j\in\gf_{2^n}\\
			\sum_{i=1}^jv_i=0
			\end{array}} \prod_{i=1}^{j} W_F(0, v_i) \ge 0,  $$
	and $F$ is $2$-to-$1$ if and only if this inequality is an equality. 
\end{theorem}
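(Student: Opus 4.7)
The plan is simply to assemble the two pieces that the excerpt has already prepared: the pointwise inequality $\phi(N_b)\ge 0$ applied to preimage counts $N_b:=\#\{x\in\gf_{2^n}: F(x)+b=0\}$, and the additive‐character expansion of $\sum_b N_b^j$ that rewrites it in terms of Walsh transforms. Both ingredients are essentially on the page; the theorem falls out by combining them.

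First I would record the elementary inequality. By the defining properties of $\phi(x)=\sum_{j\ge 0}A_jx^j$ (vanishing at $x=0,2$ and strictly positive on $\mathbb{N}\setminus\{0,2\}$), for every $b\in\gf_{2^n}$ one has $\sum_{j\ge 0}A_jN_b^j=\phi(N_b)\ge 0$, with equality if and only if $N_b\in\{0,2\}$. Summing over $b$ yields the global bound $\sum_b\phi(N_b)\ge 0$, and the equality case holds simultaneously for all $b$ precisely when $N_b\in\{0,2\}$ for every $b$, which is exactly the definition of $F$ being $2$-to-$1$ in characteristic $2$. This already pins down the equivalence between equality and the $2$-to-$1$ property; what remains is to express the left-hand side via Walsh transforms.

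Next I would use the identity $\mathbf{1}_{y=0}=2^{-n}\sum_{v\in\gf_{2^n}}(-1)^{\Tr_{2^n/2}(vy)}$ to write $N_b=2^{-n}\sum_{x,v}(-1)^{\Tr_{2^n/2}(v(F(x)+b))}$, then raise to the $j$-th power and sum over $b\in\gf_{2^n}$. The dependence on $b$ separates as $\sum_b(-1)^{\Tr_{2^n/2}(b\sum_i v_i)}$, which by additive-character orthogonality equals $2^n$ when $\sum_{i=1}^{j}v_i=0$ and $0$ otherwise. This collapses the expression to a sum over $j$-tuples $(v_1,\ldots,v_j)$ with $\sum_i v_i=0$, and the $x_i$-sums factor as $\prod_{i=1}^{j}\sum_{x_i}(-1)^{\Tr_{2^n/2}(v_i F(x_i))}=\prod_{i=1}^{j}W_F(0,v_i)$. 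Gathering the normalization $2^{-jn}\cdot 2^n=2^{-(j-1)n}$ gives precisely the claimed $j\ge 1$ contribution; for $j=0$ the term $2^nA_0$ vanishes since $A_0=\phi(0)=0$, which is consistent with the isolated $A_0$ appearing in the statement.

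The main obstacle is bookkeeping rather than ideas: one has to keep track of the $j$-fold indices, apply character orthogonality cleanly to impose $\sum_i v_i=0$, and check the surviving factor of $2^n$. Once these routine manipulations are done, equating the resulting expression with $\sum_b\phi(N_b)$ and invoking the equivalence established in the first step yields both the inequality and its equality characterization.
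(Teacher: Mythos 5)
Your proposal is correct and follows essentially the same route as the paper: the pointwise inequality $\phi(N_b)\ge 0$ with equality iff $N_b\in\{0,2\}$, followed by the character-orthogonality expansion of $\sum_b N_b^j$ into the Walsh-transform sum over tuples with $\sum_i v_i=0$. Your remark that $A_0=\phi(0)=0$ (so the discrepancy between $A_0$ and $2^nA_0$ in the constant term is immaterial) is a small point of care the paper glosses over.
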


 Now, let us consider the polynomial over $\mathbb{R}$ equal to $X(X-2)^2 = X^3-4X^2+4X$.
It takes value 0 when $X$ equals 0 or 2 and takes strictly positive value when $X$ is in $\mathbb{N}\setminus \{0,2\}$. We have then the following corollary. 
%
%

\begin{corollary}
	Let $F: {\Bbb F}_{2^n}\rightarrow {\Bbb F}_{2^n}$ be a vectorial Boolean
	function. Then
	\begin{equation*}
	2^{-2n}\sum_{v_1,v_2\in\gf_{2^n}}  W_F(0, v_1)W_F(0, v_2) W_F(0, v_1+v_2) - 2^{2-n}\sum_{v\in\gf_{2^n}} W_F(0, v)^2 +2^{n+2}\geq 0, 
	\end{equation*}
	and this inequality is an equality if and only if $F$ is $2$-to-$1$.
\end{corollary}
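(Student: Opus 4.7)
The plan is to apply the preceding theorem to the explicit polynomial $\phi(X) = X(X-2)^{2} = X^{3}-4X^{2}+4X$. First I would verify that $\phi$ meets the hypothesis of the theorem: it vanishes at $X=0$ and $X=2$, and since $\phi(X) = X(X-2)^{2}$ is a product of a nonnegative factor $X$ (on $\mathbb{N}$) and a square, it is strictly positive on $\mathbb{N}\setminus\{0,2\}$. This is exactly the property needed so that the inequality in the theorem becomes an equality precisely when $\#\{x: F(x)+b=0\}\in\{0,2\}$ for every $b$, that is, when $F$ is $2$-to-$1$.

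Next, I would read off the coefficients $A_{0}=0$, $A_{1}=4$, $A_{2}=-4$, $A_{3}=1$, and $A_{j}=0$ for $j\geq 4$, and substitute them into the general inequality of the theorem. The computation then reduces to unpacking the three sums indexed by $j=1,2,3$ with the constraint $\sum_{i=1}^{j}v_{i}=0$ in characteristic $2$:
\begin{itemize}
\item For $j=1$, the constraint forces $v_{1}=0$, so the contribution is $4\cdot W_{F}(0,0)=4\cdot 2^{n}=2^{n+2}$.
\item For $j=2$, the constraint $v_{1}+v_{2}=0$ gives $v_{2}=v_{1}$, so the contribution is $-4\cdot 2^{-n}\sum_{v\in\gf_{2^{n}}}W_{F}(0,v)^{2} = -2^{2-n}\sum_{v\in\gf_{2^{n}}}W_{F}(0,v)^{2}$.
\item For $j=3$, the constraint $v_{1}+v_{2}+v_{3}=0$ gives $v_{3}=v_{1}+v_{2}$, so the contribution is $2^{-2n}\sum_{v_{1},v_{2}\in\gf_{2^{n}}}W_{F}(0,v_{1})W_{F}(0,v_{2})W_{F}(0,v_{1}+v_{2})$.
\end{itemize}
Adding these three terms gives exactly the left-hand side of the corollary, and the equality case of the theorem translates directly into the equivalence with the $2$-to-$1$ property.

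There is no genuine obstacle; the argument is a direct specialization. The only mild point to double check is the reindexing in the $j=2$ and $j=3$ sums: in characteristic $2$, negation is the identity, so the constraint $\sum v_{i}=0$ can be solved by eliminating the last variable, and the resulting unrestricted sums over the remaining variables produce precisely the expressions written in the statement of the corollary without any sign or cardinality adjustment.
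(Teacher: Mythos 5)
Your proposal is correct and follows exactly the paper's own route: specializing the theorem to $\phi(X)=X(X-2)^2=X^3-4X^2+4X$ and unpacking the $j=1,2,3$ terms, which the paper does only implicitly in a single sentence before stating the corollary. Your explicit verification of the three contributions (including the characteristic-$2$ reindexing) fills in precisely the computation the paper leaves to the reader.
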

}

\section{Constructions of  $2$-to-$1$ mappings}\label{cons}
In this section, we present different methods to construct $2$-to-$1$ mappings over finite fields.

\subsection{AGW-like criterion for $2$-to-$1$ mappings}

The criterion, discovered by Akbary, Ghioca
and Wang \cite{AGW2011}, is a simple and effective method that establishes the permutation property of a mapping $\gf_q\rightarrow \gf_q$ through a commutative diagram. 
The significance of the AGW criterion resides in the fact that it not only provides
a unified interpretation for many previous constructions of permutations polynomials but also facilitates numerous new discoveries. In this subsection we will generalize AGW criterion to construct $2$-to-$1$ mappings over finite fields. 

\mqu{We give a brief description of this subsection for the readers' convenience. 
First, the AGW criterion is generalized to construct $2$-to-$1$ mappings over finite sets (Proposition \ref{AGW-Gen}). Second, three general constructions (Theorem \ref{field_gen}, Theorem \ref{AGW_3L}, and  Proposition \ref{prop_psi}) are  given by applying this generalized AGW criterion. Then several explicit $2$-to-$1$ polynomials over finite fields are constructed from Proposition \ref{prop_psi}, and most of the constructions are divided into two cases. 
}

{
\begin{proposition}\label{AGW-Gen}
	Let $A$ be a finite set, $S, \bar{S}$ be two finite sets such that $\sharp S=\sharp \bar{S}$. Let  $f, g, \lambda, \bar{\lambda}$ be four mappings defined as the following diagram such that $\bar{\lambda}\circ f= g\circ \lambda$.  If $g$ is bijective from $S$ to $\bar{S}$,  $f|_{\lambda^{-1}(s)}$ is $2$-to-$1$ for any $s\in S$, and there is at most one $s\in S$ such that $\sharp {\lambda^{-1}(s)}$ is odd, then $f$ is a $2$-to-$1$ mapping over $A$.
	
	\begin{equation*}
	\xymatrix{
		A \ar[rr]^{f}\ar[d]_{\lambda} &   & A \ar[d]^{\bar{\lambda}} \\
		S \ar[rr]^{g} &  & \bar{S} }
	\end{equation*} 
\end{proposition}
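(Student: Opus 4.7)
The plan is to show that the preimages of $f$ are forced to lie in single fibers of $\lambda$, so that the global $2$-to-$1$ behavior of $f$ can be assembled out of the local $2$-to-$1$ behavior of $f$ on each $\lambda^{-1}(s)$.

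First I would establish the fiber-containment: for any $a,b\in A$ with $f(a)=f(b)$, the commutativity $\bar\lambda\circ f=g\circ\lambda$ gives $g(\lambda(a))=g(\lambda(b))$, and then the injectivity of $g$ forces $\lambda(a)=\lambda(b)$. Consequently $f^{-1}(f(a))\subseteq\lambda^{-1}(\lambda(a))$, so that $f^{-1}(f(a))=\bigl(f|_{\lambda^{-1}(\lambda(a))}\bigr)^{-1}(f(a))$. In other words, the global preimage-count of any element in the image of $f$ equals its preimage-count under the appropriate restriction $F_s:=f|_{\lambda^{-1}(s)}$.

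Next I would use the local $2$-to-$1$ hypothesis on each $F_s$. By the definition of $2$-to-$1$, one of two things happens on each fiber: either $\sharp\lambda^{-1}(s)$ is even and every element in the image of $F_s$ has exactly two $F_s$-preimages, or $\sharp\lambda^{-1}(s)$ is odd and there is a unique exceptional element $y_s^\ast$ in the image of $F_s$ with a single $F_s$-preimage while every other image value has two. By the first step these local counts coincide with the global counts $\sharp f^{-1}(y)$, and the images of the various $F_s$ are pairwise disjoint (again because fibers of $f$ are trapped in fibers of $\lambda$), so no two local exceptions can collide.

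Finally I would split on the hypothesis that at most one fiber has odd size. If every $\lambda^{-1}(s)$ is even, then $\sharp A=\sum_s\sharp\lambda^{-1}(s)$ is even and every $b\in A$ has $0$ or $2$ preimages under $f$, placing $f$ in the first case of the definition. If exactly one fiber $\lambda^{-1}(s_0)$ has odd size, then $\sharp A$ is odd and precisely one element of $A$, namely $y_{s_0}^\ast$, has a single preimage, while every other element has $0$ or $2$ preimages, placing $f$ in the second case of the definition. Either way $f$ is $2$-to-$1$ over $A$. The only point requiring care is the bookkeeping in the odd case, and this is handled by the disjointness observation above; the rest is a direct unpacking of the commutative diagram together with the definition.
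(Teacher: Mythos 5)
Your proposal is correct and follows essentially the same route as the paper: commutativity of the diagram plus injectivity of $g$ traps each $f$-fiber inside a single $\lambda$-fiber, after which the local $2$-to-$1$ hypothesis and the parity condition on the $\lambda$-fibers finish the argument. If anything, your write-up is more careful than the paper's (the disjointness of the images of the restrictions $F_s$ and the computation $\sharp A=\sum_s\sharp\lambda^{-1}(s)$ are left implicit there), so no changes are needed.
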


\begin{proof} 
Let $b\in A$. Assume that there exists an element $a$ in $A$ such that $f(a)=b$. 
Let $\bar{s}=\bar{\lambda}(b)$. Then 
$$\bar{s}=\bar{\lambda}(b) = \bar{\lambda}\circ f(a) = g\circ \lambda(a). $$
Since $g$ is bijective from $S$ to $\bar{S}$, there exists a unique element $s\in S$ such that $g(s)=\bar{s}$. Hence $\lambda(a)=s$. 
 If $\sharp {\lambda^{-1}(s)}$ is even, then $f(x)=b$ has exactly two solutions in $A$ (one is $a$) since 
$f|_{\lambda^{-1}(s)}$ is $2$-to-$1$ for any $s\in S$. If $\sharp {\lambda^{-1}(s)}$ is odd, then with one exception, $b$ has exactly two preimages of $f$ in $A$. Further, 
since  at most one of $\sharp {\lambda^{-1}(s)}$ is odd for all $s\in S$,
we know that  $f$ is a $2$-to-$1$ mapping over $A$.  
\end{proof}

\begin{remark}
 If $g$ is $2$-to-$1$ from $S$ to $\bar{S}$, and $f|_{\lambda^{-1}(s)}$ is injective for any $s\in S$, then one can only deduce that for any $b\in A$, it has at most two preimages. Similarly, let $b\in A$ and assume that  there exists an element $a$ in $A$ such that $f(a)=b$. 		
		Let $\bar{s}=\bar{\lambda}(b)$. Then 
		$$\bar{s}=\bar{\lambda}(b) = \bar{\lambda}\circ f(a) = g\circ \lambda(a). $$
		Since $g$ is $2$-to-$1$, there exist exactly two elements $s_1, s_2$ in $S$ such that $g(s_1)=g(s_2)=\bar{s}$ with at most one exception. Hence $\lambda(a)=s_1$ or $\lambda(a)=s_2$. 
		Then it follows from the assumptions that $f|_{\lambda^{-1}(s)}$ is $1$-to-$1$ for any $s\in S$ that
		there exist at most  two elements $a_1(=a), a_2$ in $A$ such that $f(a_1)=f(a_2)=b$. It seems not easy to add a condition such that $f$ is a $2$-to-$1$ mapping over $A$. We leave this problem to interested readers. 
\end{remark}

By applying Proposition \ref{AGW-Gen}, we can give the following two general constructions. 

\begin{theorem}\label{field_gen}
	Consider any polynomial $g \in \gf_{q^n}[x]$,  any additive polynomials $\phi, \psi  \in \gf_{q^n}[x]$, any $\gf_q$-linear
	polynomial $\bar{\psi}  \in \gf_{q^n}[x]$ satisfying $\phi\circ \psi = \bar{\psi}\circ \phi $, and any polynomial  $h \in \gf_{q^n}[x]$
	such that $h(\psi(\gf_{q^n})) \subseteq \gf_q^\ast$.
	Let $$f(x)=h(\psi(x))\phi(x)+g(\psi(x))$$ and $$\bar{f}(x)=h(x)\phi(x)+\bar{\psi}(g(x)).$$ 
If	$\bar{f}$ is bijective from  $\psi(\gf_{q^n})$ to $\bar{\psi}(\gf_{q^n})$, $f|_{\psi^{-1}(s)}$ is $2$-to-$1$ for any $s\in \psi(\gf_{q^n})$, and 
there is at most one $s\in  \psi(\gf_{q^n})$ such that $\sharp { \psi^{-1}(s)}$ is odd,
then $f$ is a $2$-to-$1$ mapping over $\gf_{q^n}$.
\end{theorem}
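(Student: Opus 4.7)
The plan is to deduce this theorem directly from the generalized AGW criterion (Proposition \ref{AGW-Gen}). I would take $A=\gf_{q^n}$, $S=\psi(\gf_{q^n})$, $\bar S=\bar{\psi}(\gf_{q^n})$, with $\lambda=\psi$ (viewed as a surjection $A\to S$), $\bar{\lambda}=\bar{\psi}$ (as a surjection $A\to\bar S$), and take the map playing the role of ``$g$'' in Proposition \ref{AGW-Gen} to be $\bar f$ restricted to $S\to\bar S$. The three hypotheses of Proposition \ref{AGW-Gen} will then read off almost verbatim from the assumptions: $\bar f$ is bijective from $S$ to $\bar S$, $f|_{\psi^{-1}(s)}$ is $2$-to-$1$ for every $s\in S$, and at most one fiber $\psi^{-1}(s)$ has odd cardinality. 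Note that $\#S=\#\bar S$ is automatic from the bijectivity of $\bar f$.

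The only substantive verification is the commutativity $\bar{\lambda}\circ f = g\circ \lambda$, i.e.\ that
\[
\bar{\psi}\bigl(f(x)\bigr)=\bar f\bigl(\psi(x)\bigr)\qquad\text{for all }x\in\gf_{q^n}.
\]
I would expand the right-hand side as
\[
\bar f(\psi(x)) = h(\psi(x))\,\phi(\psi(x)) + \bar{\psi}\bigl(g(\psi(x))\bigr),
\]
and then use the intertwining relation $\phi\circ\psi=\bar{\psi}\circ\phi$ to replace $\phi(\psi(x))$ by $\bar{\psi}(\phi(x))$. At this point the key observation is that $h(\psi(x))\in\gf_q^\ast$ by the assumption $h(\psi(\gf_{q^n}))\subseteq\gf_q^\ast$, so the $\gf_q$-linearity of $\bar{\psi}$ lets me pull this scalar inside $\bar{\psi}$, obtaining
\[
\bar f(\psi(x)) = \bar{\psi}\bigl(h(\psi(x))\,\phi(x)\bigr) + \bar{\psi}\bigl(g(\psi(x))\bigr)=\bar{\psi}\bigl(h(\psi(x))\phi(x)+g(\psi(x))\bigr)=\bar{\psi}(f(x)),
\]
as required. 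With the diagram commutative, Proposition \ref{AGW-Gen} immediately yields that $f$ is $2$-to-$1$ on $\gf_{q^n}$.

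The main (really the only) delicate point is the last calculation: one must have both ingredients in place simultaneously, namely the intertwining identity $\phi\circ\psi=\bar{\psi}\circ\phi$ and the scalarness $h(\psi(x))\in\gf_q$, so that $\gf_q$-linearity of $\bar{\psi}$ can be invoked to absorb $h(\psi(x))$ into the outer $\bar{\psi}$. Everything else is a direct translation of the hypotheses into the framework of Proposition \ref{AGW-Gen}.
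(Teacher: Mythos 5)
Your proposal is correct and follows essentially the same route as the paper: verify the commutativity $\bar{\psi}\circ f=\bar f\circ\psi$ by using the $\gf_q$-linearity of $\bar{\psi}$ to absorb the scalar $h(\psi(x))\in\gf_q^\ast$ together with the intertwining relation $\phi\circ\psi=\bar{\psi}\circ\phi$, and then invoke Proposition \ref{AGW-Gen}. The only difference is cosmetic — you expand $\bar f(\psi(x))$ and work toward $\bar{\psi}(f(x))$ while the paper computes in the opposite direction.
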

\begin{proof}
We have	\begin{eqnarray*}
		\bar{\psi}\circ f(x) & = & \bar{\psi}(h(\psi(x))\phi(x))  +  \bar{\psi}(g(\psi(x)))\\
		&=& h(\psi(x))\phi(\psi(x)) +  \bar{\psi}(g(\psi(x)))\\
		&=&	 \bar{f}\circ \psi(x), 
	\end{eqnarray*}
	the second equality holds since $h(\psi(\gf_{q^n})) \subseteq \gf_q^\ast$, $\bar{\psi}$ is $\gf_q$-linear and $\phi\circ \psi = \bar{\psi}\circ \phi $. Hence we get the following commutative diagram:
	\begin{equation*}
	\xymatrix{
		\gf_{q^n} \ar[rr]^{f}\ar[d]_{\psi} &   & \gf_{q^n} \ar[d]^{\bar{\psi}} \\
		\psi(\gf_{q^n}) \ar[rr]^{\bar{f}} &  & \bar{\psi}(\gf_{q^n}) }
	\end{equation*} 
	Then the result follows directly from Proposition \ref{AGW-Gen}. 
\end{proof}

\begin{theorem}\label{AGW_3L}
	Let $q$ be an even prime power, let $n$ be a positive integer, and let $L_1, L_2, L_3$ be $\gf_q$-linear polynomials
	over $\gf_q$ seen as endomorphisms of  \mqu{ the $\gf_q$-module } $\gf_{q^n}$. Let $g(x)\in \gf_{q^n}[x]$ be such that $g(L_3(\gf_{q^n}))\subseteq \gf_q$. Assume
	$$f(x)=L_1(x)+L_2(x)g(L_3(x))$$
	and 
	$$\bar{f} (x) = L_1(x) + L_2(x)g(x). $$
	For any \mqu{$y\in L_3(\gf_{q^n})$}, let $F_y(x) := L_1(x) + L_2(x)g(y)$. 
	If $\ker(F_ y) \cap \ker(L_3) = \{0, c_y\}$, for any  \mqu{$y\in L_3(\gf_{q^n})$}, where 	$c_y$ is a nonzero element of $\gf_{q^n}$,  and $\bar{f}$ is a permutation over $L_3(\gf_{q^n})$, then $f$ is $2$-to-$1$ over $\gf_{q^n}$.	
\end{theorem}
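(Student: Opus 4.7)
The plan is to apply the generalized AGW criterion (Proposition~\ref{AGW-Gen}) with $A=\gf_{q^n}$, $S=\bar S=L_3(\gf_{q^n})$, $\lambda=\bar\lambda=L_3$, and with $\bar f$ playing the role of the bottom map $g$ in the diagram. The first step will be to check that this diagram commutes, i.e., that $L_3\circ f=\bar f\circ L_3$ on $\gf_{q^n}$.

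For the commutativity, the key fact is that $L_1,L_2,L_3$ pairwise commute as endomorphisms of $\gf_{q^n}$, because they are $q$-linearized polynomials with coefficients in $\gf_q$: their composition corresponds to multiplication in the commutative polynomial ring $\gf_q[\tau]$, where $\tau$ is the Frobenius $x\mapsto x^q$ (which fixes $\gf_q$ elementwise). Using this together with the facts that $g(L_3(x))\in\gf_q$ and $L_3$ is $\gf_q$-linear, I will compute
\begin{equation*}
L_3(f(x)) \;=\; L_3(L_1(x)) + g(L_3(x))\, L_3(L_2(x)) \;=\; L_1(L_3(x)) + L_2(L_3(x))\, g(L_3(x)) \;=\; \bar f(L_3(x)).
\end{equation*}

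Next I will verify the remaining hypotheses of Proposition~\ref{AGW-Gen}. Bijectivity of $\bar f$ on $L_3(\gf_{q^n})$ is assumed. For the $2$-to-$1$ condition on fibers, given $s\in L_3(\gf_{q^n})$ and any $x_0$ with $L_3(x_0)=s$, one has $L_3^{-1}(s)=x_0+\ker(L_3)$, and on this coset $f(x)=F_s(x)$ since $g(L_3(x))=g(s)$; because $F_s$ is $\gf_q$-linear, $F_s(x_0+k_1)=F_s(x_0+k_2)$ iff $k_1-k_2\in\ker(F_s)\cap\ker(L_3)=\{0,c_s\}$, so each value of $f|_{L_3^{-1}(s)}$ is attained exactly twice, i.e.\ it is $2$-to-$1$. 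Finally, since $c_s\in\ker(L_3)\setminus\{0\}$, the kernel $\ker(L_3)$ is a nontrivial $\gf_q$-subspace, so every fiber $L_3^{-1}(s)$ has cardinality a positive power of $q$, in particular even; the ``at most one odd fiber'' hypothesis of Proposition~\ref{AGW-Gen} is therefore vacuous.

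The only subtle point in the whole argument is the commutativity of the diagram, which relies crucially on $L_1,L_2,L_3$ having their coefficients in $\gf_q$; without this, the composition ring is non-commutative and the identity $L_3\circ f=\bar f\circ L_3$ can fail. Once that identity is in hand, the conclusion that $f$ is $2$-to-$1$ over $\gf_{q^n}$ follows immediately from Proposition~\ref{AGW-Gen}.
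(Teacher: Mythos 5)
Your proposal is correct and follows essentially the same route as the paper: both apply Proposition~\ref{AGW-Gen} with $A=\gf_{q^n}$, $S=\bar S=L_3(\gf_{q^n})$, $\lambda=\bar\lambda=L_3$, verify commutativity via the $\gf_q$-linearity of the $L_i$ and the fact that $g(L_3(x))\in\gf_q$, and observe that on each fiber $f$ restricts to the linearized map $F_s$ whose kernel intersected with $\ker(L_3)$ has exactly two elements. Your write-up is in fact slightly more complete than the paper's, since you explicitly check the ``at most one odd fiber'' hypothesis (vacuous here because every fiber is a coset of the nontrivial $\gf_q$-space $\ker(L_3)$, hence of even cardinality), a point the paper passes over in silence.
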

\begin{equation*}
\xymatrix{
	\gf_{q^n} \ar[rr]^{f}\ar[d]_{L_3} &   & \gf_{q^n} \ar[d]^{L_3} \\
	L_3(\gf_{q^n}) \ar[rr]^{\bar{f}} &  & L_3(\gf_{q^n}) }
\end{equation*}
\begin{proof} We apply Proposition \ref{AGW-Gen} with $A = \gf_{q^n}$, 
	$f(x)=L_1(x)+L_2(x)g(L_3(x))$, $S =\bar{S} = L_3(\gf_{q^n})$, $\lambda = \bar{\lambda} = L_3$
	and $\bar{f} (x) := L_1(x) + L_2(x)g(x)$.
	Since $g(L_3(\gf_{q^n}))\subseteq \gf_q$, and $L_1, L_2$ and $L_3$ are $\gf_q$-linear polynomials over $\gf_q$,  one can easily verified that $\lambda\circ f = \bar{f}\circ \lambda$. 
	For any \mqu{$y\in L_3(\gf_{q^n})$}, $ f|_{\lambda^{-1}(y)}= L_1(x) + L_2(x)g(y) = F_y$ is linearized. It is $2$-to-$1$ over $L_3^{-1}(y)$ if and only if $\dim (\ker(F_ y) \cap \ker(L_3))=1$.  Hence the result follows from Proposition \ref{AGW-Gen}. 
\end{proof}

\mqu{
The above two constructions are quite general and can be used to construct more explicit $2$-to-$1$ polynomials.  Due to the space limit, we will only take the first one as an example and give several explicit  constructions. The interested readers are cordially invited to apply the second one to construct more $2$-to-$1$ polynomials.

The following proposition follows from Theorem \ref{field_gen}, and is the foundation of later constructions in this subsection.   }

\begin{proposition}\label{prop_psi}
	Let $q=2^m$, $\phi(x)$ and $\psi(x)$ be two $\gf_q$-linear polynomials over $\gf_q$ seen as endomorphisms of \mqu{ the $\gf_q$-module } $\gf_{q^n}$, and
	let $g, h\in \gf_{q^n}[x]$ such that $h(\psi(\gf_{q^n})) \subseteq \gf_q^\ast.$ If $\ker(\phi) \cap \ker(\psi) = \{0, c\}$ for some $c\in\gf_{q^n}^\ast$, and $\bar{f}(x)=h(x)\phi(x) + \psi(g(x))$ permutes $\psi(\gf_{q^n})$, then
	$$f(x)=h(\psi(x))\phi(x) + g(\psi(x))$$
	is $2$-to-$1$ over $\gf_{q^n}$.
\end{proposition}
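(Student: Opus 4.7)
The plan is to specialize Theorem~\ref{field_gen} to the case $\bar{\psi}=\psi$. To apply that theorem I must verify four things: (i) the hypothesis $\phi\circ\psi=\bar{\psi}\circ\phi$, which here means $\phi$ and $\psi$ commute as endomorphisms; (ii) that $\bar{f}$ is a bijection from $\psi(\gf_{q^n})$ onto $\bar{\psi}(\gf_{q^n})=\psi(\gf_{q^n})$, which is exactly the stated hypothesis; (iii) that $f|_{\psi^{-1}(s)}$ is $2$-to-$1$ for every $s\in\psi(\gf_{q^n})$; and (iv) that at most one fiber $\psi^{-1}(s)$ has odd cardinality.

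For (i), I would use that both $\phi$ and $\psi$ are linearized polynomials with \emph{coefficients in $\gf_q$}. Writing $\phi(x)=\sum_i a_i x^{q^i}$ and $\psi(x)=\sum_j b_j x^{q^j}$ with $a_i,b_j\in\gf_q$, Frobenius-invariance of the coefficients gives $b_j^{q^i}=b_j$ and $a_i^{q^j}=a_i$, so that both $\phi\circ\psi$ and $\psi\circ\phi$ expand to $\sum_{i,j} a_i b_j x^{q^{i+j}}$. Hence they coincide. For (iv), every non-empty fiber of $\psi$ is a coset of $\ker\psi$, and since $c\in\ker\psi\setminus\{0\}$, the group $\ker\psi$ has even order; thus every non-empty fiber has even cardinality, so condition (iv) is automatic.

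The crux is (iii). Fix $s\in\psi(\gf_{q^n})$ and restrict $f$ to $\psi^{-1}(s)$: on this fiber $\psi(x)=s$ is constant, so
\begin{equation*}
f(x)=h(s)\,\phi(x)+g(s),\qquad x\in\psi^{-1}(s).
\end{equation*}
Since $h(s)\in\gf_q^\ast$, two elements $x_1,x_2\in\psi^{-1}(s)$ satisfy $f(x_1)=f(x_2)$ if and only if $\phi(x_1-x_2)=0$; combined with $\psi(x_1-x_2)=0$ this forces $x_1-x_2\in\ker\phi\cap\ker\psi=\{0,c\}$. Conversely, since $c\in\ker\phi\cap\ker\psi$, we always have $f(x)=f(x+c)$, so every element of $\psi^{-1}(s)$ pairs up with a distinct partner. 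Hence $f|_{\psi^{-1}(s)}$ is exactly $2$-to-$1$.

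The main obstacle I anticipate is the commutation relation $\phi\circ\psi=\psi\circ\phi$, because it is easy to forget that Theorem~\ref{field_gen} requires $\phi\circ\psi=\bar\psi\circ\phi$ and this imposes a genuine restriction when $\bar\psi=\psi$. The hypothesis that the coefficients lie in $\gf_q$ (as opposed to $\gf_{q^n}$) is exactly what makes this work; without it the proposition would fail. With the commutation in hand, assembling (i)--(iv) and invoking Theorem~\ref{field_gen} concludes that $f$ is $2$-to-$1$ over $\gf_{q^n}$.
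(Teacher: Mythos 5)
Your proposal is correct and follows essentially the same route as the paper: specialize Theorem~\ref{field_gen} to $\bar{\psi}=\psi$, note that $\phi\circ\psi=\psi\circ\phi$ because both are $\gf_q$-linear polynomials with coefficients in $\gf_q$, and observe that each restriction $f|_{\psi^{-1}(s)}=h(s)\phi(x)+g(s)$ is $2$-to-$1$ because $\ker(\phi)\cap\ker(\psi)=\{0,c\}$. You are in fact more careful than the paper, which omits the (easy) verification that no fiber of $\psi$ has odd cardinality.
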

\begin{proof}
	In Theorem \ref{field_gen}, let $\bar{\psi}=\psi$, then  $\phi\circ \psi = {\psi}\circ \phi $ since both $\psi$ and $\phi$ are $\gf_q$-linear polynomials \mqu{over $\gf_q$}. Further, $f|_{\psi^{-1}(s)}$ is  $2$-to-$1$ for any $s\in \psi(\gf_{q^n})$ since  $\dim_{\gf_2}(\ker(\phi) \cap \ker(\psi)) = 1$. The result then follows from Theorem \ref{field_gen}. 
\end{proof}
 
By applying Proposition \ref{prop_psi}, we have the following theorem. 
\begin{theorem}
	Let $q=2^m$, $a\in\gf_q$, and let $b\in \gf_{q^n}$. Let $P(x)$ and $L(x)$ be $\gf_q$-linear polynomials
	over $\gf_q$. Let $H(x)\in \gf_{q^n}[x]$ be such that $H(L(\gf_{q^n})) \subseteq \gf_q\setminus \{-a\}$. Let 
	$$f(x)=aP(x) + (P(x)+b)H(L(x))$$
	and 			$$\bar{f}(x)=aP(x) + (P(x)+L(b))H(x). $$
If $\ker(P) \cap \ker(L) = \{0, c\}$ for some  $c\in\gf_{q^n}^\ast$,  and $\bar{f}$ permutes $L(\gf_{q^n})$,
	then $f$ is $2$-to-$1$ over $\gf_{q^n}$.
\end{theorem}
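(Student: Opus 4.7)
The plan is to derive this theorem as a direct specialization of Proposition \ref{prop_psi}. Concretely, I would set $\phi = P$, $\psi = L$, $h(x) = a + H(x)$, and $g(x) = bH(x)$, and then check that all the hypotheses of the proposition match and that the expressions for $f$ and $\bar{f}$ coincide with the ones stated.

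First I would verify the expression for $f$: with these substitutions,
$$h(\psi(x))\phi(x) + g(\psi(x)) = \bigl(a + H(L(x))\bigr)P(x) + bH(L(x)) = aP(x) + (P(x)+b)H(L(x)),$$
which is precisely the $f$ of the theorem. Next I would check the condition $h(\psi(\gf_{q^n})) \subseteq \gf_q^\ast$: since $H(L(\gf_{q^n})) \subseteq \gf_q\setminus\{-a\}$, adding $a$ (which lies in $\gf_q$) gives $h(L(\gf_{q^n})) \subseteq \gf_q\setminus\{0\} = \gf_q^\ast$. The kernel hypothesis $\ker(\phi)\cap\ker(\psi) = \{0,c\}$ is identical to the assumption of the theorem.

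The one point that needs a short calculation is the identification of $\bar{f}$. In Proposition \ref{prop_psi}, $\bar{f}(x) = h(x)\phi(x) + \psi(g(x)) = (a+H(x))P(x) + L(bH(x))$. This does not obviously match the $\bar{f}$ of the theorem on all of $\gf_{q^n}$, but the hypothesis we need is only that it permutes $\psi(\gf_{q^n}) = L(\gf_{q^n})$. For $x\in L(\gf_{q^n})$, one has $H(x)\in \gf_q$ (by the assumption on $H$), so by the $\gf_q$-linearity of $L$,
$$L(bH(x)) = H(x)\,L(b).$$
Substituting, for $x\in L(\gf_{q^n})$,
$$\bar{f}(x) = (a+H(x))P(x) + H(x)L(b) = aP(x) + (P(x)+L(b))H(x),$$
which agrees exactly with the $\bar{f}$ stated in the theorem. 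Therefore the hypothesis that the theorem's $\bar{f}$ permutes $L(\gf_{q^n})$ is equivalent to the corresponding hypothesis of Proposition \ref{prop_psi}.

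Having matched all four parameters and verified all three hypotheses, Proposition \ref{prop_psi} immediately gives that $f$ is $2$-to-$1$ over $\gf_{q^n}$, finishing the proof. There is no real obstacle; the only subtlety worth flagging is the collapsing of $L(bH(x))$ to $H(x)L(b)$ on $L(\gf_{q^n})$, which is what makes the simpler-looking $\bar{f}$ in the theorem's statement legitimate.
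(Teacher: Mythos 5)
Your proposal is correct and matches the paper's proof essentially verbatim: the paper also specializes Proposition \ref{prop_psi} with $h(x)=a+H(x)$, $\phi=P$, $\psi=L$, $g(x)=b\cdot H(x)$, and uses the same key identity $L(b\cdot H(x))=H(x)L(b)$ on $L(\gf_{q^n})$ to reconcile the two forms of $\bar{f}$. No gaps.
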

{\bfseries Proof. }
In Proposition \ref{prop_psi}, we let $h(x) = a + H(x)$, $\phi(x) = P(x)$, $\psi(x) = L(x)$ and $g(x) = b\cdot H(x)$.
For any $x \in L(\gf_{q^n})$, since $H(x)\in \gf_q$ and $L$ is a $\gf_q$-linear polynomial, we obtain
$$L(b) \cdot H(x) = L(b\cdot H(x)) = \psi(g(x)), $$
and thus $$\bar{f}(x) = (a + H(x))P(x) + L(b)H(x) = h(x)\phi(x) + \psi(g(x)),$$ as in
Proposition \ref{prop_psi}.
$\hfill\Box$ 

Next we study in detail some of the consequences of Proposition \ref{prop_psi} (or alternatively of Theorem \ref{field_gen} when $\psi= \bar{\psi}$ ) for two specific choices of $\gf_q$-linear polynomials. First we consider the case
$\psi(x) = \Tr_{q^n/q}(x)$ and next we study the case $\psi(x) = x^q - x$.

{\bfseries Case 1. }$\psi(x)=\bar{\psi}(x)=\Tr_{q^n/q}(x) = x + x^q + \cdots  + x^{q^{n-1}}.$

The first result in this case follows directly from Proposition \ref{prop_psi}. 
\begin{proposition}\label{pro_Tr}
	Let  $q=2^m$, $\phi(x)$ be a $\gf_q$-linear polynomial over $\gf_q$ seen as an endomorphism of  \mqu{ the $\gf_q$-module } $\gf_{q^n}$ and $\Tr_{q^n/q}(x)$ be the trace function from $\gf_{q^n}$ to $\gf_q$.  Let $g, h\in \gf_{q^n}[x]$ \mqu{be such that} $h(\gf_{q}) \subseteq \gf_q^\ast.$ Assume 
	$$f(x)=h(\Tr_{q^n/q}(x))\phi(x) + g(\Tr_{q^n/q}(x))$$
	and $$\bar{f}(x) = h(x)\phi(x) + \Tr_{q^n/q}(g(x)).$$
	If $\ker(\phi) \cap \ker(\Tr_{q^n/q}(x)) = \{0, c\}$ for some  $c\in\gf_{q^n}^\ast$, and  $\bar{f}$ permutes $\gf_q$, then $f$ is $2$-to-$1$ over $\gf_{q^n}$.
\end{proposition}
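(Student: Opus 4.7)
The plan is to derive this result as an immediate specialization of Proposition \ref{prop_psi}, taking $\psi = \bar{\psi} = \Tr_{q^n/q}$. First I would record the two standard facts about the absolute trace relative to $\gf_q$: the polynomial $\Tr_{q^n/q}(x) = x + x^q + \cdots + x^{q^{n-1}}$ is additive (in characteristic $2$), and for any $\lambda \in \gf_q$ one has $\lambda^{q^i} = \lambda$, hence $\Tr_{q^n/q}(\lambda x) = \lambda \Tr_{q^n/q}(x)$. Thus $\Tr_{q^n/q}$ is a $\gf_q$-linear polynomial over $\gf_q$, exactly as required by the role of $\psi$ in Proposition \ref{prop_psi}. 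Moreover, the trace is surjective onto $\gf_q$, so $\psi(\gf_{q^n}) = \gf_q$.

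Next I would verify that every hypothesis of the present statement translates into a hypothesis of Proposition \ref{prop_psi} under this substitution. The assumption $h(\gf_q) \subseteq \gf_q^\ast$ becomes $h(\psi(\gf_{q^n})) \subseteq \gf_q^\ast$; the kernel condition $\ker(\phi) \cap \ker(\Tr_{q^n/q}) = \{0,c\}$ becomes $\ker(\phi) \cap \ker(\psi) = \{0,c\}$; and the hypothesis that $\bar{f}$ permutes $\gf_q$ is the same as $\bar{f}$ permuting $\psi(\gf_{q^n})$.

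I would then check that the polynomials coincide. The polynomial $f$ defined here is literally $h(\psi(x))\phi(x) + g(\psi(x))$ as in Proposition \ref{prop_psi}, and the associated polynomial
$$\bar{f}(x) = h(x)\phi(x) + \Tr_{q^n/q}(g(x)) = h(x)\phi(x) + \psi(g(x))$$
is precisely the auxiliary polynomial of Proposition \ref{prop_psi}. With all hypotheses verified and both $f$ and $\bar{f}$ matching the templates, Proposition \ref{prop_psi} applies and yields that $f$ is $2$-to-$1$ over $\gf_{q^n}$.

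There is no genuine obstacle in this proof; its content is just the observation that $\Tr_{q^n/q}$ is a $\gf_q$-linear polynomial whose image is $\gf_q$, so the abstract conditions of Proposition \ref{prop_psi} become concrete conditions on $h$, $g$, $\phi$ and $\bar f$ restricted to $\gf_q$.
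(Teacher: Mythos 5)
Your proof is correct and follows exactly the route the paper intends: the paper gives no explicit argument, stating only that the result ``follows directly from Proposition \ref{prop_psi}'', and your verification that $\Tr_{q^n/q}$ is a surjective $\gf_q$-linear polynomial over $\gf_q$ (so that $\psi(\gf_{q^n})=\gf_q$ and all hypotheses match) is precisely the specialization being invoked. No issues.
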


By applying Proposition \ref{pro_Tr}, we get the following construction. 
\begin{theorem}
	Let $q=2^m$, $\phi$ be a $\gf_{q}$-linear polynomial over $\gf_q$, let $g(x) \in \gf_{q^n}[x]$, and let $h(x)\in \gf_{q^n}[x]$ such that
	$h(\gf_{q}) \subseteq \gf_q^\ast$. Assume $f (x) = h(\Tr_{q^n/q}(x))\phi(x) + g(\Tr_{q^n/q}(x))^q - g(\Tr_{q^n/q}(x))$. If $\ker(\phi) \cap \ker(\Tr_{q^n/q}(x)) = \{0, c\}$ for some  $c\in\gf_{q^n}^\ast$, and  $h(x)\phi(x)$  permutes $\gf_q$, then 
	$f$ is $2$-to-$1$ over $\gf_{q^n}$.
\end{theorem}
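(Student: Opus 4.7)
The plan is to apply Proposition \ref{pro_Tr} directly, with one carefully chosen reinterpretation of its parameters: I take the polynomial called ``$g$'' in that proposition to be $\tilde{g}(y) := g(y)^q - g(y) \in \gf_{q^n}[y]$, while keeping $\phi$ and $h$ exactly as in the statement above. With these choices the function produced by Proposition \ref{pro_Tr} is $h(\Tr_{q^n/q}(x))\phi(x) + \tilde{g}(\Tr_{q^n/q}(x))$, which is exactly the $f$ in the statement.

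It then remains to verify the two hypotheses of Proposition \ref{pro_Tr}. The kernel condition $\ker(\phi)\cap \ker(\Tr_{q^n/q}) = \{0,c\}$ is assumed directly, and the constraint $h(\gf_q)\subseteq \gf_q^{\ast}$ is part of the hypotheses of the theorem. The only substantive step is the permutation hypothesis on the companion map $\bar{f}(x) = h(x)\phi(x) + \Tr_{q^n/q}(\tilde{g}(x)) = h(x)\phi(x) + \Tr_{q^n/q}(g(x)^q - g(x))$. Here I use the standard Frobenius-invariance of the absolute trace: for every $y\in\gf_{q^n}$, the sum $\Tr_{q^n/q}(y^q) = \sum_{i=0}^{n-1} y^{q^{i+1}}$ is a cyclic rearrangement of $\Tr_{q^n/q}(y) = \sum_{i=0}^{n-1} y^{q^{i}}$ (using $y^{q^n}=y$), hence the two are equal. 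Consequently $\Tr_{q^n/q}(g(x)^q - g(x))$ vanishes identically, so $\bar{f}(x) = h(x)\phi(x)$. This permutes $\gf_q = \Tr_{q^n/q}(\gf_{q^n})$ by hypothesis, which is precisely the condition required by Proposition \ref{pro_Tr}.

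Invoking Proposition \ref{pro_Tr} then yields that $f$ is $2$-to-$1$ over $\gf_{q^n}$, completing the argument. There is no serious obstacle here; the only moderately non-routine insight is recognising that the particular combination $g(y)^q - g(y)$ is tailored so that its trace vanishes identically, and it is this collapse that lets the cumbersome hypothesis on $\bar{f}$ reduce to the clean and easily checkable statement that $h(x)\phi(x)$ permutes $\gf_q$.
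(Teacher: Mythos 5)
Your proposal is correct and is essentially the paper's own (implicit) argument: the theorem is stated as a direct consequence of Proposition \ref{pro_Tr} with $g$ replaced by $g(y)^q-g(y)$, so that $\Tr_{q^n/q}(g(x)^q-g(x))$ vanishes identically and $\bar{f}$ collapses to $h(x)\phi(x)$. The paper carries out exactly this computation for the parallel result in Case 2, and your observation that the trace identity holds as an identity of functions on $\gf_{q^n}$ (via $y^{q^n}=y$) is the right level of care.
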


{\bfseries Case 2. } $\psi(x)=\bar{\psi}(x)=x^q - x.$


Similarly, we have the following two results. 

\begin{proposition}\label{prop_xq}
	Let $q=2^m$, $\phi(x)$ be a $\gf_q$-linear polynomials over $\gf_q$ seen as an endomorphism of \mqu{ the $\gf_q$-module } $\gf_{q^n}$.  Let $g, h\in \gf_{q^n}[x]$ \mqu{be such that} $h(x^q-x) \subseteq \gf_q^\ast$  for 
	all $x\in \gf_{q^n}$.  Assume 
	$$f(x)=h(x^q-x)\phi(x) + g(x^q-x)$$
	and $$\bar{f}(x) = h(x)\phi(x) + g(x)^q-g(x).$$
If $\phi(x)$ is $2$-to-$1$ over $\gf_q$ and $\bar{f}$ permutes over $S = \{a^q - a| a\in  \gf_{q^n}\}$,	then $f$ is $2$-to-$1$ over $\gf_{q^n}$.
\end{proposition}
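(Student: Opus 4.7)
The plan is to derive Proposition \ref{prop_xq} as a direct instance of the preceding Proposition \ref{prop_psi}, specialized to $\psi(x) = \bar{\psi}(x) = x^q - x$. With this choice, the hypothesis $h(\psi(\gf_{q^n})) \subseteq \gf_q^\ast$ is exactly the stated condition $h(x^q - x) \in \gf_q^\ast$ for all $x \in \gf_{q^n}$; the map $f$ matches $h(\psi(x))\phi(x) + g(\psi(x))$; and the auxiliary function becomes $\bar{f}(x) = h(x)\phi(x) + \psi(g(x)) = h(x)\phi(x) + g(x)^q - g(x)$, in agreement with the statement.

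What must be verified are the two residual hypotheses of Proposition \ref{prop_psi}: the commutation $\phi\circ\psi = \bar{\psi}\circ\phi$ and the kernel condition $\ker(\phi)\cap\ker(\psi) = \{0,c\}$. For the former, writing $\phi(x)=\sum_i a_i x^{q^i}$ with $a_i\in\gf_q$, one has $a_i^q = a_i$, so the Frobenius map $x\mapsto x^q$ commutes with $\phi$; consequently $\phi(x^q - x) = \phi(x)^q - \phi(x)$, which gives $\phi\circ\psi = \psi\circ\phi$. For the latter, $\ker(\psi) = \{x\in\gf_{q^n}:x^q=x\} = \gf_q$, so $\ker(\phi)\cap\ker(\psi)$ equals the kernel of $\phi|_{\gf_q}$; since $\phi$ is additive its fibers are all translates of its kernel, so $\phi|_{\gf_q}$ is $2$-to-$1$ iff this kernel has exactly two elements, i.e.\ equals $\{0,c\}$ for some $c\in\gf_q^\ast$. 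This is the intended reading of the hypothesis that $\phi$ is $2$-to-$1$ over $\gf_q$. Finally, since $\psi(\gf_{q^n}) = \{a^q - a : a\in\gf_{q^n}\} = S$, the assumed permutation property of $\bar{f}$ on $S$ is precisely the last input Proposition \ref{prop_psi} requires, and applying it then yields the $2$-to-$1$ property of $f$ on $\gf_{q^n}$.

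I anticipate no genuine obstacle: the argument is a clean dictionary translation between the general setup of Proposition \ref{prop_psi} and the concrete choice $\psi(x) = x^q - x$. The one point meriting care is identifying ``$\phi$ is $2$-to-$1$ over $\gf_q$'' with the kernel-intersection hypothesis of Proposition \ref{prop_psi}, which, as noted, is immediate from the additivity of $\phi$ together with $\ker(\psi)=\gf_q$.
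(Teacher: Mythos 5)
Your proposal is correct and follows essentially the same route as the paper: the paper's own proof specializes Theorem \ref{field_gen} directly with $\psi=\bar{\psi}=x^q-x$ and checks that $f|_{\psi^{-1}(s)}=h(s)\phi(x)+g(s)$ is $2$-to-$1$ on each fiber, which is exactly the content of your kernel computation $\ker(\phi)\cap\ker(\psi)=\ker(\phi|_{\gf_q})=\{0,c\}$ via Proposition \ref{prop_psi} (itself the $\psi=\bar{\psi}$ form of that theorem). Your verifications of the commutation $\phi\circ\psi=\psi\circ\phi$ and of $\psi(\gf_{q^n})=S$ are the same identifications the paper relies on implicitly.
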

\begin{proof}
	In Theorem \ref{field_gen}, let $\psi =\bar{\psi} =x^q-x$. For any $s\in\psi(\gf_{q^n})$, $f|_{\psi^{-1}(s)}=h(s)\phi(x)+g(s)$ is $2$-to-$1$  if and only if $\phi$ is $2$-to-$1$ over $\gf_q$ since $h(s)\neq 0$. Hence the result follows. 	
\end{proof}

\begin{theorem}
	Let  $q=2^m$, $\phi(x)$ be a $\gf_q$-linear polynomials over $\gf_q$ seen as an endomorphism of \mqu{ the $\gf_q$-module } $\gf_{q^n}$.  Let $u, h\in \gf_{q^n}[x]$ \mqu{be such that} $h(x^q-x) \subseteq \gf_q^\ast$  for 	all $x\in \gf_{q^n}$.  Assume 
	$$f_1(x)=h(x^q-x)\phi(x) + \Tr_{q^n/q}(u(x^q-x))$$ and 
	$$ f_2(x)=h(x^q-x)\phi(x) + u(x^q-x)^{(q^n-1)/(q-1)}. $$
	If $\phi(x)$ is $2$-to-$1$ over $\gf_q$ and $h(x)\phi(x)$ permutes over $S = \{a^q - a| a\in  \gf_{q^n}\}$, then both $f_1$ and $f_2$ are $2$-to-$1$ over $\gf_{q^n}$.
\end{theorem}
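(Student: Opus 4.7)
The plan is to derive both statements as direct consequences of Proposition \ref{prop_xq}, by exhibiting in each case a polynomial $g\in\gf_{q^n}[x]$ whose image lies in $\gf_q$, so that the auxiliary map $\bar{f}(x)=h(x)\phi(x)+g(x)^q-g(x)$ collapses to $h(x)\phi(x)$, which permutes $S=\{a^q-a\mid a\in\gf_{q^n}\}$ by hypothesis.

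For $f_1$, I would take $g(y)=\Tr_{q^n/q}(u\,y)$, so that $f_1(x)=h(x^q-x)\phi(x)+g(x^q-x)$ matches the template in Proposition \ref{prop_xq}. Since $g(y)\in\gf_q$ for every $y\in\gf_{q^n}$, we have $g(x)^q=g(x)$ and hence $g(x)^q-g(x)=0$; thus $\bar{f}(x)=h(x)\phi(x)$. For $f_2$, I would take $g(y)=(uy)^{(q^n-1)/(q-1)}$, which is the norm $N_{\gf_{q^n}/\gf_q}(uy)$ viewed as a polynomial; again $g(y)\in\gf_q$ for all $y\in\gf_{q^n}$ (including $g(0)=0$), so $g(x)^q-g(x)=0$ and $\bar{f}(x)=h(x)\phi(x)$.

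With $\bar f$ identified as $h(x)\phi(x)$ in both cases, the two remaining hypotheses of Proposition \ref{prop_xq} are exactly the assumptions of the theorem: $\phi$ is $2$-to-$1$ over $\gf_q$, and $h(x)\phi(x)$ permutes $S$. The condition $h(x^q-x)\subseteq\gf_q^\ast$ for all $x\in\gf_{q^n}$ is also assumed. Invoking Proposition \ref{prop_xq} then yields that $f_1$ and $f_2$ are $2$-to-$1$ over $\gf_{q^n}$.

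There is essentially no hard step, since the theorem is really a specialization of Proposition \ref{prop_xq}; the only thing worth checking carefully is that the chosen $g$ indeed takes values in $\gf_q$, which is standard for the absolute trace and for the $(q^n-1)/(q-1)$-th power (the norm exponent from $\gf_{q^n}$ to $\gf_q$). If I wanted to emphasize the parallel, I might also note that one could replace $\Tr_{q^n/q}(u(x^q-x))$ and $u(x^q-x)^{(q^n-1)/(q-1)}$ by any polynomial $g(x^q-x)$ with $g(\gf_{q^n})\subseteq\gf_q$, obtaining a whole family of $2$-to-$1$ maps by the same argument.
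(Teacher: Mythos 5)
Your proposal is correct and follows essentially the same route as the paper: specialize Proposition \ref{prop_xq} with a $g$ whose values lie in $\gf_q$, so that $g(x)^q-g(x)\equiv 0$ and $\bar f$ reduces to $h(x)\phi(x)$. The only cosmetic discrepancy is that the paper's $u$ is a polynomial in $\gf_{q^n}[x]$ (so $u(x^q-x)$ denotes composition), whereas you read it as a constant multiplier; this does not affect the argument, since in either reading the trace and the $(q^n-1)/(q-1)$-th power land in $\gf_q$.
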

\begin{proof}
	We only prove for $f_1$ as the other case can be proved similarly. 
	In Proposition \ref{prop_xq},  let $g(x)=\Tr_{q^n/q}(u(x))$. Then 
	$$\bar{f}_1(x) = h(x)\phi(x)+\psi(g(x)) = h(x)\phi(x)$$ 
	since $\psi(g(x))\equiv 0$.  Hence the result follows. 	
\end{proof}

\subsection{$2$-to-$1$ polynomial mappings with the form of $x^rh(x^{(q-1)/d})$}

In this subsection, we construct two-to-one polynomial mappings with the form of $x^rh(x^{(q-1)/d})$. We need $q-1$ to be even. Hence it is assumed that $q$ is odd throughout this subsection.

\begin{proposition}\label{prop_xrh}
	Let $q$ be an odd prime power, $r, d$ be positive integers such that $d|q-1$. Let $f(x)=x^rh(x^{(q-1)/d})$, where $h\in \gf_q[x]$ such that $h(x)\neq 0$ if $x\neq 0$, and let $\lambda(x)=x^{(q-1)/d}$ and \mqu{$\mu_d=\{x\in {\gf_q}: x^d=1\}$}. 	\mqu{Let $g(x)=x^rh(x)^{(q-1)/d}$. } If $g$ is $1$-to-$1$ from $\mu_d$ to $\mu_d$ and $\gcd(r, \frac{q-1}{d})=2$,
	then $f$ is a $2$-to-$1$ mapping over $\gf_q$.
	\mqu{
	\begin{equation*}
	\xymatrix{
	\gf_q^\ast \ar[rrr]^{\ \ \ f(x)=x^rh(x^{(q-1)/d})\ \ \ }\ar[d]_{x^{(q-1)/d}} &   & &  \gf_q^\ast \ar[d]^{x^{(q-1)/d}} \\
		\mu_{d} \ar[rrr]^{g(x)=x^rh(x)^{(q-1)/d}} &  & & 	\mu_{d} }
	\end{equation*} }
\end{proposition}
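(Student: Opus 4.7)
The statement fits exactly into the generalized AGW framework of Proposition \ref{AGW-Gen}, with $A=\gf_q^\ast$, $S=\bar S=\mu_d$, $\lambda=\bar\lambda$ given by $x\mapsto x^{(q-1)/d}$, and the bottom arrow $g(x)=x^rh(x)^{(q-1)/d}$. So the plan is to (i) confirm that the diagram commutes, (ii) check the three hypotheses of Proposition \ref{AGW-Gen}, and (iii) handle the point $x=0$ separately to conclude the $2$-to-$1$ property on all of $\gf_q$.

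For (i), commutativity is a direct computation: for $x\in \gf_q^\ast$,
\[
\lambda(f(x)) \;=\; \bigl(x^r h(x^{(q-1)/d})\bigr)^{(q-1)/d} \;=\; x^{r(q-1)/d}\,h(x^{(q-1)/d})^{(q-1)/d} \;=\; g(\lambda(x)),
\]
using that $\lambda$ is a (multiplicative) homomorphism. Bijectivity of $g$ on $\mu_d$ is an explicit hypothesis.

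For (ii), the heart of the argument: fix $s\in\mu_d$ and examine $f$ on the fiber $\lambda^{-1}(s)=\{x\in\gf_q^\ast:x^{(q-1)/d}=s\}$. On this fiber $h(x^{(q-1)/d})=h(s)$ is a nonzero constant, so $f$ restricts to $x\mapsto h(s)\cdot x^r$, which is $2$-to-$1$ if and only if the map $x\mapsto x^r$ is $2$-to-$1$ on $\lambda^{-1}(s)$. For $x,y\in\lambda^{-1}(s)$, $x^r=y^r$ iff $(x/y)^r=1$; combined with $(x/y)^{(q-1)/d}=1$, this says $x/y$ lies in the group $\mu_{\gcd(r,(q-1)/d)}=\mu_2$, which has exactly two elements. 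Hence $f|_{\lambda^{-1}(s)}$ is $2$-to-$1$ for every $s\in\mu_d$. Moreover each fiber has cardinality $(q-1)/d$, which is even because $\gcd(r,(q-1)/d)=2$ forces $(q-1)/d$ to be even; so no fiber has odd size, and Proposition \ref{AGW-Gen} applies to show $f$ is $2$-to-$1$ on $\gf_q^\ast$.

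For (iii), note that $q$ is odd so $\#\gf_q$ is odd, and the definition requires exactly one exceptional image with a unique preimage. Since $h$ is nonzero on $\gf_q^\ast$, we have $f(x)=0$ iff $x=0$, and $f(0)=0$; hence $f^{-1}(0)=\{0\}$. Combined with the fact that every nonzero image has exactly two preimages in $\gf_q^\ast$, this yields that $f$ is $2$-to-$1$ on $\gf_q$. The only real subtlety (and the main place where a reader might stumble) is the gcd calculation identifying the kernel of $x\mapsto x^r$ on the fiber as $\mu_2$; everything else is bookkeeping and direct verification of the AGW diagram.
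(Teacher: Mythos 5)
Your proof is correct and follows essentially the same route as the paper: verify the commutative diagram, show each fiber restriction $x\mapsto h(s)x^r$ is $2$-to-$1$ via $\gcd(r,(q-1)/d)=2$, and invoke Proposition \ref{AGW-Gen}. You are in fact more thorough than the paper's two-line proof, which leaves implicit the even-fiber-size check and the separate treatment of $x=0$ (needed since the diagram lives over $\gf_q^\ast$ but the conclusion is over $\gf_q$).
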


\begin{proof} 
 Since $\gcd(r, \frac{q-1}{d})=2$, we know that $f|_{\lambda^{-1}(s)}=x^rh(s)$ is $2$-to-$1$ for any $s\in \mu_d$. Then the  result follows directly from the fact that $\lambda \circ f=g\circ \lambda$ and Proposition \ref{AGW-Gen}.
\end{proof}

Then we have the following result.
\begin{corollary}\label{rqd_n}
	 Suppose that there exists $n \ge 0$ such that
	$h(x)^{(q-1)/d}  = x ^ n$ for all $x\in \mu_d$. If $\gcd(r + n, d)=1$ and $ \gcd(r, \frac{q-1}{d}) = 2$, then
	 $x^{r} h(x^{(q-1)/d}  )$ is $2$-to-$1$ over $\gf_q$.
\end{corollary}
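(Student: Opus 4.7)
The plan is to derive this corollary as a direct application of Proposition \ref{prop_xrh}. Since the hypothesis $\gcd(r,\frac{q-1}{d})=2$ is already one of the two assumptions of that proposition, the only thing I need to verify is that, under the stated conditions, the auxiliary map $g(x)=x^{r}h(x)^{(q-1)/d}$ is a bijection from $\mu_d$ to itself.

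First I would restrict $g$ to $\mu_d$. By the assumption that $h(x)^{(q-1)/d}=x^n$ for every $x\in\mu_d$, on this set $g$ collapses to a monomial:
\[
g(x)=x^{r}\cdot x^{n}=x^{r+n}, \qquad x\in\mu_d.
\]
Next I would observe that $\mu_d$ is the unique cyclic subgroup of $\gf_q^{\ast}$ of order $d$, so for any $x\in\mu_d$ one has $g(x)^d=x^{d(r+n)}=1$, i.e. $g(\mu_d)\subseteq \mu_d$. Moreover, the monomial map $x\mapsto x^{r+n}$ permutes a cyclic group of order $d$ if and only if $\gcd(r+n,d)=1$, which is precisely the first hypothesis. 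Thus $g|_{\mu_d}$ is a bijection onto $\mu_d$.

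With both hypotheses of Proposition \ref{prop_xrh} verified, I would then invoke that proposition to conclude that $f(x)=x^{r}h(x^{(q-1)/d})$ is a $2$-to-$1$ mapping over $\gf_q$. There is no real obstacle here; the main (minor) point to be careful about is checking that the reduction $h(x)^{(q-1)/d}=x^n$ is used legitimately, i.e. only for arguments lying in $\mu_d$, which is exactly where Proposition \ref{prop_xrh} evaluates $h(\cdot)^{(q-1)/d}$ when forming $g$.
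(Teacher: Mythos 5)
Your proof is correct and is exactly the argument the paper intends (the paper states the corollary without proof as an immediate consequence of Proposition \ref{prop_xrh}): on $\mu_d$ the map $g$ reduces to the monomial $x^{r+n}$, which permutes the cyclic group $\mu_d$ precisely when $\gcd(r+n,d)=1$, and the remaining hypothesis $\gcd(r,\tfrac{q-1}{d})=2$ is quoted verbatim. Nothing is missing.
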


\begin{theorem}\label{x_be_ga}
	Suppose that $q = q_0^m$, 	where $q_0\equiv 1(\mod d)$ and $d | m$,  $h \in \gf_{q_{0}} [X]$ has no roots in $ \mu_d$, $\gcd(r, d)=1$  and $\gcd(r, (q-1)/d) = 2$. Then $x^r h(x^{(q-1)/d}  )$ is $2$-to-$1$ over $\gf_q$.  
\end{theorem}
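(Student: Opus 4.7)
The plan is to deduce Theorem \ref{x_be_ga} as a direct consequence of Corollary \ref{rqd_n}, by showing that the hypothesis ``$h(x)^{(q-1)/d} = x^n$ for all $x \in \mu_d$'' of that corollary holds with $n = 0$. Once this is established, both remaining conditions $\gcd(r+n,d) = \gcd(r,d) = 1$ and $\gcd(r,(q-1)/d) = 2$ are exactly the stated assumptions, and the conclusion is immediate.

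First, since $q_0 \equiv 1 \pmod d$, we have $d \mid q_0 - 1$, hence $\mu_d \subseteq \gf_{q_0}^{\ast}$. Because $h \in \gf_{q_0}[X]$ has no roots in $\mu_d$, the value $h(x)$ lies in $\gf_{q_0}^{\ast}$ for every $x \in \mu_d$, and therefore $h(x)^{q_0-1} = 1$. The next step, which is the only non-bookkeeping part, is to show that $(q_0-1) \mid (q-1)/d$. Writing
\[
q - 1 \;=\; q_0^m - 1 \;=\; (q_0 - 1)\,T, \qquad T \;=\; 1 + q_0 + q_0^2 + \cdots + q_0^{m-1},
\]
and reducing $T$ modulo $d$ using $q_0 \equiv 1 \pmod d$ gives $T \equiv m \pmod d$. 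Since $d \mid m$ by hypothesis, we obtain $d \mid T$, and consequently $(q-1)/d = (q_0-1)(T/d)$ is an integer multiple of $q_0 - 1$.

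Combining the two previous observations, for every $x \in \mu_d$,
\[
h(x)^{(q-1)/d} \;=\; \bigl(h(x)^{q_0-1}\bigr)^{T/d} \;=\; 1 \;=\; x^0,
\]
so Corollary \ref{rqd_n} applies with $n = 0$. The hypotheses $\gcd(r+n,d) = \gcd(r,d) = 1$ and $\gcd(r,(q-1)/d) = 2$ are given, hence $x^r h(x^{(q-1)/d})$ is $2$-to-$1$ over $\gf_q$.

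The only delicate point is verifying the divisibility $(q_0-1) \mid (q-1)/d$; the congruence $T \equiv m \pmod d$ is what makes the assumption $d \mid m$ (rather than, say, just $\gcd(d,m) = 1$) necessary. Everything else is a direct application of the machinery already developed in Proposition \ref{prop_xrh} and Corollary \ref{rqd_n}.
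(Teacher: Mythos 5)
Your proof is correct and follows the paper's strategy exactly: both arguments reduce the theorem to the case $n=0$ of Corollary \ref{rqd_n} by showing that $h(x)^{(q-1)/d}=1$ for every $x\in\mu_d\subseteq\gf_{q_0}^\ast$. The only difference is in the exponent bookkeeping: the paper factors $(q-1)/d=\frac{q-1}{q_0-1}\cdot\frac{q_0-1}{d}$ and evaluates $h(x)^{1+q_0+\cdots+q_0^{m-1}}=h(x)\cdot h(x^{q_0})\cdots h(x^{q_0^{m-1}})=h(x)^m$ using the Frobenius and $h\in\gf_{q_0}[X]$, whereas you note that $1+q_0+\cdots+q_0^{m-1}\equiv m\equiv 0\pmod d$, hence $(q_0-1)\mid (q-1)/d$ and $h(x)^{q_0-1}=1$ finishes at once --- a slightly more elementary route to the same identity.
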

\begin{proof} 
For any $x\in \mu_d\subseteq \gf_{q_{0}}^\ast$, we have
\begin{eqnarray*}
	h(x)^{(q-1)/d} &=& h(x)^{\frac{q-1}{q_0-1}\cdot \frac{q_0-1}{d} }\\
	&=& \left(h(x)^{1+q_0+q_0^2+\cdots+q_0^{m-1}}\right)^{ \frac{q_0-1}{d}  }\\
	&=& \left(h(x)\cdot h(x^{q_0})\cdot \cdots \cdot h(x^{q_0^{m-1}}) \right)^{ \frac{q_0-1}{d}  }\\
	&=& \left(h(x)^m\right)^{ \frac{q_0-1}{d}  }=\left(h(x)^{q_0-1}\right)^{ \frac{m}{d}  }=1.	
\end{eqnarray*}
Hence  it is the case $n=0$ in Corollary \ref{rqd_n}. Then the result
follows. 
\end{proof}

}
	
\subsection{Constructions of $2$-to-$1$ mapping from permutation polynomials}

\begin{proposition}
	Let $G:\gf_{2^n}\rightarrow \gf_{2^n}$ be a permutation polynomial, and let $\gf_{2^n}=S_1\cup S_2$ be a disjoint decomposition of $\gf_{2^n}$, where $\#S_1=\#S_2=2^{n-1}$. Define $\phi$ be a bijective mapping from $S_2$ to $S_1$. Let 
	\begin{equation*}
	F(x) = \left\{\begin{array}{ll}
	G(x), & \text{if}\  x\in S_1;\\
	G(\phi(x)), & \text{if}\  x\in S_2. 
	\end{array}\right. 
	\end{equation*}
	Then $F$ is a $2$-to-$1$ mapping over $\gf_{2^n}$. 
\end{proposition}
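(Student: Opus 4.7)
The plan is to verify directly from the definitions that every element of $\gf_{2^n}$ has either $0$ or $2$ preimages under $F$. The structure of $F$ decomposes the preimage problem cleanly along the two pieces $S_1$ and $S_2$, so the argument reduces to book-keeping with the bijections $G$ and $\phi$.

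First I would pin down the image of $F$. Since $F|_{S_1} = G|_{S_1}$, we have $F(S_1) = G(S_1)$, a set of size $2^{n-1}$ because $G$ is a permutation. Since $\phi$ is a bijection $S_2 \to S_1$, the image $F(S_2) = G(\phi(S_2)) = G(S_1)$ as well. Therefore the image of $F$ is exactly $G(S_1)$ and has cardinality $2^{n-1}$. In particular, for any $y \in \gf_{2^n}\setminus G(S_1)$, we immediately get $F^{-1}(y)=\emptyset$, covering the ``0 preimages'' case.

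Next I would handle $y \in G(S_1)$. Because $G$ is a permutation, there is a unique $x_0 \in \gf_{2^n}$ with $G(x_0)=y$, and the hypothesis $y \in G(S_1)$ forces $x_0 \in S_1$. Any preimage of $y$ under $F$ lies in $S_1$ or $S_2$. In $S_1$, the equation $F(x)=G(x)=y$ has the unique solution $x=x_0$. In $S_2$, the equation $F(x)=G(\phi(x))=y$ is equivalent to $\phi(x)=x_0$ by injectivity of $G$; since $\phi: S_2 \to S_1$ is a bijection and $x_0 \in S_1$, there is a unique $x_1 \in S_2$ with $\phi(x_1)=x_0$. Thus $F^{-1}(y)=\{x_0,x_1\}$, a set of size exactly $2$.

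Combining the two cases shows that every $y \in \gf_{2^n}$ has either $0$ or $2$ preimages under $F$, so $F$ is $2$-to-$1$ on $\gf_{2^n}$. There is no real obstacle here; the only care needed is to keep the roles of $G$ and $\phi$ separate and to observe that the unique $G$-preimage of an element in $G(S_1)$ actually lies in $S_1$, which is what matches it with a unique partner in $S_2$ through $\phi^{-1}$.
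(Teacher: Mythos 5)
Your proof is correct. The paper actually states this proposition without any proof (treating it as immediate), so there is nothing to compare against; your direct preimage count — empty fibre over $\gf_{2^n}\setminus G(S_1)$, and exactly the two-element fibre $\{x_0,\phi^{-1}(x_0)\}$ over each $y=G(x_0)$ with $x_0\in S_1$ — is precisely the routine verification the authors left to the reader, including the one point worth making explicit, namely that the unique $G$-preimage of $y$ lies in $S_1$ and its partner lies in $S_2$, so the two preimages are genuinely distinct.
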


Conversely, any $2$-to-$1$ mapping can be constructed by this method. 

The following corollary follows directly from the above proposition. 
\begin{corollary}
	Let $G:\gf_{2^n}\rightarrow \gf_{2^n}$ be a permutation polynomial, and let $S$ be a $\gf_2$-linear subspace of $\gf_{2^n}$ with dimension $n-1$, $\gamma\in \gf_{2^n}\setminus S$. 
	Define 
	\begin{equation*}
	F_1(x) = \left\{\begin{array}{ll}
	G(x), & \text{if}\  x\in S;\\
	G(x+\gamma), & \text{otherwise,} 
	\end{array}\right. 
	\end{equation*}
	and 
	\begin{equation*}
	F_2(x) = \left\{\begin{array}{ll}
	G(x+\gamma), & \text{if}\  x\in S;\\
	G(x), & \text{otherwise.} 
	\end{array}\right. 
	\end{equation*}
	Then both $F_1$ and $F_2$ are $2$-to-$1$ mappings over $\gf_{2^n}$.
\end{corollary}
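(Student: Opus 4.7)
My plan is to deduce the corollary directly from the preceding proposition by exhibiting, for each of $F_1$ and $F_2$, a suitable disjoint decomposition $\gf_{2^n}=S_1\cup S_2$ and a bijection $\phi:S_2\to S_1$ fitting the proposition's template. The key geometric fact is that since $S$ is a $\gf_2$-linear subspace of codimension $1$, its cosets partition $\gf_{2^n}$ into exactly two blocks of equal size $2^{n-1}$, namely $S$ and $S+\gamma=\gf_{2^n}\setminus S$ (the latter because $\gamma\notin S$). Moreover, translation by $\gamma$ is an involution and swaps these two cosets bijectively.

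For $F_1$, I would set $S_1:=S$ and $S_2:=\gf_{2^n}\setminus S=S+\gamma$; then $\#S_1=\#S_2=2^{n-1}$. Define $\phi:S_2\to S_1$ by $\phi(x)=x+\gamma$, which is a bijection by the observation above. On $S_1$ we have $F_1(x)=G(x)$, and on $S_2$ we have $F_1(x)=G(x+\gamma)=G(\phi(x))$. This matches exactly the form in the previous proposition, and so $F_1$ is $2$-to-$1$ over $\gf_{2^n}$.

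For $F_2$, I would reverse the roles: set $S_1:=\gf_{2^n}\setminus S$ and $S_2:=S$, again with $\#S_1=\#S_2=2^{n-1}$, and define $\phi:S_2\to S_1$ by $\phi(x)=x+\gamma$, which remains a bijection from $S$ to $S+\gamma$. On $S_1$ we have $F_2(x)=G(x)$, and on $S_2=S$ we have $F_2(x)=G(x+\gamma)=G(\phi(x))$. The previous proposition then yields that $F_2$ is $2$-to-$1$ over $\gf_{2^n}$.

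There is no serious obstacle here; the only thing to verify carefully is that the two cosets of $S$ are both of size $2^{n-1}$ and that $x\mapsto x+\gamma$ restricts to a bijection between them, both of which are immediate from $S$ being a hyperplane and $\gamma\notin S$. The corollary then follows as a direct instance of the preceding proposition, so the proof amounts to a short paragraph making these identifications explicit.
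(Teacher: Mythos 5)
Your proposal is correct and takes exactly the route the paper intends: the paper simply states that the corollary ``follows directly from the above proposition,'' and your explicit identifications ($S_1=S$, $S_2=S+\gamma$, $\phi(x)=x+\gamma$ for $F_1$, and the roles swapped for $F_2$) are precisely the instantiation being invoked. Nothing is missing.
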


In the above corollary, let $S=\{x\in \gf_q : \Tr_{2^n/2} (x)=0\}$, we have 
\begin{proposition}
	Let $G:\gf_{2^n}\rightarrow \gf_{2^n}$ be a permutation polynomial, and let  $\gamma\in \gf_{2^n}$ such that $\Tr_{2^n/2}(\gamma)=1$. 
	Define 
	\begin{equation*}
	F_1(x) = \left\{\begin{array}{ll}
	G(x), & \text{if}\  \Tr_{2^n/2} (x)=0;\\
	G(x+\gamma), & \text{otherwise,} 
	\end{array}\right. 
	\end{equation*}
	and 
	\begin{equation*}
	F_2(x) = \left\{\begin{array}{ll}
	G(x+\gamma), & \text{if}\ \Tr_{2^n/2} (x)=0;\\
	G(x), & \text{otherwise.} 
	\end{array}\right. 
	\end{equation*}
	Then both $F_1$ and $F_2$ are $2$-to-$1$ mappings over $\gf_{2^n}$.
\end{proposition}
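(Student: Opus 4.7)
The plan is to recognize this proposition as a direct specialization of the preceding corollary, where the hyperplane $S$ is chosen as the kernel of the absolute trace. So I would first check that this choice indeed fits the hypotheses of that corollary, and then simply invoke it.

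Concretely, I would start by noting that the absolute trace $\Tr_{2^n/2}:\gf_{2^n}\to\gf_2$ is a non-zero $\gf_2$-linear form, hence its kernel
\[
S=\{x\in\gf_{2^n}:\Tr_{2^n/2}(x)=0\}
\]
is an $\gf_2$-linear subspace of $\gf_{2^n}$ of dimension $n-1$; in particular $\#S=2^{n-1}$. Next, the hypothesis $\Tr_{2^n/2}(\gamma)=1$ gives $\gamma\in\gf_{2^n}\setminus S$, so that the pair $(S,\gamma)$ satisfies exactly the assumptions of the previous corollary. The definitions of $F_1$ and $F_2$ are then identical to those in that corollary with this particular $(S,\gamma)$, because for any $x\notin S$ we have $\Tr_{2^n/2}(x)=1$ while $\Tr_{2^n/2}(x+\gamma)=1+1=0$, so that the translation by $\gamma$ swaps $S$ and its complement; the case distinction ``$\Tr_{2^n/2}(x)=0$'' vs.\ ``otherwise'' thus matches ``$x\in S$'' vs.\ ``$x\notin S$'' verbatim.

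Since the previous corollary asserts that $F_1$ and $F_2$ are $2$-to-$1$ over $\gf_{2^n}$ under these assumptions, the conclusion follows immediately. There is no genuine obstacle in this proof: the only verification required is the elementary fact that the trace hyperplane together with any vector of trace $1$ satisfies the hypotheses of the preceding corollary, and this amounts to one line using the $\gf_2$-linearity and surjectivity of the trace. Hence the proof will be essentially a two-line invocation of the earlier result after recording the choice $S=\ker(\Tr_{2^n/2})$.
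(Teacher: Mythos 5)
Your proposal is correct and is exactly the paper's route: the paper derives this proposition from the preceding corollary by setting $S=\{x\in\gf_{2^n}:\Tr_{2^n/2}(x)=0\}$, with $\Tr_{2^n/2}(\gamma)=1$ guaranteeing $\gamma\notin S$. Your additional observation that translation by $\gamma$ swaps $S$ with its complement is a harmless elaboration of the same specialization.
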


New $2$-to-$1$ mappings can also be constructed from the composition of permutation polynomials and known  $2$-to-$1$ mappings.

\begin{proposition}
	Let $G:\gf_{2^n}\rightarrow \gf_{2^n}$ be a permutation polynomial, and let $H:\gf_{2^n}\rightarrow \gf_{2^n}$ be a $2$-to-$1$ mapping.
	Then both $F_1(x)=H(G(x))$ and $F_2(x)=G(H(x))$ are $2$-to-$1$ mappings over $\gf_{2^n}$. 
\end{proposition}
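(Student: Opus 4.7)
The plan is to work directly from the characterization recalled earlier in the paper: for $p=2$, a map $F\colon\gf_{2^n}\to\gf_{2^n}$ is $2$-to-$1$ if and only if $\#F^{-1}(b)\in\{0,2\}$ for every $b\in\gf_{2^n}$. So the whole argument reduces to computing the preimage $F_i^{-1}(b)$ and showing it has cardinality $0$ or $2$ for every target element $b$.

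For $F_1=H\circ G$, I would fix $b\in\gf_{2^n}$ and write $F_1^{-1}(b)=G^{-1}\bigl(H^{-1}(b)\bigr)$. Since $G$ is a permutation of $\gf_{2^n}$, it is bijective, and hence $\#G^{-1}(S)=\#S$ for every subset $S\subseteq\gf_{2^n}$. Taking $S=H^{-1}(b)$, whose cardinality is $0$ or $2$ because $H$ is $2$-to-$1$, yields $\#F_1^{-1}(b)\in\{0,2\}$.

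For $F_2=G\circ H$, I would again fix $b$, and use bijectivity of $G$ to write $G^{-1}(b)=\{a\}$ for a unique $a\in\gf_{2^n}$. Then $F_2^{-1}(b)=H^{-1}(a)$, whose cardinality is $0$ or $2$ since $H$ is $2$-to-$1$.

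There really is no obstacle here beyond invoking the definition; the only thing to double-check is that the characterization used is the correct one for characteristic $2$, which is exactly what was stated earlier in Section~\ref{def}. The proof is short enough that it could equally well be presented as a one-line remark, but splitting it into the two cases $F_1$ and $F_2$ makes the role of $G$ being a bijection (rather than just some auxiliary map) transparent.
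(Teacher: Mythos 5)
Your proof is correct; the paper itself states this proposition without proof, treating it as immediate, and your argument via counting preimages ($F_1^{-1}(b)=G^{-1}(H^{-1}(b))$ and $F_2^{-1}(b)=H^{-1}(a)$ with $\{a\}=G^{-1}(b)$) is exactly the natural justification. Nothing is missing.
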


Hence one can use any $2$-to-$1$ polynomial and any permutation polynomial to 
produce new $2$-to-$1$ polynomials. It should be noted that in the above proposition we can composize many permutation polynomials with one $2$-to-$1$ mapping with any order. Particularly, we have the following corollary. 

\begin{corollary}
	Let $G:\gf_{2^n}\rightarrow \gf_{2^n}$ be a permutation polynomial, and let $L(x):\gf_{2^n}\rightarrow \gf_{2^n}$ be a linearized polynomial with $\dim \ker L=1$.
Then both $F_1(x)=L(G(x))$ and $F_2(x)=G(L(x))$ are $2$-to-$1$ mappings over $\gf_{2^n}$.
\end{corollary}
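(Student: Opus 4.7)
The plan is to reduce the corollary directly to the preceding proposition by showing that any linearized polynomial $L$ with $\dim \ker L = 1$ is itself a $2$-to-$1$ mapping over $\gf_{2^n}$. Once that is established, taking $H = L$ in the preceding proposition finishes both claims at once.

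First I would verify the key fact that $L$ is $2$-to-$1$. Since $L$ is $\gf_2$-linear, for any $y\in\gf_{2^n}$ the fiber $L^{-1}(y)$ is either empty (if $y\notin\IM(L)$) or a coset $x_0 + \ker L$ of the kernel (if $y = L(x_0)$ for some $x_0$). The assumption $\dim_{\gf_2}\ker L = 1$ forces every nonempty coset to have cardinality $2$, so every element of $\gf_{2^n}$ has either $0$ or $2$ preimages under $L$. By the definition of $2$-to-$1$ mappings in characteristic $2$ given in Section \ref{def}, this is exactly what it means for $L$ to be $2$-to-$1$.

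Next I would invoke the preceding proposition, which states that for a permutation polynomial $G$ and a $2$-to-$1$ mapping $H$ on $\gf_{2^n}$, both $H\circ G$ and $G\circ H$ are $2$-to-$1$. Specializing $H = L$ immediately yields that $F_1(x) = L(G(x))$ and $F_2(x) = G(L(x))$ are $2$-to-$1$ over $\gf_{2^n}$.

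There is essentially no obstacle here; the only nontrivial step is the observation that a $\gf_2$-linear map with one-dimensional kernel automatically has uniform fiber sizes of $0$ or $2$, and this is a routine consequence of the coset description of fibers of a linear map. The rest is bookkeeping inside the preceding proposition.
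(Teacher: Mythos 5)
Your proposal is correct and follows exactly the route the paper intends: the corollary is stated as a direct specialization of the preceding proposition (composition of a permutation with a $2$-to-$1$ map), and the fact that a linearized polynomial with one-dimensional kernel is $2$-to-$1$ is precisely the coset argument you give (the paper records this same fact as a separate proposition in Section 5.1). No gaps.
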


\subsection{Constructions of $2$-to-$1$ mappings from linear translators}
We recall the definitions of linear translator and linear structure.
\begin{definition}\label{de:tr}
	Let $n=rk$, $1\leq k\leq n$. Let $f$ be a function from $\mathbb{F}_{p^n}$ to $\mathbb{F}_{p^k}$,
	$\gamma\in\mathbb{F}_{p^n}^*$
	and $b$ be fixed in  $\mathbb{F}_{p^k}$.
	Then $\gamma$ is a $b$-{\it linear translator} of $f$ if
	$f(x+u\gamma) - f(x) = ub$ for all $x\in\mathbb{F}_{p^n}$ and $u\in\mathbb{F}_{p^k}$. In particular, if $k=1$ then  $\gamma$ is usually called  a
	$b$-{\it linear structure} of the function $f$ (where $b\in \mathbb{F}_{p}$), that is $f(x+\gamma)-f(x)=b$ for all $x\in\mathbb{F}_{p^n}$.
\end{definition}

%

\begin{proposition}\cite{CharpinKyureghyanFFA}
	Let  $G$ be a polynomial in $\mathbb{F}_{2^n}[x]$, $F$ be a permutation on $\mathbb{F}_{2^n}$ and $\gamma\in \mathbb{F}_{2^n}$ be a $1$-linear structure of $\Tr_ {2^n/2}(G(x))$. Then $F(x)+\gamma \Tr_ {2^n/2}(G(F(x)))$ is a $2$-to-$1$ mapping over $\mathbb{F}_{2^n}$.
\end{proposition}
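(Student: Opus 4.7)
The plan is to reduce the statement to showing that the map
\[
K(y) \;:=\; y + \gamma\,\Tr_{2^n/2}(G(y))
\]
is $2$-to-$1$ on $\mathbb{F}_{2^n}$, and then exploit the linear-structure hypothesis directly. Since $F$ is a permutation, writing $H(x) = F(x) + \gamma\,\Tr_{2^n/2}(G(F(x))) = K(F(x))$ shows that $H$ is $2$-to-$1$ if and only if $K$ is, so from this point on I only work with $K$.

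The first key observation is that $\gamma$ being a $1$-linear structure of $\Tr_{2^n/2}\circ G$ means
\[
\Tr_{2^n/2}(G(y+\gamma)) \;=\; \Tr_{2^n/2}(G(y)) + 1 \qquad \text{for every } y\in\mathbb{F}_{2^n}.
\]
Substituting $y+\gamma$ into $K$ and using this identity gives
\[
K(y+\gamma) \;=\; y+\gamma + \gamma\bigl(\Tr_{2^n/2}(G(y))+1\bigr) \;=\; y + \gamma\,\Tr_{2^n/2}(G(y)) \;=\; K(y).
\]
Thus the fibres of $K$ contain the pairs $\{y, y+\gamma\}$, and since $\gamma\ne 0$ each such pair has exactly two elements.

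The next step, and the one that carries the real content, is to show that \emph{these are the only} collisions. Suppose $K(y_1)=K(y_2)$. Then
\[
y_1 + y_2 \;=\; \gamma\bigl(\Tr_{2^n/2}(G(y_1)) + \Tr_{2^n/2}(G(y_2))\bigr).
\]
The factor on the right lies in $\mathbb{F}_2$, so $y_1+y_2 \in \{0,\gamma\}$. Hence every fibre of $K$ has size at most $2$. Combined with the first observation, every fibre has size exactly $2$, so $K$ is $2$-to-$1$, and therefore so is $H=K\circ F$.

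I do not anticipate a real obstacle here: the proof is short once one spots that $H$ factors through $K$ via the permutation $F$, and the identity $K(y+\gamma)=K(y)$ drops out of the definition of a $1$-linear structure. The only point requiring a little care is the last step, where one must notice that $\Tr_{2^n/2}(G(y_1))+\Tr_{2^n/2}(G(y_2))\in\mathbb{F}_2$ forces $y_1+y_2\in\{0,\gamma\}$; this is what rules out fibres larger than two and completes the argument.
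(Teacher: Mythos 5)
Your proof is correct and complete: the reduction $H=K\circ F$ via the permutation $F$, the identity $K(y+\gamma)=K(y)$ from the $1$-linear-structure hypothesis, and the observation that any collision forces $y_1+y_2=\gamma\bigl(\Tr_{2^n/2}(G(y_1))+\Tr_{2^n/2}(G(y_2))\bigr)\in\{0,\gamma\}$ together show every fibre has size exactly $2$ (the needed fact $\gamma\neq 0$ is automatic, since $\gamma=0$ cannot be a $1$-linear structure). The paper itself gives no proof of this proposition, citing it to Charpin and Kyureghyan; your argument is the standard one for this result and there is nothing to add.
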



The following result can be derived from \cite{Mesnager-Oz}. For making the paper self-contained, we include its proof.
\begin{proposition}
	Let {$\gamma, \delta$ be two distinct elements in $\mathbb{F}_{2^m}^\ast$,  and let} $f, g$ be two Boolean functions defined over $\mathbb{F}_{2^m}$. The mapping  {$y\mapsto y+\gamma f(y)+\delta g(y)$} is  $2$-to-$1$ on $\GF m$ if one of the following conditions holds:
\begin{enumerate}
\item $\gamma$, $\delta$ are $1$-linear structures of $f$ and $\gamma$ is a $0$-linear structure of $g$,
\item {$\gamma$} is a $1$-linear structure of $f$ and $\gamma$, $\delta$ are $0$-linear structures of $g$,
\item $\gamma$, $\delta$ are $0$-linear structures of $f$ and $\delta$ is a $1$-linear structure of $g$,
\item $\delta$ is a $0$-linear structure of $f$ and $\gamma$, $\delta$ are $1$-linear structures of $g$,
\item $\gamma$ is a $0$-linear structure of $f$, $\delta$ is a $1$-linear structure of $f$ and $\gamma+\delta$ is a $1$-linear structure of $g$,
\item $\gamma$ is a $1$-linear structure of $g$, $\delta$ is a $0$-linear structure of $g$ and $\gamma+\delta$ is a $1$-linear structure of $f$.
\end{enumerate}
\end{proposition}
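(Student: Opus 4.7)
Set $F(y) := y + \gamma f(y) + \delta g(y)$. The plan is to exploit the fact that, since $f$ and $g$ are $\mathbb{F}_2$-valued, any collision $F(y)=F(y')$ forces
$$y+y' \;=\; \gamma\bigl(f(y)+f(y')\bigr) + \delta\bigl(g(y)+g(y')\bigr) \;\in\; \{0,\gamma,\delta,\gamma+\delta\}.$$
Because $\gamma,\delta\in\mathbb{F}_{2^m}^\ast$ are distinct, these four elements are pairwise distinct. Hence for $y\neq y'$ the ``collision shift'' $y'-y$ must lie in $\{\gamma,\delta,\gamma+\delta\}$, and $F$ is $2$-to-$1$ if and only if, for every $y$, exactly one of these three shifts $a$ satisfies $F(y+a)=F(y)$.

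In each of the six cases I would single out a distinguished shift $a^{\star}$, the same for all $y$: namely $a^{\star}=\gamma$ in cases (1)--(2), $a^{\star}=\delta$ in cases (3)--(4), and $a^{\star}=\gamma+\delta$ in cases (5)--(6). The first half of the verification checks directly that $F(y+a^{\star})=F(y)$ for every $y$, by expanding $F(y+a^{\star})$ and substituting the linear-structure identities of the hypothesis. In cases (5) and (6), where $\gamma+\delta$ is not itself assumed to be a linear structure of $f$ or $g$, one first computes its action by translating twice, e.g.\ $f(y+\gamma+\delta)=f(y+\gamma)+1=f(y)+1$ in case (5), and similarly for $g$ in case (6).

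The second half shows $F(y+a)\neq F(y)$ for each of the two remaining shifts $a\in\{\gamma,\delta,\gamma+\delta\}\setminus\{a^{\star}\}$. Writing $F(y+a)+F(y) = a + \gamma\bigl(f(y)+f(y+a)\bigr) + \delta\bigl(g(y)+g(y+a)\bigr)$, the vanishing of this expression would require $a=\gamma\alpha+\delta\beta$ for some $(\alpha,\beta)\in\{0,1\}^2$; since $\gamma,\delta,\gamma+\delta$ are distinct and nonzero, this pins down a unique $(\alpha,\beta)$. For the shift $a$ which is explicitly a linear structure of $f$ or $g$ under the hypothesis, one of $\alpha,\beta$ is forced and the resulting equation contradicts $\gamma\neq\delta$, $\gamma\neq 0$, or $\delta\neq 0$. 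For the remaining shift, one appeals to the already-established invariance $F(y+a^{\star})=F(y)$ to rewrite $F(y+a)$ as $F(y+(a+a^{\star}))$ and reduce to the previous subcase.

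The only real obstacle is the non-uniformity of the six cases: the pairing shift $a^{\star}$ varies, and only two of the three shifts in $\{\gamma,\delta,\gamma+\delta\}$ are directly prescribed as linear structures. Once one has the two ``tricks'' above, namely composing translations to obtain $D_{\gamma+\delta}$ from $D_\gamma$ and $D_\delta$, and invoking $F(y+a^{\star})=F(y)$ to handle the third shift, each case collapses to a one-line arithmetic check of the form $\gamma\alpha+\delta\beta\neq a$, which is immediate from $\gamma\neq\delta$ and $\gamma,\delta\neq 0$.
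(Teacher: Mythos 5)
Your argument is correct and is essentially the route the paper takes (the paper writes out only Case 1, with $\rho^{-1}(a)\subseteq\{a,a+\gamma,a+\delta,a+\gamma+\delta\}$, and declares the other cases similar): both proofs rest on the observation that two colliding points must differ by an element of $\{\gamma,\delta,\gamma+\delta\}$ and then use the linear-structure identities to decide which of the three shifts actually produces a collision. Your bookkeeping via the forced pair $\bigl(f(y)+f(y+a),\,g(y)+g(y+a)\bigr)$, together with composing translations to get the action of $\gamma+\delta$, is a slightly more uniform way of organizing the six case checks than the paper's, but the underlying computation is the same.
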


\begin{proof}
We give the proof for Case 1 only since the proofs for other cases are similar. Now, we need to show that $\rho(y) : y \mapsto y+\gamma f(y)+\delta g(y)$ is $2$-to-$1$. Let $\rho(y)=a$ for some $a \in \GF m$. Then, $y \in \left\{a,a+\gamma,a+\delta,a+\gamma+\delta\right\}$. As $\gamma$ is a $1$-linear structure of $f$ and $0$-linear structure of $g$, we have $\rho(a)=\rho(a+\gamma)$ and $\rho(a+\delta)=\rho(a+\gamma+\delta)$. Moreover, $\rho(a+\delta)=a+\delta+\gamma f(a+\delta)+ \delta g(a+ \delta)= a+\delta +\gamma + \gamma f(a) + \delta g(a+\delta)$ where we use that $\delta$ is a $1$-linear structure of $f$. We observe that $\rho(a)=a+\gamma f(a) + \delta g(a) \neq \rho(a+\delta)$. Indeed, \mqu{if} the equality holds, then $\gamma + \delta + \delta \big(g(a)+g(a+\delta)\big)=0$. This is a contradiction as $\gamma \neq \delta$ and $\gamma \neq 0$. This implies that $\rho^{-1}(a)=\left\{a,a+\gamma\right\}$ or $\rho^{-1}(a)=\left\{a+\delta,a+\gamma+\delta\right\}$ which shows that $\rho$ is $2$-to-$1$.
\end{proof}

\begin{proposition}
	Let $L : \mathbb{F}_{2^n}\rightarrow \mathbb{F}_{2^n}$ be a $\GF {}$-linear permutation of $\GF{n}$. Let $f$ be a
	Boolean function over $\GF n$ and $\alpha$ be a non-zero $1$-linear structure of $f$.Then  $F(y)=L(y)+L(\alpha)f(y)$ is $2$-to-$1$ on $\GF{n}$.
\end{proposition}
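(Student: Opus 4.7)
The plan is to show directly that $F$ pairs each input $y$ with $y+\alpha$, and that no other collisions are possible. The key computation is to evaluate $F(y+\alpha)$ using the defining property of a $1$-linear structure, namely $f(y+\alpha)=f(y)+1$ for all $y\in\gf_{2^n}$. Since $L$ is $\gf_2$-linear and we are in characteristic $2$,
\begin{equation*}
F(y+\alpha)=L(y)+L(\alpha)+L(\alpha)\bigl(f(y)+1\bigr)=L(y)+L(\alpha)f(y)=F(y),
\end{equation*}
so every fibre of $F$ contains the pair $\{y,y+\alpha\}$, and in particular every fibre has even size at least $2$ (note $\alpha\neq 0$, so $y\neq y+\alpha$).

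Next I would show that no fibre has more than two elements. Assume $F(y_1)=F(y_2)$; then
\begin{equation*}
L(y_1+y_2)=L(\alpha)\bigl(f(y_1)+f(y_2)\bigr).
\end{equation*}
Since $f$ is Boolean, $f(y_1)+f(y_2)\in\{0,1\}$. If this difference is $0$, then $L(y_1+y_2)=0$, and since $L$ is a permutation of $\gf_{2^n}$ (hence has trivial kernel), $y_1=y_2$. If the difference is $1$, then $L(y_1+y_2)=L(\alpha)$, and again by injectivity of $L$, $y_1+y_2=\alpha$, i.e.\ $y_2=y_1+\alpha$. Thus the only solutions to $F(y)=F(y_1)$ are $y_1$ and $y_1+\alpha$, so $F$ is exactly $2$-to-$1$.

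There is essentially no obstacle here: the proof rests on two short observations, namely that $\alpha$ being a $1$-linear structure makes the $L(\alpha)f(y)$-term absorb the shift $L(\alpha)$ coming from $L(y+\alpha)$ in characteristic $2$, and that the injectivity of $L$ forces any additional collision to coincide with the trivial one. The whole argument can be presented in a few lines using these two facts.
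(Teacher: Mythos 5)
Your proof is correct and follows essentially the same route as the paper's: both arguments reduce the fibre analysis to the injectivity of $L$ together with the relation $f(y+\alpha)=f(y)+1$, the only difference being that you show directly that each fibre is exactly $\{y,y+\alpha\}$, while the paper fixes a target $b$ and checks that the two candidate preimages $L^{-1}(b)$ and $\alpha+L^{-1}(b)$ are either both solutions or both non-solutions. No gaps.
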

\begin{proof}
	For any $b\in \GF{n}$, let $F(y)=L(y)+L(\alpha)f(y)=b$. Then we have
	\begin{equation}
	\left\{\begin{array}{lll}
	L(y)&=&b,\\
	f(y)&=&0; 
	\end{array}\right.
	\end{equation}
or 
	\begin{equation}\label{yba}
\left\{\begin{array}{lll}
L(y)&=&b+L(\alpha),\\
f(y)&=&1.  
\end{array}\right.
\end{equation}
It follows from Eq. (\ref{yba}) and $L$ is a linear permutation that $y=\alpha+L^{-1}(b)$. Then $f(y)=f(\alpha+L^{-1}(b))=f(L^{-1}(b))+1$ 
since $\alpha$ is a non-zero $1$-linear structure of $f$. Hence 
$F(y)=b$ has either zero or two solutions in $\gf_{2^n}$, which completes the proof. 
\end{proof}

\mqu{
\subsection{APN functions}
\emph{ Almost perfect nonlinear} (APN) functions are  important research objects in cryptography and coding theory. Let us recall their definition.
\begin{definition}
	Let $F$  be a mapping from $\GF n$ to itself ($n$ a positive integer). The function $F$ is said to be APN
	if $$\max_{a\in\GF n^\ast}\max_{b\in\GF n}\#\{x\in\GF n\mid F(x+a)+F(x)=b\}=2.$$
\end{definition}

It is clear that a function $F$ over $\gf_{2^n}$ is APN if and  only if 
$D_aF(x)=F(x+a)+F(x)$ is  $2$-to-$1$ over $\gf_{2^n}$ for every $a\in \gf_{2^n}^\ast$. Hence one can construct a big family of $2$-to-$1$ mappings from an APN function. For the known list of APN functions over $\gf_{2^n}$, please refer to \cite{Pott}\cite{Villa} and the references therein. From these APN functions, we can construct plenties of $2$-to-$1$ mappings  over $\gf_{2^n}$.

Conversely, two-to-one mappings over finite fields in characteristic $2$ can also allow to construct APN functions as follows: let $G:\gf_{2^n}\rightarrow \gf_{2^n}$ be a mapping such that $F_a(x):=G(x+a)+G(x)$ is $2$-to-$1$ over $\gf_{2^n}$ for every $a\in \gf_{2^n}^\ast$. Then $G$ is an APN function.
}


\section{ $2$-to-$1$ polynomial mappings in classical classes of polynomials}\label{Pol}
\subsection{Linearized polynomials and Monomials}
{
Firstly, we have the following general proposition characterizing $2$-to-$1$ linear mappings over finite fields in even characteristic.
\begin{proposition}
	  Let $L$ be \mqu{an $\gf_2$-}linear mapping from $\gf_{2^n}$ to $\gf_{2^m}$. Then $L$ is a $2$-to-$1$ mapping if and only if $\dim \ker L=1$.
\end{proposition}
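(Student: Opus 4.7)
The proof should be essentially a one-step consequence of the fact that, for any $\gf_2$-linear map, the fibers over elements of the image are cosets of the kernel, hence all have cardinality $\#\ker L = 2^{\dim \ker L}$.

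My plan is the following. First I would recall the standard observation: for an $\gf_2$-linear map $L:\gf_{2^n}\to\gf_{2^m}$ and any $b\in\im L$, if $x_0$ is any fixed preimage of $b$ then
\[
L^{-1}(b) = x_0 + \ker L,
\]
so $\#L^{-1}(b) = \#\ker L = 2^{\dim_{\gf_2}\ker L}$; and for $b\notin\im L$ we have $\#L^{-1}(b)=0$.

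Next I would invoke the definition of $2$-to-$1$ in characteristic $2$ from the previous subsection: since $\#\gf_{2^n}=2^n$ is even, $L$ is $2$-to-$1$ if and only if $\#L^{-1}(b)\in\{0,2\}$ for every $b\in\gf_{2^m}$. Combining this with the computation above gives the equivalence: $L$ is $2$-to-$1$ iff $2^{\dim\ker L}=2$ iff $\dim\ker L=1$. For the ``only if'' direction, note also that $\im L$ must be nonempty (it contains $0=L(0)$), so the case $\#L^{-1}(b)=2$ is actually attained, which forces $\#\ker L=2$; for the ``if'' direction, $\dim\ker L=1$ immediately yields fibers of size exactly $2$ over the image and size $0$ elsewhere.

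There is no genuine obstacle here; the only thing worth being careful about is the distinction between fibers over $\im L$ versus outside it, and observing that $0\in\im L$ always, so at least one fiber has size $\#\ker L$. The entire argument is roughly three lines and uses no machinery beyond the coset description of preimages of a linear map.
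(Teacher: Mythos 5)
Your argument is correct: the coset description of fibers of an $\gf_2$-linear map immediately gives that every nonempty fiber has size $2^{\dim\ker L}$, and since $0=L(0)$ guarantees at least one nonempty fiber, the $2$-to-$1$ condition forces $\dim\ker L=1$ and conversely. The paper states this proposition without any proof, evidently regarding it as immediate, and your three-line coset argument is precisely the standard justification that the omission presupposes.
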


}

We have to mention that the simplest example of $2^m$-to-$1$ mapping is the trace  function $\Tr_{2^{2m}/2^m}$ from $\mathbb{F}_{2^{2m}}$ to
$\mathbb{F}_{2^m}$.

On the other hand, there are other explicit constructions of $2$-to-$1$ mapping.

\begin{proposition}{\cite[Theorem9]{CharpinKyureghyanFFA}}
Let $n$ be an odd prime number satisfying {one of } the two following conditions (where $ord_n(2)$ denotes the order of 2 modulo $n$, that is the smallest positive integer $k$ such that $n$ divides $2^k-1$)
\begin{itemize}
\item $ord_n(2)=n-1$;
\item $n=2r+1$, $r$ odd and { $ord_n(2)=r$}.
\end{itemize}
Let I be a nonempty set of integers in the range {[1, $\frac{n-1}{2}$]}. Then, for any such I the mapping $L_{I}(x)=\sum_{i\in I}(x^{2^i}+x^{2^{n-i}})$ is $2$-to-$1$ with kernel $\{0,1\}$.
\end{proposition}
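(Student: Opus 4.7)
The plan is to invoke the proposition stated just before this one, which says that a $\gf_2$-linear map is $2$-to-$1$ if and only if its kernel has dimension $1$. Since $L_I(1) = \sum_{i\in I}(1+1) = 0$, already $1\in\ker L_I$, so it suffices to show $\dim_{\gf_2}\ker L_I \leq 1$, and then the kernel is exactly $\{0,1\}$. To transfer the question to the polynomial ring $\gf_2[Y]$, I would associate to $L_I$ the polynomial
$$
P_I(Y) \;=\; \sum_{i\in I}\bigl(Y^i + Y^{n-i}\bigr)\in\gf_2[Y],
$$
so that $L_I = P_I(\sigma)$ for the Frobenius $\sigma:x\mapsto x^2$ on $\gf_{2^n}$. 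By the normal basis theorem, $\gf_{2^n}$ is a free $\gf_2[Y]/(Y^n-1)$-module of rank one, giving the classical identity
$$
\dim_{\gf_2}\ker L_I \;=\; \deg\gcd\bigl(P_I(Y),\,Y^n-1\bigr).
$$
Since $n$ is an odd prime, $Y^n-1 = (Y-1)\Phi_n(Y)$, and $P_I(1)=0$ already gives $(Y-1)\mid P_I$. Hence the whole statement reduces to showing $\gcd(P_I,\Phi_n) = 1$.

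To establish this, I would argue by contradiction: assume some primitive $n$-th root of unity $\alpha\in\overline{\gf}_2$ satisfies $P_I(\alpha) = 0$. Using $\alpha^n = 1$, a direct computation yields
$$
P_I(\alpha^{-1}) \;=\; \sum_{i\in I}\bigl(\alpha^{-i}+\alpha^{i}\bigr) \;=\; P_I(\alpha),
$$
so $\alpha^{-1}$ is also a root of $P_I$. Now I apply the hypothesis on $ord_n(2)$. In the first case ($ord_n(2)=n-1$), $\Phi_n$ is irreducible over $\gf_2$ and $P_I(\alpha)=0$ already forces $\Phi_n\mid P_I$. In the second case ($n = 2r+1$, $r$ odd, $ord_n(2)=r$), $\Phi_n$ splits as $f_1f_2$ into two distinct irreducible factors of degree $r$, whose roots form the two Frobenius orbits $H = \langle 2\rangle$ and $-H$ in $(\mathbb{Z}/n\mathbb{Z})^\ast$. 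A parity argument shows $-1\notin H$: indeed $-1\equiv 2^k\pmod n$ would give $1\equiv 2^{2k}$, hence $r\mid 2k$, and since $r$ is odd $r\mid k$, forcing $-1\equiv 1\pmod n$, impossible for $n\ge 3$. Therefore $\alpha$ and $\alpha^{-1}$ lie in opposite Frobenius orbits, so both $f_1\mid P_I$ and $f_2\mid P_I$, and once again $\Phi_n\mid P_I$.

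In both cases $\Phi_n\mid P_I$, but $\deg P_I \le n-1 = \deg\Phi_n$, so $P_I\in\{0,\Phi_n\}$ as a polynomial in $\gf_2[Y]$. The $2|I|$ exponents $i$ and $n-i$ for $i\in I\subseteq[1,(n-1)/2]$ are distinct and all lie in $[1,n-1]$, so nonemptiness of $I$ makes $P_I$ nonzero; on the other hand $P_I$ has zero constant term while $\Phi_n$ has constant term $1$, so $P_I\neq\Phi_n$. This contradiction finishes the proof. The only real subtlety I anticipate is the Galois-orbit bookkeeping in the second case; the palindromy identity $P_I(\alpha^{-1})=P_I(\alpha)$ together with the parity argument $-1\notin\langle 2\rangle$ is what unifies the two cases and avoids a more delicate direct analysis of symmetric linear relations among the powers of $\alpha$.
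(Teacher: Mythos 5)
The paper does not prove this proposition at all --- it is quoted verbatim from \cite[Theorem 9]{CharpinKyureghyanFFA} with no argument supplied --- so there is nothing internal to compare your proof against; I have therefore checked it on its own merits, and it is correct and complete. Your route is the natural one: the identification $\dim_{\gf_2}\ker L_I=\deg\gcd(P_I(Y),Y^n-1)$ via the cyclic $\gf_2[Y]/(Y^n-1)$-module structure is the standard fact about $2$-polynomials with coefficients in $\gf_2$, and since $n$ is odd, $Y^n-1=(Y-1)\Phi_n(Y)$ is separable, so the problem correctly reduces to $\gcd(P_I,\Phi_n)=1$. The two hypotheses on $ord_n(2)$ are exactly what make $\Phi_n$ either irreducible or a product of two degree-$r$ factors whose root sets are the cosets $H=\langle 2\rangle$ and $-H$; your parity argument that $-1\notin H$ when $r$ is odd is right ($r\mid 2k$ and $r$ odd give $r\mid k$, hence $2^k\equiv 1$), and the palindromy $P_I(\alpha^{-1})=P_I(\alpha)$ then forces $\Phi_n\mid P_I$ in both cases. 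The final degree-and-constant-term contradiction is airtight because the $2\lvert I\rvert$ exponents $i$ and $n-i$ with $i\le (n-1)/2$ are pairwise distinct, so $P_I\neq 0$, while $P_I(0)=0\neq\Phi_n(0)$. This is essentially the argument one would expect in the cited source; no gaps.
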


\begin{proposition}
Let $F_k$ be the mapping from $\mathbb{F}_{2^n}$ to $\mathbb{F}_{2^n}$ given
by $F_k(x)=x^{2^k+1}+x^{2^k+2^m}$ where $n=2m$ and $0<k<m$. Let $G_{k, a}(x)=F_k(x+a)+F_k(x)$. If $a\in\mathbb{F}_{2^n}\setminus\mathbb{F}_{2^m}$ and $\gcd(k,m)=1$, then
$x\mapsto G_{k, a}(x)$ is $2$-to-$1$. 
\end{proposition}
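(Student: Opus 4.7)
The plan is to reduce the $2$-to-$1$ claim for $G_{k,a}$ to a linearized-polynomial question by exploiting the fact that $F_k$ has algebraic degree $2$. First I would expand $F_k(x+a)$ using the Frobenius identity in characteristic $2$, obtaining
\begin{equation*}
G_{k,a}(x) = (a+a^{2^m})x^{2^k}+a^{2^k}(x+x^{2^m}) + a^{2^k}(a+a^{2^m}).
\end{equation*}
Writing $c:=a+a^{2^m}$, the hypothesis $a\in\gf_{2^n}\setminus\gf_{2^m}$ gives $c\neq 0$, and the identity $c^{2^m}=c$ shows $c\in\gf_{2^m}^\ast$. Hence $G_{k,a}(x)=L(x)+L(a)$ with
\begin{equation*}
L(x):=c\,x^{2^k}+a^{2^k}\bigl(x+x^{2^m}\bigr)
\end{equation*}
an $\gf_2$-linear polynomial. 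Since affine maps differ from linear maps only by translation, showing $G_{k,a}$ is $2$-to-$1$ is equivalent to showing $\dim_{\gf_2}\ker(L)=1$.

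Next I would solve $L(x)=0$. The term $x+x^{2^m}=\Tr_{2^n/2^m}(x)$ lies in $\gf_{2^m}$, so the relation $c\,x^{2^k}=a^{2^k}\Tr_{2^n/2^m}(x)$ forces $(x/a)^{2^k}\in\gf_{2^m}$. Raising to the $2^m$-th power and dividing yields $(x/a)^{2^k(2^m-1)}=1$, and because $\gcd(2^k,2^n-1)=1$ this simplifies to $(x/a)^{2^m-1}=1$. Therefore every nonzero solution has the shape $x=\lambda a$ with $\lambda\in\gf_{2^m}^\ast$. This is the first place where the structure of the problem kicks in.

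Substituting $x=\lambda a$ back into $L(x)=0$ collapses everything to $ca^{2^k}(\lambda^{2^k}+\lambda)=0$, so $\lambda^{2^k-1}=1$. Thus $\lambda\in\gf_{2^m}^\ast\cap\gf_{2^k}^\ast=\gf_{2^{\gcd(k,m)}}^\ast=\gf_2^\ast$, using $\gcd(k,m)=1$. This leaves only $\lambda=1$, i.e.\ $x=a$, and a direct check confirms $L(a)=ca^{2^k}+a^{2^k}c=0$. Consequently $\ker(L)=\{0,a\}$, so $L$ (and hence $G_{k,a}$) is $2$-to-$1$ on $\gf_{2^n}$.

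The main obstacle I anticipate is the step that pins the nonzero kernel elements down to scalar multiples of $a$: one has to exploit both hypotheses simultaneously, namely $a\notin\gf_{2^m}$ (to know $c\neq 0$ and thus that the trace identity is nontrivial) and $\gcd(k,m)=1$ (first implicitly via $\gcd(2^k,2^n-1)=1$ to descend from $2^k(2^m-1)$-th roots of unity to $(2^m-1)$-th roots, and then explicitly to intersect $\gf_{2^m}$ with $\gf_{2^k}$). Once this is in place, the rest of the argument is routine.
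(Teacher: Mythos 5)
Your argument is correct, and at its core it follows the same route as the paper: expand $G_{k,a}$ as a linearized polynomial plus a constant and show that the linear part has kernel exactly $\{0,a\}$. Where you diverge is in how you compute that kernel. The paper applies the Frobenius $x\mapsto x^{2^m}$ to the equation $L(x)=0$, adds the two relations to get $(x^{2^m}+x)^{2^k-1}=(a^{2^m}+a)^{2^k-1}$, concludes $x^{2^m}+x=a^{2^m}+a$ from $\gcd(2^k-1,2^m-1)=1$, and finishes with the additive substitution $x=a+y$, $y\in\gf_{2^m}$. You instead observe that $c\,x^{2^k}=a^{2^k}\Tr_{2^n/2^m}(x)$ forces $(x/a)^{2^k}\in\gf_{2^m}$, descend to $x=\lambda a$ with $\lambda\in\gf_{2^m}^\ast$ via a multiplicative normalization, and then intersect subfields to get $\lambda=1$; this is an equally valid and arguably slightly cleaner use of the two hypotheses, though both proofs invoke them at the same logical places. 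One cosmetic slip: you write $G_{k,a}(x)=L(x)+L(a)$, but as you yourself verify later, $L(a)=0$, while the actual constant term is $F_k(a)=a^{2^k}c\neq 0$; the correct identity is $G_{k,a}(x)=L(x)+F_k(a)$. This does not affect the argument, since being $2$-to-$1$ is invariant under adding a constant, but the displayed identity as stated contradicts $G_{k,a}(0)=F_k(a)\neq 0$ and should be fixed.
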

\begin{proof} We have
$G_{k, a}(x)=F_k(a)+L_a(x)$, where $$L_a(x)=a^{2^k}x^{2^m} + (a+a^{2^m})x^{2^k} + a^{2^k}x$$ is a linearized polynomial. Then it suffices to prove that $L_a(x)=0$ has exactly two zeros in $\gf_{2^n}$ if $a\in\mathbb{F}_{2^n}\setminus\mathbb{F}_{2^m}$ and $\gcd(k,m)=1$. 
Clearly, if $x\in \gf_{2^m}$, then $L_a(x)=0$ reduces to $x=0$. 
Now we assume that  $x\notin \gf_{2^m}$. 
Then $$0=L_a(x)^{2^m} = a^{2^{k+m}}x^{2^m} + (a+a^{2^m})x^{2^{k+m}} + a^{2^{k+m}}x .$$ Adding the above two equations leads to 
$$ (a^{2^{k+m}} + a^{2^k}) (x^{2^m} + x) + (a+a^{2^m}) (x^{2^{k+m}} + x^{2^{k}} ) = 0. $$
Hence $(x^{2^m} + x)^{2^k-1}=(a^{2^m} + a)^{2^k-1}$, which further leads to $x^{2^m} + x = a^{2^m} + a$ since $\gcd(k, m)=1$. 
Let $x=a+y$. Then $y\in \gf_{2^m}$. Plugging it into $L_a(x)=0$, we get $y=0$, 
which means $x=a$. Thus $L_a(x)=0$ has exactly two zeros $x=0$ and $x=a$ in $\gf_{2^n}$. \end{proof}

Now we recall the following trivial characterization of $2$-to-$1$ monomial mapping over $\gf_q$. 

\begin{proposition}
	Let $f(x)=ax^d$ be a monomial polynomial over $\gf_{q}$, where $a\neq 0$. Then $f$ is $2$-to-$1$ over $\gf_{q}$ if and only if $\gcd(d, q-1)=2$. 
\end{proposition}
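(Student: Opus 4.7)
The plan is to reduce to the monomial $g(x)=x^d$ and then use the standard cyclic-group analysis of $d$-th powers. Since $a\in\gf_q^\ast$, multiplication by $a$ is a bijection on $\gf_q$, so the fibers of $f$ and of $g$ have the same sizes. Hence $f$ is $2$-to-$1$ if and only if $g$ is. From here I would use the obvious structural fact that $g$ restricted to the cyclic group $\gf_q^\ast$ of order $q-1$ is a group endomorphism with kernel of size $e:=\gcd(d,q-1)$; consequently every element of $g(\gf_q^\ast)$ has exactly $e$ preimages in $\gf_q^\ast$, while $g^{-1}(0)=\{0\}$.

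Next I would split on the parity of $q$. If $q$ is odd, then $\#\gf_q$ is odd, so by the definition of $2$-to-$1$ mapping we need exactly one element of the codomain to have one preimage and every other image element to have exactly two preimages. The value $0$ already supplies such a unique singleton preimage, and $0$ is the only candidate since the nonzero fibers all have the same size $e$. So $g$ is $2$-to-$1$ precisely when $e=2$, which is exactly the condition $\gcd(d,q-1)=2$.

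If $q$ is a power of $2$, then both directions of the biconditional fail and the statement holds vacuously. Indeed, $q-1$ is odd, so $e=\gcd(d,q-1)$ is odd and cannot equal $2$. On the other hand, $\#\gf_q$ is even, so the definition of $2$-to-$1$ requires every image element to have $0$ or $2$ preimages; but $0$ has exactly one preimage under $g$, so $g$ (and hence $f$) is never $2$-to-$1$.

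There is no real obstacle here: the only thing to be slightly careful about is invoking the correct clause of the definition for even versus odd $\#\gf_q$, and observing that in the odd case the unavoidable singleton fiber above $0$ is precisely the one exception permitted by the definition, which forces $e=2$ on the remaining fibers.
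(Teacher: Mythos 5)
Your proof is correct and is essentially the standard argument the paper has in mind when it calls this a ``trivial characterization'' (the paper itself supplies no proof). Reducing to $x^d$ via the bijection $x\mapsto ax$, counting fibers over $\gf_q^\ast$ by the kernel size $\gcd(d,q-1)$, and handling the singleton fiber over $0$ via the odd-cardinality clause of the definition is exactly right; your observation that the statement is vacuous for even $q$ (both sides of the equivalence fail) is a correct and worthwhile clarification.
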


{Then we recall a result which is closely related to the monomial mapping. } In 1998 Maschietti discovered a class of cyclic difference sets with Singer parameters which was called the hyperoval sets \cite{Maschietti98}. Let $m$ be odd. Maschietti showed that $$ M_k:=\gf_{2^m}\setminus\{x^k+x: x \in\gf_{2^m} \}$$
is a difference set \mqu{if and only if } $x \rightarrow  x^k$ is a permutation on $\gf_{2^m}$  and the mapping $x \rightarrow  x^k+x$   is two-to-one. {The following $k$ yields difference sets, hence they also yields $2$-to-$1$ mappings.}

\begin{proposition}
	Let $m$ be odd. Then $f(x)=x^k+x$ is $2$-to-$1$ over $\gf_{2^m}$ if one of the following case holds.
	
	\begin{enumerate}
		\item $k= 2$ (the Singer case);
		\item $k= 6$ (the Segre case);
		\item $k= 2^\sigma + 2^\pi$  with $\sigma =(m+1)/2$ and $4\pi \equiv 1 \mod m$ (the Glynn I case);
		\item $k= 3\cdot 2^\sigma + 4$  with $\sigma =(m+1)/2$ (the Glynn II case).
	\end{enumerate} 
\end{proposition}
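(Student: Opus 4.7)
The plan is to reduce each case to the Maschietti criterion recalled just before the statement: $f(x)=x^k+x$ is $2$-to-$1$ on $\gf_{2^m}$ iff $x\mapsto x^k$ permutes $\gf_{2^m}$ and $M_k=\gf_{2^m}\setminus\{x^k+x:x\in\gf_{2^m}\}$ is a Singer difference set. Equivalently, it suffices to exhibit $k$ as a hyperoval exponent (o-monomial) in $\PG(2,2^m)$ with $m$ odd, so that $\{(1,t,t^k):t\in\gf_{2^m}\}\cup\{(0,0,1),(0,1,0)\}$ is a hyperoval. Thus the four items amount to checking that the Singer, Segre, Glynn~I and Glynn~II monomials satisfy the two conditions.

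First I would dispatch Singer's case $k=2$ directly: $x^2+x$ is $\gf_2$-linear with kernel $\{x\in\gf_{2^m}:x^2=x\}=\gf_2$, a $1$-dimensional $\gf_2$-subspace, so the preceding proposition on linearized polynomials immediately gives the $2$-to-$1$ property. For Segre's case $k=6$, I would first verify $\gcd(6,2^m-1)=1$: since $m$ is odd we have $2^m\equiv 2\pmod 3$, hence $3\nmid 2^m-1$, and of course $2\nmid 2^m-1$, so $x^6$ permutes $\gf_{2^m}$. The $2$-to-$1$ property of $x^6+x$ is Segre's classical $1957$ result; a direct verification sets $y=x+a$ for $a\in\gf_{2^m}^\ast$ and divides $(x+a)^6+x^6+a^6$ by $a$ to obtain a fixed polynomial in $x/a$ which one checks to be $2$-to-$1$ in $x/a$.

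For the two Glynn cases, $k=2^\sigma+2^\pi$ with $\sigma=(m+1)/2$, $4\pi\equiv 1\pmod m$, and $k=3\cdot 2^\sigma+4$ with $\sigma=(m+1)/2$, I would invoke Glynn's $1983$ theorem that these are o-monomials for $m$ odd. Checking $\gcd(k,2^m-1)=1$ is a direct exponent computation using $2^{2\sigma}\equiv 2\pmod{2^m-1}$ (because $2\sigma=m+1$) together, in the first family, with the congruence defining $\pi$. The $2$-to-$1$ verification then reduces, after the substitution $y=x+a$, to showing that $((x+a)^k+x^k)/a$ takes each value in $\gf_{2^m}$ either twice or not at all, which Glynn establishes by manipulating the binary expansions of $k$ modulo $2^m-1$.

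The main obstacle will be the Glynn cases: in contrast to the linear Singer case and the low-degree Segre case, the Glynn exponents do not admit a transparent reduction, and the original verification rests on delicate exponent arithmetic modulo $2^m-1$ together with the oddness of $m$. For the present paper the cleanest route is to cite Glynn's and Segre's hyperoval constructions together with Maschietti's equivalence, since all three ingredients are established in the literature and combine to give the claimed $2$-to-$1$ property at once.
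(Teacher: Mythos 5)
Your proposal follows essentially the same route as the paper, which offers no self-contained proof but simply derives the statement from Maschietti's equivalence together with the classical Singer, Segre and Glynn hyperoval/difference-set exponents, exactly as you do (with the welcome extras of the direct linearized argument for $k=2$ and the $\gcd$ checks). One small caution: your opening ``iff'' overstates Maschietti's theorem --- the $2$-to-$1$ property of $x\mapsto x^k+x$ does not by itself force $x\mapsto x^k$ to permute --- but since you only use the correct direction (these $k$ give monomial hyperovals, hence difference sets, hence $2$-to-$1$ maps), the argument is unaffected.
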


\subsection{Low degree polynomials}

Let $f(x)=\sum_{i=0}^{n}a_ix^i\in\gf_q[x]$, where $q=p^m$. In this subsection, we consider $2$-to-$1$ mappings of degree $\leq 4$ over $\gf_q$. It is clear that $f(x)\in\gf_q[x]$ is a $2$-to-$1$ mapping over $\gf_q$ if and only if so is  $f_1(x)=bf(x+c)+d$, where $b,c,d\in \gf_q$ with $b\neq0$. Hence, W.L.O.G, we consider $f(x)\in\gf_q[x]$ with \emph{normalized form}, i.e., $f(x)$ is monic ($a_n=1$), $f(0)=0$ ($a_0=0$), and when $\gcd(p,n)=1$, the coefficient of $x^{n-1}$ is $0$ ($a_{n-1}=0$).

\textbf{(A) $n\le3$.}

When $n=1$, $f(x)=x$ can not be a $2$-to-$1$ mapping. 

When $n=2$, let $f(x)=x^2+ax\in\gf_q[x]$, where $q=p^m$.  If $p=2$, then $f(x)$ is a $2$-to-$1$ mapping if and only if $a\neq0$. If $p\neq2$, then $f(x)$ is always a $2$-to-$1$ mapping. 

When $n=3$, consider $f(x)=x^3+a_2x^2+a_1x\in \gf_q[x]$, where $q\geq 5$. 
\mqu{Let $b\in\gf_q$. Since $f$ is a cubic polynomial,  $f(x)=b$  has generally either $0$ or $1$ or $3$ solutions in $\gf_q$.   And $f(x)=b$ has two solutions in $\gf_q$ if and only if one of its solutions has multiplicity $2$, while this case only occurs for at most two values of $b$ when $a_2$ and $a_1$ are fixed. Hence $f$ can not be    $2$-to-$1$ if $q\geq 7$. It is shown by an exhaustive search that there exists ten $2$-to-$1$ polynomials with such form over  $\gf_5$ : $f(x)=x^3\pm 2x$, $f(x)=x^3\pm x^2+4x$, 
	$f(x)=x^3\pm 2 x^2+x$, and $f(x)=x^3+c x^2$, $c\in \gf_5^\ast$. 	
}

\textbf{(B) $n=4$.}

We divide the discussion into three cases according to the characteristic of the field. In more details, for the cases of $p=2$, $p=3$ and $p>3$. 
The following lemma will be needed. 
\begin{lemma}
	\label{lem_deg_3}\cite{BRS}\cite{W}
	Let $a, b \in \gf_{q}$, where $q=p^m$ and $b\neq0$. Then the cubic equation $x^3+ax+b=0$ has a unique solution in $ \gf_{q}$ if and only if
	one of the following holds
	\begin{enumerate}
		\item  $p=2$ and $\Tr_{p^m/p} \left(\frac{a^3}{b^2}\right) \neq \Tr_{p^m/p}(1)$;
		\item $p=3$, $a=0$ or \mqu{$-a$} is a non-square in $\gf_q$; 
		\item $p>3$, $-4a^3-27b^2$ is a non-square in $\gf_q$.
	\end{enumerate}
\end{lemma}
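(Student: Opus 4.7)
The plan is to split the argument along the three characteristic cases, since they behave very differently and no single unified proof is natural. For $p>3$, I would argue via the discriminant and Galois theory: the discriminant of $x^3+ax+b$ is $\Delta=-4a^3-27b^2$, and when $\Delta\neq 0$ the polynomial $f$ is separable with cyclic Galois group over $\gf_q$ (since absolute Galois groups of finite fields are procyclic). Being a cyclic subgroup of $S_3$, this Galois group must be the trivial group, $C_2$, or $C_3=A_3$, corresponding respectively to $3$, $1$ and $0$ roots of $f$ in $\gf_q$. Standard Galois theory says that the Galois group sits inside $A_3$ if and only if $\Delta$ is a square in $\gf_q$, so $f$ has exactly one root in $\gf_q$ if and only if the Galois group is $C_2$, i.e., $\Delta$ is a non-square. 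A short separate check disposes of the degenerate case $\Delta=0$: with $b\neq 0$ one shows the double root must lie in $\gf_q$, so $f$ has two distinct roots in $\gf_q$, and the stated condition is correctly false.

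For $p=3$, the key observation is that the auxiliary polynomial $g(x):=x^3+ax$ is $\gf_3$-linear, because $(x+y)^3=x^3+y^3$ in characteristic $3$ and the cross terms $3x^2h$, $3xh^2$ vanish. Writing $f(x)=g(x)+b$ turns $f(x)=0$ into the affine equation $g(x)=-b$, which has a unique solution in $\gf_q$ if and only if $\ker g=\{0\}$. Since $\ker g$ consists of $0$ together with the roots of $x^2+a$, this kernel is trivial exactly when $a=0$ or $-a$ is a non-square in $\gf_q$. When $|\ker g|=3$, $g$ is $3$-to-$1$ onto its image, so $g(x)=-b$ can only have $0$ or $3$ solutions and never a unique one, so the equivalence is clean.

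For $p=2$ the natural approach is the classical substitution $x=y+a/y$ (valid when $a\neq 0$; the case $a=0$ reduces to $x^3=b$ and is handled separately, the condition $\Tr(a^3/b^2)\neq\Tr(1)$ becoming $m$ odd, i.e., cubing being a bijection on $\gf_{2^m}$). A short computation in characteristic $2$ gives $x^3+ax=y^3+a^3/y^3$, so $f(x)=0$ transforms into the quadratic $w^2+bw+a^3=0$ in the unknown $w=y^3$. This quadratic is solvable in $\gf_{2^m}$ if and only if $\Tr(a^3/b^2)=0$; the existence of cube roots of $w$ in $\gf_{2^m}$ depends on $\gcd(3,2^m-1)$, i.e., on the parity of $m$; and finally one has to remember that the pair $\{y,a/y\}$ produces the same $x$ under the substitution. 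I expect the main obstacle to be the bookkeeping in this $p=2$ case: one has to carefully track whether $y$ lies in $\gf_{2^m}$ or in the quadratic extension $\gf_{2^{2m}}$ while still producing $x\in\gf_{2^m}$, and verify that the resulting count of $x$'s in $\gf_{2^m}$ equals $1$ in exactly those subcases where $\Tr(a^3/b^2)\neq\Tr(1)$.
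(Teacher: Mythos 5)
The paper does not actually prove this lemma; it is quoted verbatim from Berlekamp--Rumsey--Solomon \cite{BRS} (the $p=2$ case) and Williams \cite{W} (the odd-characteristic cases), so there is no internal proof to compare against. Judged on its own, your plan is the standard one and two of the three cases are essentially complete. For $p>3$ the Galois-theoretic argument is correct: a separable cubic over $\gf_q$ has $3$, $1$ or $0$ roots according as its (cyclic) Galois group is trivial, $C_2$ or $A_3$, and the group lies in $A_3$ iff $\Delta=-4a^3-27b^2$ is a square; your treatment of $\Delta=0$ (double root $r=-3b/(2a)\in\gf_q$, hence exactly two distinct roots, while $0$ counts as a square so the stated condition correctly fails) closes the degenerate case. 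For $p=3$ the reduction to the $\gf_3$-linear map $g(x)=x^3+ax$ is clean: $g$ is injective iff $x^2=-a$ has no nonzero solution, i.e.\ iff $a=0$ or $-a$ is a non-square, and when $\lvert\ker g\rvert=3$ the fibre sizes are $0$ or $3$, never $1$. Both arguments are correct as written.

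The one genuine gap is the $p=2$ case, which you set up correctly (the substitution $x=y+a/y$ gives $x^3+ax=y^3+a^3/y^3$, hence $w^2+bw+a^3=0$ with $w=y^3$, solvable in $\gf_{2^m}$ iff $\Tr_{2^m/2}(a^3/b^2)=0$) but then stop at exactly the point where the work lies. What remains is to convert the data ``location of $w_1,w_2$ (in $\gf_{2^m}$ or in $\gf_{2^{2m}}\setminus\gf_{2^m}$) together with whether $w_1$ is a cube there'' into the count of roots $x\in\gf_{2^m}$: the three roots of the cubic in $\overline{\gf}_2$ are $x_j=\epsilon^j y_0+a/(\epsilon^j y_0)$ with $y_0^3=w_1$ and $\epsilon^3=1$, and one must determine, in each of the four subcases $(\Tr(a^3/b^2),\ m\bmod 2)$, how many of these are fixed by the Frobenius $t\mapsto t^{2^m}$ (using $w_1w_2=a^3$, so $a/y_0$ is a cube root of $w_2$, and the fact that $\epsilon\in\gf_{2^m}$ iff $m$ is even). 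This is precisely the bookkeeping carried out in \cite{BRS}, and the conclusion ``exactly one root iff $\Tr(a^3/b^2)\neq\Tr(1)$'' does not follow from anything you have written so far; until it is done, the $p=2$ statement is asserted rather than proved. Your side remarks there are fine: when $a=0$ the equation $x^3=b$ has a unique root iff $\gcd(3,2^m-1)=1$ iff $m$ is odd, which matches $\Tr(0)=0\neq\Tr(1)$.
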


\textbf{(B.1) $p=2$.}

\begin{theorem}
	Let $q=2^m$ and $f(x)=x^4+a_3x^3+a_2x^2+a_1x\in\gf_q[x]$. Then $f(x)$ is $2$-to-$1$ if and only if one of the following holds:
	\begin{enumerate}
		\item $a_3=a_1=0$, $a_2\neq0$;
		\item $a_3=0, a_1\neq0$ and $\Tr_{2^m/2}\left(\frac{a_2^3}{a_1^2}\right)\neq\Tr_{2^m/2}(1)$;
		\item $m$ is odd, $a_3\neq0$ and $a_2^2=a_1a_3$.
	\end{enumerate}
\end{theorem}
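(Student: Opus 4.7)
The plan is to analyze the first-order difference $D_a f(x) := f(x+a) + f(x)$, which in characteristic $2$ expands to
$$D_a f(x) = a_3 a\, x^2 + a_3 a^2\, x + C(a),\qquad C(a) := a^4 + a_3 a^3 + a_2 a^2 + a_1 a,$$
and to split the proof into the cases $a_3 = 0$ and $a_3 \neq 0$.

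If $a_3 = 0$, then $D_a f$ is the constant $C(a) = a(a^3 + a_2 a + a_1)$, so the set $S := \{a \in \gf_{2^m} : f(x+a) = f(x)\text{ for all }x\} = \{a : C(a) = 0\}$ is an additive subgroup of $\gf_{2^m}$ and $f$ is exactly $|S|$-to-$1$. When $a_1 = 0$, $C(a) = a^2(a^2+a_2)$ gives $|S| = 2$ iff $a_2 \neq 0$, which is condition~1. When $a_1 \neq 0$, the cubic $X^3 + a_2 X + a_1$ does not vanish at $0$, and $|S| = 2$ iff it has a unique root in $\gf_{2^m}$; by Lemma~\ref{lem_deg_3}(1) this amounts to $\Tr_{2^m/2}(a_2^3/a_1^2) \neq \Tr_{2^m/2}(1)$, i.e.\ condition~2.

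For $a_3 \neq 0$, I would pass to the symmetric coordinates $s = x+y$, $p = xy$ and observe $f(x) + f(y) = s\,[\,s^3 + a_3 s^2 + a_3 p + a_2 s + a_1\,]$. Hence for each $s \neq 0$ the value $p = p(s) := (s^3 + a_3 s^2 + a_2 s + a_1)/a_3$ is forced, and an unordered pair $\{x,y\} \subset \gf_{2^m}$ with $x+y=s$ and $f(x)=f(y)$ exists iff $Z^2 + sZ + p(s)$ splits over $\gf_{2^m}$, iff $\Tr_{2^m/2}(p(s)/s^2) = 0$. Since distinct coincidence pairs yield distinct $s$, these ``good'' $s$ count the coincidences of $f$ exactly. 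Using $\Tr(z^2) = \Tr(z)$, the condition simplifies to
$$\Tr_{2^m/2}\bigl(\tfrac{s}{a_3} + \tfrac{\gamma}{s}\bigr) = \Tr_{2^m/2}(1),\qquad \gamma := \tfrac{a_2}{a_3} + \sqrt{\tfrac{a_1}{a_3}},$$
and $\gamma = 0$ is equivalent to $a_2^2 = a_1 a_3$.

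If $\gamma \neq 0$, I would produce a forbidden fiber at the unique zero $\beta := \sqrt{a_1/a_3}$ of $f'(z) = a_3(z+\beta)^2$: the direct expansion $f(z+\beta) - f(\beta) = z^2(z^2 + a_3 z + a_3\gamma)$ shows $\beta$ is a root of $f(z) - f(\beta)$ of multiplicity exactly $2$, so $f(z) - f(\beta) = (z+\beta)^2 g(z)$ with $g$ quadratic whose two remaining roots either both lie in $\gf_{2^m}$ or neither do; hence $\#f^{-1}(f(\beta)) \in \{1,3\}$ and $f$ is not $2$-to-$1$. If $\gamma = 0$, the good-$s$ condition reduces to the linear equation $\Tr(s/a_3) = \Tr(1)$. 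For $m$ odd, $\Tr(1) = 1$ and there are exactly $2^{m-1}$ good nonzero $s$; a triangle obstruction rules out any fiber of size $\geq 3$, since three pairwise sums $s_{12}, s_{13}, s_{23}$ from such a fiber satisfy $s_{12} + s_{13} + s_{23} = 0$, contradicting $\Tr(s_{ij}/a_3) = 1$ for all three. Thus every fiber has size $\leq 2$, and the count forces the $2^{m-1}$ pairs to partition $\gf_{2^m}$, making $f$ $2$-to-$1$. For $m$ even, $\Tr(1) = 0$ yields only $2^{m-1} - 1$ good nonzero $s$, one pair short, so $f$ is not $2$-to-$1$; this completes condition~3. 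The main obstacle is the $\gamma \neq 0$ subcase, where one must locate the bad fiber via the formal derivative of $f$.
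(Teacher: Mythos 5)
Your proof is correct, and for the main case $a_3\neq 0$ it takes a genuinely different route from the paper's. For $a_3=0$ the two arguments coincide: $f$ is additive, so one counts its kernel and invokes Lemma~\ref{lem_deg_3}. For $a_3\neq 0$ the paper asks, for each $b$, when the cubic $x^3+a_3x^2+(ba_3+a_2)x+(b^2a_3+a_1)$ has exactly one nonzero root; the value $b=\sqrt{a_1/a_3}$ forces $a_2^2=a_1a_3$, and sufficiency then rests on Lemma~\ref{lem_deg_3} combined with the computational trace identity $\Tr_{2^m/2}(u^3/v^2)=0$ of Claim~1. You instead parametrize the unordered coincidence pairs $\{x,y\}$ by $s=x+y$, reducing everything to counting the $s\in\gf_{2^m}^\ast$ with $\Tr_{2^m/2}(s/a_3+\gamma/s)=\Tr_{2^m/2}(1)$, where $\gamma=a_2/a_3+\sqrt{a_1/a_3}$; I checked the symmetric-function identity, the trace simplification via $\Tr_{2^m/2}(z^2)=\Tr_{2^m/2}(z)$, and the factorization $f(z+\beta)+f(\beta)=z^2(z^2+a_3z+a_3\gamma)$, and they all hold. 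Your necessity step for $\gamma\neq 0$ is essentially the paper's in disguise (same distinguished point $\beta=\sqrt{a_1/a_3}$, but with the fibre size pinned cleanly at $1$ or $3$), whereas your sufficiency argument is genuinely new: for $\gamma=0$ the good $s$ form the affine hyperplane $\Tr_{2^m/2}(s/a_3)=\Tr_{2^m/2}(1)$, the pair count $2^{m-1}$ versus $2^{m-1}-1$ separates odd $m$ from even $m$, and the identity $s_{12}+s_{13}+s_{23}=0$ for three elements of a common fibre rules out fibres of size at least $3$ precisely when $\Tr_{2^m/2}(1)=1$. This buys a proof of Case~3 that avoids both the cubic-root lemma and Claim~1 and gives a conceptual explanation of the parity condition on $m$; the paper's route is more mechanical but treats necessity and sufficiency within one uniform framework of root-counting for the shifted equation.
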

\begin{proof}
\mqu{First assume that $a_3=0$.  T}hen $f(x)=x^4+a_2x^2+a_1x$ is linearized. Hence $f(x)$ is $2$-to-$1$ if and only if $f(x)=x\left(x^3+a_2x+a_1\right)=0$ has exactly two solutions, which means that $x^3+a_2x+a_1=0$ has exactly one solution in $\gf_q^{*}$. 

If $a_1=0$, then $x^3+a_2x=0$, $x=0$ or $x^2=a_2$. Hence, $a_2\neq0$.

If $a_1\neq0$, then according to Lemma \ref{lem_deg_3}, we know that $x^3+a_2x+a_1=0$ has exactly one solution in $\gf_q^{*}$ if and only if $\Tr_{2^m/2}\left(\frac{a_2^3}{a_1^2}\right)\neq\Tr_{2^m/2}(1)$.

\mqu{Now} assume that $a_3\neq0$.  \mqu{Then  $f$ is $2$-to-$1$ if and only if } for any $b\in\gf_q$, we have 

\begin{equation}
\label{diff_b}
0=f(x+b)+f(b)=x^4+a_3\left(x^3+bx^2+b^2x\right)+a_2x^2+a_1x
\end{equation}
has exactly two solutions $x=0$ and $x=x_0\in\gf_q^{*}$,  \mqu{or equivalently, } 
\begin{equation}
\label{x_b_p21}
x^3+a_3x^2+\left(ba_3+a_2\right)x+\left(b^2a_3+a_1\right)=0.
\end{equation}
has exactly one solution in $\gf_q^{*}$ for \mqu{any} $b\in\gf_q$.

If $b^2a_3+a_1=0$, i.e., $b=a_1^{1/2}a_3^{-1/2}$, then Eq. (\ref{x_b_p21}) reduces to 

\begin{equation}
\label{x_b_p22}
x^2+a_3x+\left(ba_3+a_2\right)=0.
\end{equation}

Since Eq. (\ref{x_b_p22}) has exactly one solution in $\gf_q^{*}$, we have $ba_3+a_2=0$. Then with  $b^2a_3+a_1=0$, we know  that $a_2^2=a_1a_3.$  
Plugging it into Eq. (\ref{x_b_p21}), and letting $x=a_3+z$, we get
\begin{equation}
\label{z_b_1}
z^3+uz+v=0,
\end{equation}
where $u=a_3^2+ba_3+a_2$ and $v=b^2a_3+ba_3^2+a_2a_3+a_1.$ 

If $v=0$, i.e., $\left(ba_3^{1/2}+a_1^{1/2}\right)\left(ba_3^{1/2}+a_1^{1/2}+a_3^{3/2}\right)=0$, then $b=a_1^{1/2}a_3^{-1/2}$ or $b=a_1^{1/2}a_3^{-1/2}+a_3$.  When $b=a_1^{1/2}a_3^{-1/2}$, Eq. (\ref{x_b_p21}) has exactly one \mqu{nonzero} solution $x=a_3$. When $b=a_1^{1/2}a_3^{-1/2}+a_3$,  we have $u=0$. Therefore, $z=0$ is the unique solution of Eq. (\ref{z_b_1}). Moreover, Eq. (\ref{x_b_p21}) also has  exactly one solution $x=a_3$. 

Now we assume that  $v\neq0$.  According to Lemma \ref{lem_deg_3}, it suffices to prove   

$$\Tr_{2^m/2}\left(u^3/v^2\right)\neq\Tr_{2^m/2}(1).$$

{\bfseries Claim 1:} $\Tr_{2^m/2}\left(u^3/v^2\right)=0$, where  $u=a_3^2+ba_3+a_2$, $v=b^2a_3+ba_3^2+a_2a_3+a_1\neq 0$ and $a_2^2=a_1a_3$. 

\mqu{We have}

\begin{eqnarray*}
	\Tr_{2^m/2}\left(\frac{u^3}{v^2}\right) &=& \Tr_{2^m/2}\left(\frac{\left(a_3^2+ba_3+a_2\right)^3}{b^4a_3^2+b^2a_3^4+a_2^2a_3^2+a_1^2}\right) \\
	&=& \Tr_{2^m/2}\left(\frac{\left(a_3^4+b^2a_3^2+a_1a_3\right)^3}{b^4a_3^4\left(b^4+a_3^4\right)+a_1^2\left(a_3^3+a_1\right)^2}\right) \\
	&=& \Tr_{2^m/2}\left(\frac{A}{B^2}\right),
\end{eqnarray*}
where $$B=b^2a_3^2\left(b^2+a_3^2\right)+a_1\left(a_3^3+a_1\right)$$ 
and $$A=\left(b^2a_3^4+a_3^6+a_1a_3^3\right)B+\left(b^2a_3^4+a_3^6+a_1a_3^3\right)^2.$$
Let $U=b^2a_3^4+a_3^6+a_1a_3^3$. Then $A=UB+U^2.$ Thus,  

$$\Tr_{2^m/2}\left(\frac{u^3}{v^2}\right)=\Tr_{2^m/2}\left(\frac{A}{B^2}\right)=\Tr_{2^m/2}\left(\frac{UB+U^2}{B^2}\right)=\Tr_{2^m/2}\left(\frac{U}{B}\right)+\Tr_{2^m/2}\left(\frac{U^2}{B^2}\right)=0.$$

From Claim 1, we know that $\Tr_{2^m/2}\left(u^3/v^2\right)\neq\Tr_{2^m/2}(1)$ if and only if $m$ is odd. The proof is completed.
\end{proof}

\textbf{(B.2) $p=3$.}

\begin{theorem}\label{th_d4p3}
	Let $f(x)=x^4+a_2x^2+a_1x\in\gf_{3^m}[x]$, where $m>1$. Then $f(x)$ is $2$-to-$1$ over $\gf_{3^m}$ if and only if $a_2=a_1=0$ and $m$ is odd.
\end{theorem}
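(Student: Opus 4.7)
My plan is to establish both directions, with sufficiency immediate and necessity requiring a case split on whether $a_1 = 0$.

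\textbf{Sufficiency.} When $a_2 = a_1 = 0$, the polynomial reduces to $f(x) = x^4$, which is $2$-to-$1$ on $\gf_{3^m}^{\ast}$ iff $\gcd(4, 3^m - 1) = 2$, iff $3^m \equiv 3 \pmod 4$, iff $m$ is odd; combined with the singleton fibre $f^{-1}(0) = \{0\}$, this yields the $2$-to-$1$ property.

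\textbf{Necessity, case $a_1 = 0$.} Here $f(x) = x^2(x^2+a_2)$ is even, so every nonzero fibre contains $\{x,-x\}$, and the factorisation $f(x) - f(y) = (x^2 - y^2)(x^2 + y^2 + a_2)$ shows that $f$ being $2$-to-$1$ requires $-x^2 - a_2$ to be either zero or a non-square for every $x$ with $x^2 \neq a_2$. If $a_2 \neq 0$, the classical character-sum identity
$$\sum_{x \in \gf_{3^m}} \chi(-x^2 - a_2) = -\chi(-1) \in \{\pm 1\}$$
(with $\chi$ the quadratic character of $\gf_{3^m}$) is incompatible with this sign pattern once $3^m \geq 9$: the number of $x$ giving $\chi(-x^2 - a_2) = 1$ would be bounded by a constant (coming from the at-most-two solutions of $x^2 = a_2$), forcing the above sum to be essentially $-3^m$ rather than $\pm 1$. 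Hence $a_2 = 0$, and the sufficiency analysis then forces $m$ odd.

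\textbf{Necessity, case $a_1 \neq 0$.} I would use the expansion $f(x+b) - f(b) = x \cdot c_b(x)$ with
$$c_b(x) = x^3 + bx^2 + a_2 x + f'(b), \qquad f'(b) = b^3 - a_2 b + a_1,$$
so that $f$ being $2$-to-$1$ is equivalent to the following: for every $b$, the cubic $c_b$ has exactly one nonzero root in $\gf_{3^m}$, with a single exceptional $b^{\ast}$ yielding no nonzero root. Computing the discriminant of $c_b$ in characteristic $3$ gives
$$\Delta_b = -(b^2 - a_2)^2 (b^2 + a_2) - a_1 b^3,$$
a degree-$6$ polynomial in $b$; the Galois criterion for cubics over $\gf_{3^m}$ then says $c_b$ has exactly one root iff $\Delta_b$ is a non-square, so the $2$-to-$1$ assumption forces $\chi(\Delta_b) = -1$ for almost every $b$. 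I would derive the contradiction by evaluating $\sum_b \chi(\Delta_b)$ via the factored form of $\Delta_b$ combined with the Weil-type bound $|\sum_b \chi(\Delta_b)| \leq 5\sqrt{3^m}$, and handle residual small-$m$ cases by exhibiting an explicit $b$ at which $c_b$ splits, producing a fibre of size $4$ in $f$. The main obstacle is this final step, since the Weil bound alone is only decisive for $m$ sufficiently large; small $m$ will need structural input from the factorisation of $\Delta_b$, for instance via the affine normal form $W(w) = w^4 - w^3$ (to which $f$ reduces after an affine change of variables) and the factorisation $w^4 - w^3 + 1 = (w-2)(w^3 + w^2 + 2w + 1)$, whose irreducible cubic factor yields a fibre of size $4$ in $W$ whenever $3 \mid m$.
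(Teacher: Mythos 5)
Your overall strategy for the main case is the same as the paper's: show that the $2$-to-$1$ property forces a fixed sextic in $b$ to be a quadratic non-residue for all but $O(1)$ values of $b$, contradict this via the Weil bound $\bigl|\sum_b\eta(\cdot)\bigr|\le 5\sqrt q$ once $q-10>5\sqrt q$, and dispose of the surviving small fields by a finite check. The interesting divergence is how you decide when the cubic $c_b(x)=x^3+bx^2+a_2x+f'(b)$ has exactly one root: since the cubic cannot be depressed in characteristic $3$, the paper applies the reciprocal substitution $x\mapsto 1/x+a_2/b$ to turn $c_b$ into an affine polynomial $A_3x^3+bx+1$ and reads off the root count from the kernel of its linearized part, whereas you invoke the Stickelberger/discriminant criterion directly; that criterion is valid for separable cubics in characteristic $3$, and your $\Delta_b$ coincides, up to sign and square factors, with the sextic the paper calls $h(b)$. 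Your separate elementary treatment of the subcase $a_1=0$ via $\sum_x\chi(-x^2-a_2)=-\chi(-1)$ is a genuine shortcut the paper does not take (it absorbs that subcase into the general argument through its lemma that $h$ is a non-square polynomial unless $a_1=a_2=0$); symmetrically, you still owe the verification that $\Delta_b$ is not a square in $\overline{\gf}_q[b]$ when $a_1\neq0$ before Lemma \ref{char_le} applies — routine from $\Delta_b'=a_2b(b^2-a_2)$, but it must be said. You should also note that the finitely many $b$ with $c_b$ inseparable or with $f'(b)=0$ (where ``one root'' and ``one nonzero root'' differ) only perturb the character sum by a constant.

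The one step that does not survive scrutiny is your endgame for small $m$. The claim that $f$ reduces to the single normal form $w^4-w^3$ under an affine change of variables is false: once $f$ is normalized to be monic with $f(0)=0$ and no cubic term, the only remaining freedom is $x\mapsto\alpha x$, $f\mapsto\alpha^{-4}f$, which sends $(a_2,a_1)$ to $(\alpha^{-2}a_2,\alpha^{-3}a_1)$; this normalizes $a_1$ to $1$ (cubing is bijective in characteristic $3$) but leaves $a_2$ as a free parameter, and $w^4-w^3$ is not even of the admissible shape. Moreover your proposed witness only addresses $3\mid m$, while the Weil inequality $q-10\le 5\sqrt q$ leaves both $q=9$ and $q=27$ unresolved. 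The clean repair is the one the paper uses: an exhaustive search over the pairs $(a_1,a_2)\in\gf_9^2$ and $\gf_{27}^2$, which finds no $2$-to-$1$ polynomial of this form with $(a_1,a_2)\neq(0,0)$. With that substitution, and the two small verifications noted above, your argument is complete.
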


Before proving this theorem, we give two lemmas. 

\begin{lemma}(\cite[Theorem 6.2.2]{MP13})
	\label{char_le}
	Let $f\in\gf_q[x]$ be a polynomial of degree $d>0$ and $\chi: \gf_q^{*}\to\mathbb{C}^{*} $ a non-trivial multiplicative character of order $m$ (extended by zero to $\gf_{q}$). If $f$ is not an $m$-th power in $\overline{\gf}_q[x]$,  then
	$$\left|\sum_{x\in\gf_{q}}\chi(f(x))\right|\le (d-1)\sqrt{q}.$$
\end{lemma}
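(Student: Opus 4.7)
The plan is to establish this as an instance of the Weil bound (Riemann hypothesis for curves over finite fields), by relating the character sum to a point count on a Kummer-type cover of the projective line, and then refining via an $L$-function decomposition.

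\textbf{Step 1: Reduction to counting points.} Since $\chi$ has order exactly $m$ on $\gf_q^\ast$, we must have $m\mid q-1$, hence $\gcd(m,p)=1$ where $p$ is the characteristic. Consider the affine curve $C:y^m=f(x)$ over $\gf_q$. By orthogonality of multiplicative characters,
\begin{equation*}
\#\{y\in\gf_q:\, y^m=a\}=\sum_{j=0}^{m-1}\chi^j(a),\qquad a\in\gf_q,
\end{equation*}
under the convention $\chi^0(0)=1$ and $\chi^j(0)=0$ for $1\le j\le m-1$. Summing over $x\in\gf_q$ yields
\begin{equation*}
N:=\#C(\gf_q)=q+\sum_{j=1}^{m-1}S_j,\qquad S_j:=\sum_{x\in\gf_q}\chi^j(f(x)).
\end{equation*}

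\textbf{Step 2: Geometric setup and Weil's bound.} The hypothesis that $f$ is not an $m$-th power in $\overline{\gf}_q[x]$, together with $\gcd(m,p)=1$, forces $y^m-f(x)$ to be absolutely irreducible; hence $C$ is geometrically integral and admits a smooth projective model $\widetilde C$ of some genus $g$. By Weil's theorem, $|\#\widetilde C(\gf_q)-(q+1)|\le 2g\sqrt{q}$. The projection $(x,y)\mapsto x$ realizes $\widetilde C\to\mathbb{P}^1$ as a Kummer cover of degree $m$; writing $f(x)=c\prod_i(x-\alpha_i)^{e_i}$ with distinct $\alpha_i$ and $\sum_i e_i=d$, Riemann-Hurwitz gives
\begin{equation*}
2g-2=-2m+\sum_i\bigl(m-\gcd(m,e_i)\bigr)+\bigl(m-\gcd(m,d)\bigr),
\end{equation*}
which after bookkeeping yields the crude bound $g\le(d-1)(m-1)/2$.

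\textbf{Step 3: Individual bound via $L$-function decomposition.} Feeding $g\le(d-1)(m-1)/2$ directly into Step 2 only controls the aggregate $\sum_{j=1}^{m-1}S_j$, which is too weak by a factor $m-1$. To isolate a single $S_j$, I would exploit the fact that the cyclic group $\mu_m$ acts as deck transformations on $\widetilde C\to\mathbb{P}^1$, so the zeta function decomposes as
\begin{equation*}
Z(\widetilde C,T)=\frac{1}{(1-T)(1-qT)}\prod_{j=1}^{m-1}L(T,\chi^j),
\end{equation*}
where each $L(T,\chi^j)$ is a polynomial of degree at most $d-1$ whose reciprocal roots have absolute value $\sqrt q$ (Weil's theorem applied to each isotypic component). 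Reading the linear coefficient of $L(T,\chi^j)$ as the trace of Frobenius on the corresponding piece recovers $S_j$ as a sum of at most $d-1$ complex numbers of modulus $\sqrt q$, yielding $|S_j|\le(d-1)\sqrt q$.

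\textbf{Main obstacle.} The genuine difficulty lies in Step 3: extracting the individual character sums from the global point count and in particular establishing $\deg L(T,\chi^j)\le d-1$. This requires the Galois-theoretic structure of the Kummer cover (equivalently, decomposing the $\ell$-adic cohomology $H^1(\widetilde C)$ into $\chi^j$-isotypic components and bounding each dimension in terms of the ramification data of $f$), rather than just Riemann-Hurwitz. The preceding steps are standard unwinding of definitions.
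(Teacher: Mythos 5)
The paper does not actually prove this lemma: it is the classical Weil bound for multiplicative character sums, imported verbatim from \cite[Theorem 6.2.2]{MP13}, so there is no internal argument to measure your sketch against and it must stand on its own. Your route (Kummer cover, zeta function, isotypic decomposition) is the standard one, but Step 2 contains a genuinely false claim: the hypothesis that $f$ is not an $m$-th power in $\overline{\gf}_q[x]$ does \emph{not} force $y^m-f(x)$ to be absolutely irreducible. Take $m=4$ and $f=g^2$ with $g$ squarefree of positive degree: $f$ is not a fourth power, yet $y^4-f=(y^2-g)(y^2+g)$ splits, the curve is reducible, and indeed $S_2=\sum_{x}\chi^2(g(x)^2)=\#\{x: g(x)\neq 0\}$ is of size about $q$, so the aggregate $N-q=\sum_j S_j$ really does contain main terms. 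Absolute irreducibility would require $f$ not to be an $\ell$-th power in $\overline{\gf}_q[x]$ for \emph{every} prime divisor $\ell$ of $m$, which is strictly stronger than what is assumed. What the stated hypothesis actually gives --- and all the proof needs --- is that some root of $f$ has multiplicity $e_i\not\equiv 0\pmod m$, i.e.\ that the character $\chi\circ f$ of the rational function field is ramified at some finite place; so the argument must be run on the single Artin--Hecke $L$-function $L(T,\chi)$ (equivalently on the irreducible cyclic cover cut out by $\ker(\chi\circ f)$), not on the global curve $y^m=f(x)$.

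Beyond that, Step 3 --- which you correctly identify as the crux --- is only described, not carried out: the polynomiality of $L(T,\chi)$, the degree bound $\deg L\le d-1$ coming from the conductor (all ramification is tame here since $m\mid q-1$ forces $\gcd(m,p)=1$, so each of the at most $d$ ramified finite places and the place at infinity contributes $1$ to the conductor, giving $\deg L=\deg\mathfrak{f}-2\le d-1$), and the fact that its inverse roots have modulus $\sqrt q$ are precisely the content of Weil's theorem being quoted. As it stands the proposal is a reasonable plan for the standard proof with one incorrect intermediate assertion and the essential step missing; it is not yet a proof.
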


\begin{lemma}
	\label{h(x)_square}
	Let $h(x)=x^6+2a_2x^4+a_1x^3+a_2^3\in\gf_{q}[x]$, where $q=3^m$. Then $h(x)$ is a square in $\overline{\gf}_q[x]$ if and only if $a_1=a_2=0.$
\end{lemma}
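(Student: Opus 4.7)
The plan is to reduce the question to a direct coefficient comparison. Since $h$ is monic of degree $6$, if $h=g^2$ in $\overline{\gf}_q[x]$ then $g$ must be a monic polynomial of degree $3$, say $g(x)=x^3+bx^2+cx+d$ with $b,c,d\in\overline{\gf}_q$. The whole content of the lemma is then to expand $g(x)^2$ in characteristic $3$ and match coefficients with
\[
h(x)=x^6+0\cdot x^5+2a_2x^4+a_1x^3+0\cdot x^2+0\cdot x+a_2^3.
\]

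The one-direction implication is immediate: if $a_1=a_2=0$ then $h(x)=x^6=(x^3)^2$. For the converse, I would expand
\[
g(x)^2=x^6+2bx^5+(b^2+2c)x^4+(2bc+2d)x^3+(c^2+2bd)x^2+2cd\,x+d^2
\]
(all in characteristic $3$) and read off the forced values of $b,c,d$ step by step: the $x^5$-coefficient forces $b=0$, and then the $x^4$-coefficient gives $c=a_2$ and the $x^3$-coefficient gives $d=2a_1$. These three relations already pin $g$ down uniquely in terms of $a_1,a_2$.

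The decisive step is then the $x^2$-coefficient $c^2+2bd=0$: with $b=0$ it collapses to $c^2=0$, hence $c=0$, and combined with $c=a_2$ this gives $a_2=0$. Substituting $a_2=0$ into the constant-term equation $d^2=a_2^3$ forces $d=0$, and via $d=2a_1$ this yields $a_1=0$. The $x$-coefficient relation $2cd=0$ is then automatic, so the system is consistent precisely when $a_1=a_2=0$, completing the converse.

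The argument is essentially just bookkeeping in characteristic $3$; the only conceptual point to guard against is justifying that $g$ can be assumed to live in $\overline{\gf}_q[x]$ (which follows from unique factorization in $\overline{\gf}_q[x]$) and to have the stated shape (monic of degree $3$, from the degree and leading coefficient of $h$). There is no genuine obstacle beyond making sure the coefficients of $x^5,x^4,x^3,x^2$ and the constant term are treated in that order so that the chain of implications $b=0\Rightarrow c=a_2\Rightarrow a_2=0\Rightarrow a_1=0$ is unambiguous.
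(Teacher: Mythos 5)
Your proof is correct, and it takes a genuinely different route from the paper's. You argue by writing $g(x)=x^3+bx^2+cx+d$ and matching coefficients of $g(x)^2$ against $h(x)$ in characteristic $3$; the chain $b=0$, $c=a_2$, $d=2a_1$, then $c^2+2bd=0\Rightarrow a_2=0$ and $d^2=a_2^3\Rightarrow a_1=0$ all checks out (the only point worth stating explicitly is that the leading coefficient of $g$ squares to $1$, so after replacing $g$ by $-g$ if necessary you may take $g$ monic). The paper instead differentiates: from $h=g^2$ it gets $h'=2gg'$, hence $g\mid\gcd(h,h')$; since $h'(x)=2a_2x^3$ and $h(0)=a_2^3$, the case $a_2\neq0$ gives $\gcd(h,h')=1$, contradicting $\deg g=3$, and the remaining case $h(x)=x^3(x^3+a_1)=x^3(x+a_1^{1/3})^3$ forces $a_1=0$ by parity of multiplicities. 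The derivative argument is slicker and avoids the expansion, but it needs the extra factorization step precisely because $h'$ vanishes identically when $a_2=0$; your coefficient bookkeeping is more mechanical but entirely self-contained and handles both cases uniformly in a single system of equations. Either proof is acceptable.
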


\begin{proof}
	Let $h(x)=g(x)^2$, where $g(x)\in\overline{\gf}_q[x]$. Hence, $h'(x)=2g(x)g'(x)$, it follows that $g(x)\mid \gcd\left(h(x),h'(x)\right)$. On the other hand, $h'(x)=2a_2x^3$. Therefore, if $a_2\neq 0$, then  $\gcd\left(h(x),h'(x)\right)=\gcd\left(x^6+2a_2x^4+a_1x^3+a_2^3,2a_2x^3\right)= 1$, which is impossible. Thus $a_2=0$ and $h(x)=x^3(x^3+a_1)$.  Then it follows that   $a_1=0$, $g(x)=x^3$ and $h(x)=x^6$. We are done.  
\end{proof}

\mqu{ \bfseries Proof of Theorem \ref{th_d4p3}.}
 Assume that $f$ is $2$-to-$1$. If $a_2=a_1=0$, then $f(x)=x^4$ is $2$-to-$1$ over $\gf_q$ if and only if $\gcd(4, 3^m-1)=2$, or equivalently, if and only if $m$ is odd. 

Now we assume that $(a_1, a_2)\ne (0, 0)$. 

\mqu{Since  $f$ is $2$-to-$1$,  then for all but one $b$ in $\gf_q$,}
$$0=f(x+b)-f(b)=x^4+bx^3+b^3x+a_2\left(2bx+x^2\right)+a_1x$$
has exactly two solutions $x=0$ or $x=x_0\in\gf_q^{*}$, which means 
\begin{equation}
\label{x_b_1}
x^3+bx^2+a_2x+\left(b^3+2a_2b+a_1\right)=0
\end{equation}
has exactly a unique solution in $\gf_q^{*}$  \mqu{for all but one $b$ in $\gf_q$.
For convenience, we denote by $b_0$ this exceptional element. }


\mqu{Now let $b\neq 0$ and $b\neq b_0$.}  Let $g(x)=x^3+bx^2+a_2x+\left(b^3+2a_2b+a_1\right)$, and let  
$$\tilde{g}(x)=x^3g\left(\frac{1}{x}+\frac{a_2}{b}\right)=A_3x^3+bx+1,$$
where $A_3=\frac{a_2^3}{b^3}+b^3+2a_2b+a_1.$ It is clear that $g(x)$ has exactly one solution in $\gf_q^{*}$ if and only if $\tilde{g}(x)$ has exactly one solution.
Since $\tilde{g}(x)$ is affine over $\gf_{3^m}$,  it has exactly one solution  in $\gf_q^\ast$  if and only if $A_3=0$ or $\eta\left(-\frac{b}{A_3}\right)=-1$ if $A_3\neq0$. Therefore, when $A_3\neq0$,  $\eta\left(-\frac{b}{A_3}\right)=-1.$ Furthermore, 
$$\eta\left(-\frac{b}{A_3}\right)=\eta(-1)\eta\left(b^6+2a_2b^4+a_1b^3+a_2^3\right)=-1.$$

Let $h(b)=b^6+2a_2b^4+a_1b^3+a_2^3$. Then the above discussion leads to 
$$\left|\sum_{b\in\gf_{q}}\eta(h(b))\right|\ge q-6-2\cdot 2=q-10.$$
On the other hand, according to Lemmas \ref{char_le} and \ref{h(x)_square}, we have $h(b)$ is not a square and 
$$\left|\sum_{b\in\gf_{q}}\eta(h(b))\right|\le 5\sqrt{q}.$$

Thus, $q-10\le 5\sqrt{q}$, where $q=3^m$. Hence we have $m\le3$. An \mqu{exhaustive} search over $\gf_{3^2}$ and $\gf_{3^3}$ found that there is no $2$-to-$1$ function with the form of $f(x)=x^4+a_2x^2+a_1x$, where $\left(a_2,a_1\right)\neq(0,0)$. The proof is completed.$\hfill\Box$

\textbf{(B.3)  $p\geq 5$. }

\begin{theorem}
	Let $q=p^m$, where $p\geq 5$ and $f(x)=x^4+a_2x^2+a_1x\in\gf_q[x]$. Then $f(x)$ is $2$-to-$1$ if and only if one of the following holds:
		\begin{enumerate}
			\item $a_1=a_2=0$, $\gcd(4,q-1)=2$, i.e., $q\equiv3\pmod4$;
			\item $q=5$, $f(x)=x^4+x^2\pm2x$, or $f(x)=x^4-x^2\pm x$ or $f(x)=x^4\pm2x^2$;
			\item $q=7$, $f(x)=x^4\pm2x$.
		\end{enumerate}
\end{theorem}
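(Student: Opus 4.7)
The plan is to adapt the character-sum strategy from the proof of Theorem \ref{th_d4p3} to characteristic $p \geq 5$. If $a_1 = a_2 = 0$, then $f(x) = x^4$ is $2$-to-$1$ over $\gf_q$ iff $\gcd(4, q-1) = 2$, i.e., $q \equiv 3 \pmod{4}$, which yields item (1). So assume $(a_1, a_2) \neq (0, 0)$. Taylor expansion gives
\[
f(x + b) - f(b) = x \cdot g_b(x), \qquad g_b(x) := x^3 + 4bx^2 + (6b^2 + a_2)x + (4b^3 + 2a_2 b + a_1).
\]
Since $p \geq 5$, one may depress $g_b$ via $x = y - 4b/3$ to $y^3 + P(b)y + Q(b)$ with $P(b) = \tfrac{2}{3}b^2 + a_2$ and $Q(b) = \tfrac{20}{27}b^3 + \tfrac{2}{3}a_2 b + a_1$, whose discriminant is
\[
D(b) := -4 P(b)^3 - 27 Q(b)^2 = -16 b^6 - 32 a_2 b^4 - 40 a_1 b^3 - 20 a_2^2 b^2 - 36 a_1 a_2 b - 4 a_2^3 - 27 a_1^2,
\]
a polynomial of degree $6$ in $b$. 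By part (3) of Lemma \ref{lem_deg_3}, whenever $D(b) \neq 0$, the cubic $g_b$ has exactly one root in $\gf_q$ iff $\eta(D(b)) = -1$, where $\eta$ is the quadratic character of $\gf_q$.

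A case analysis splitting on whether $g_b(0) = 0$ and on the root-multiplicity pattern of $g_b$ shows that if $f$ is $2$-to-$1$, then the number $N_+$ of $b \in \gf_q$ with $\eta(D(b)) = +1$ is at most $1$ (contributed only by an exceptional $b_0$ for which $g_{b_0}$ is irreducible over $\gf_q$), while the number $N_0$ of $b$ with $D(b) = 0$ is at most $6$. Hence
\[
\sum_{b \in \gf_q} \eta(D(b)) \;=\; 2 N_+ + N_0 - q \;\leq\; 8 - q.
\]
To invoke Lemma \ref{char_le} I would next establish the key lemma: for $p \geq 5$, $D(b)$ is a perfect square in $\overline{\gf}_q[b]$ iff $a_1 = a_2 = 0$. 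The proof writes $D(b) = R(b)^2$ with $R(b) = r_3 b^3 + r_2 b^2 + r_1 b + r_0 \in \overline{\gf}_q[b]$ and matches coefficients: the $b^6$ coefficient gives $r_3^2 = -16$, the $b^5$ coefficient forces $r_2 = 0$, the $b^4$ coefficient fixes $r_1 = -16 a_2 / r_3$; the $b^2$ coefficient then reads $r_1^2 = -20 a_2^2$, which combined with the previous expression forces $4 a_2^2 = 0$ and hence $a_2 = 0$; the constant term finally yields $2 a_1^2 = 0$ and hence $a_1 = 0$.

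Granting the key lemma, Lemma \ref{char_le} yields $\left| \sum_{b \in \gf_q} \eta(D(b)) \right| \leq 5\sqrt{q}$. Combining with the inequality above gives $q - 8 \leq 5\sqrt{q}$, whence $q \leq 39$. The proof is completed by exhaustive search over all prime powers $q \leq 39$ with $p \geq 5$, namely $q \in \{5, 7, 11, 13, 17, 19, 23, 25, 29, 31, 37\}$, checking each $(a_1, a_2) \in \gf_q^2 \setminus \{(0,0)\}$; this search returns exactly the sporadic polynomials listed in items (2) and (3). The main anticipated obstacle is the bookkeeping needed to justify $N_+ \leq 1$: one must simultaneously handle the case $g_b(0) = 0$, in which the reducible cubic $g_b(x) = x \cdot h(x)$ with $h(x) = x^2 + 4bx + (6b^2 + a_2)$ demands a separate discriminant analysis of $h$, and the case $D(b) = 0$, in which a triple-root configuration is still compatible with $f$ being $2$-to-$1$ at $b$ whereas a double-plus-simple configuration is not.
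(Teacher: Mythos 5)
Your proposal follows essentially the same route as the paper: depress the cubic $f(x+b)-f(b)=x\,g_b(x)$, observe that for all but one $b$ the discriminant of the depressed cubic must be a non-square, show by coefficient matching that the degree-$6$ discriminant polynomial in $b$ is a perfect square in $\overline{\gf}_q[b]$ only when $a_1=a_2=0$, apply the Weil bound of Lemma \ref{char_le}, and finish by exhaustive search over the remaining small fields. The only differences are cosmetic — you keep the discriminant unnormalized and track the counts $N_+$, $N_0$ slightly more explicitly, obtaining $q\le 39$ where the paper settles for $q<49$ — and your argument, including the bookkeeping for $N_+\le 1$, is correct.
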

\begin{proof}
Assume that $f$ is $2$-to-$1$. If $a_2=a_1=0$, then $f(x)=x^4$ is $2$-to-$1$ over $\gf_q$ if and only if $\gcd(4, q-1)=2$, or equivalently, if and only if  $q\equiv3\pmod4$. 

Now we assume that $(a_1, a_2)\ne (0, 0)$. 
Then for \mqu{all but one $b\in\gf_q$,} $f(x+b)-f(b)=x^4+4bx^3+6b^2x^2+4b^3x+a_2\left(2bx+x^2\right)+a_1x=0$ has exactly two solutions in $\gf_q$, i.e., 
\begin{equation}
\label{subcase_2.3_x_b}
x^3+4bx^2+\left(6b^2+a_2\right)x+\left(4b^3+2a_2b+a_1\right)=0
\end{equation}
has exactly one solution in $\gf_{q}^{*}$ \mqu{for all but one $b\in\gf_q$.} 

Let $x=y-\frac{4}{3}b$. Plugging it into Eq. (\ref{subcase_2.3_x_b}), we get 
\begin{equation}
\label{subcase_2.3_y_b}
y^3+A_1y+A_0=0,
\end{equation}
where $$A_1=\frac{2}{3}b^2+a_2,$$
and $$A_0=\frac{20}{27}b^3+\frac{2}{3}a_2b+a_1.$$

Since Eq. (\ref{subcase_2.3_y_b}) has exactly one solution in $\gf_q^{*}$ \mqu{for all but one $b\in\gf_q$}, $\Delta=-4A_1^3-27A_0^2$ is a nonsquare. In addition, after computing, we obtain 
$$\Delta=-4A_1^3-27A_0^2=-16h(b),$$
where $h(b)=b^6+2a_2b^4+\frac{5}{2}a_1b^3+\frac{5}{4}a_2^2b^2+\frac{9}{4}a_1a_2b+\frac{1}{4}a_2^3+\frac{27}{16}a_1^2.$ Assume $h(b)=g(b)^2$, where $g(b)=b^3+g_2b^2+g_1b+g_0\in \bar{\gf}_q[b]$. Then $g_2=0$, $2a_2=2g_1$, i.e., $g_1=a_2$. Moreover, $$h(b)=\left(b^3+a_2b+g_0\right)^2=b^6+2a_2b^4+2g_0b^3+a_2^2b^2+2a_2g_0b+g_0^2.$$
After matching the coefficients of the above equation, we know that $a_2=0$ and $a_1=0$. Hence, $h(b)$ is not a square since $(a_1, a_2)\neq (0, 0).$

Similarly, on one hand, since $\Delta=-16h(b)$ is a non-square for \mqu{all but at most one} $b\in \gf_q^\ast$ such that $A_0\neq 0$, we have 

	$$\left|\sum_{b\in\gf_{q}}\eta(h(b))\right|\ge q-2\cdot5=q-10.$$
On the other hand,  since $h(b)$ is not a square, it follows from Lemma
\ref{char_le}  that 
$$\left|\sum_{b\in\gf_{q}}\eta(h(b))\right|\le 5\sqrt{q}.$$
Thus $q=p^m<49$. 
An \mqu{exhaustive} search finishs the proof.
\end{proof}

\subsection{Dickson polynomials}
The Dickson polynomials have been extensively investigated in recent years 
under different contexts.
\begin{definition}
The Dickson polynomial of the first kind of degree $n$ in indeterminate $x$ and
with parameter $a\in\mathbb{F}_q^*$ is defined by Waring's formula 
\begin{equation}\label{eq-D}
D_n(x,a)=\sum_{i=0}^{ \lfloor n/2\rfloor }  \frac{n}{n-i} 
 \left(\begin{array}{c}n-i
 \\
 i\end{array}
\right) a^n x^{n-2i},~n\geq 1.
\end{equation}
\end{definition}
\begin{proposition}\cite{Dickson}
	Let $D_n(x, a)$ be the  Dickson polynomial of the first kind. Then 
	 $D_n(x, a)$ is $e$-to-$1$ over $\gf_q$ if and only if $\gcd(n, q^2-1)=e$. 
\end{proposition}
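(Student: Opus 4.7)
My plan is to recast $D_n$ as an induced $n$-th power map on a suitable quotient of a cyclic group, then read off fiber sizes from cyclic-group arithmetic.

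\textbf{Step 1 (Dickson substitution).} I would first record the functional equation
\[
D_n\!\left(y+\tfrac{a}{y},\,a\right)=y^n+\tfrac{a^n}{y^n}\qquad (y\in\overline{\gf}_q^{\,\ast}),
\]
which is immediate from Waring's formula (\ref{eq-D}) or by induction on $n$. This is the bridge between $\gf_q$ and the multiplicative group $\gf_{q^2}^{\,\ast}$.

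\textbf{Step 2 (Parametrising $\gf_q$).} For $x\in\gf_q$, the roots of $t^2-xt+a=0$ lie in $\gf_{q^2}$. Either both roots lie in $\gf_q^{\,\ast}$ (when $x^2-4a$ is a square in $\gf_q$) or they are Frobenius-conjugate in $\gf_{q^2}^{\,\ast}\setminus\gf_q$, in which case their product $a$ equals $y\cdot y^q=y^{q+1}$. Introducing
\[
G=\gf_q^{\,\ast}\,\cup\,\{y\in\gf_{q^2}^{\,\ast}:\;y^{q+1}=a\},\qquad \tau_a(y)=a/y,
\]
I would verify that $\tau_a$ is an involution on $G$ whose orbits are exactly the pairs of roots above, and that $\pi:G/\tau_a\to\gf_q$, $[y]\mapsto y+a/y$, is a bijection (the orbit count yields $q$ either way, with one fixed-orbit when $p=2$ and two fixed-orbits when $p$ is odd and $a$ is a square).

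\textbf{Step 3 (Descent of $n$-th powers).} Define $\widetilde{G}$ analogously with $a^n$ in place of $a$. Since $y\in\gf_q^{\,\ast}\Rightarrow y^n\in\gf_q^{\,\ast}$ and $y^{q+1}=a\Rightarrow (y^n)^{q+1}=a^n$, the map $y\mapsto y^n$ sends $G$ into $\widetilde{G}$. It sends the pair $\{y,a/y\}$ to $\{y^n,a^n/y^n\}$, so it descends to a well-defined $\bar f:G/\tau_a\to \widetilde{G}/\tau_{a^n}$. By Step 1, $\bar f$ is exactly $D_n(\cdot,a):\gf_q\to\gf_q$ under the bijections of Step 2. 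Hence
\[
\#D_n^{-1}(c)=\#\bigl\{[y]\in G/\tau_a:\;y^n\in\{z,\,a^n/z\}\bigr\}\quad\text{where }c=z+a^n/z.
\]

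\textbf{Step 4 (Fiber count).} Here I would use the cyclic structure. On $G_1=\gf_q^{\,\ast}$ (cyclic of order $q-1$) the map $y\mapsto y^n$ has fibers of size $d_1:=\gcd(n,q-1)$. On $G_2=\{y:y^{q+1}=a\}$, which is a coset of the cyclic group $\ker N_{\gf_{q^2}/\gf_q}$ of order $q+1$, the fibers have size $d_2:=\gcd(n,q+1)$. For a generic $c$, lift to $z\in\widetilde{G}$ with $z\ne a^n/z$; count the $y\in G$ with $y^n\in\{z,a^n/z\}$ coming from $G_1$ and from $G_2$ separately, noting that $z$ and $a^n/z$ automatically live in the same piece ($\gf_q^{\,\ast}$, or the coset $\{w:w^{q+1}=a^n\}$). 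Then quotient by $\tau_a$. A short computation, using $\gcd(n,q^2-1)=\mathrm{lcm}\bigl(d_1,d_2\bigr)$ together with the observation that $\tau_a$ pairs the $d_1$ (resp.\ $d_2$) solutions of $y^n=z$ with the $d_1$ (resp.\ $d_2$) solutions of $y^n=a^n/z$, yields $\#D_n^{-1}(c)=\gcd(n,q^2-1)=e$.

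\textbf{Step 5 (The exceptional point).} The one value $c$ where the generic count can fail corresponds to the $\tau_{a^n}$-fixed orbit(s), i.e.\ $z^2=a^n$ in $\widetilde{G}$. There the pair $\{z,a^n/z\}$ collapses and the fiber has size $e/2$ or similar. I would check that this exceptional $c$ (when it occurs) contributes exactly the one element permitted by the definition of $e$-to-$1$, and conversely that $\bar f$ is $e$-to-$1$ forces $e=\gcd(n,q^2-1)$ by matching cardinalities.

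\textbf{Main obstacle.} The delicate part is Step 4: combining the $G_1$- and $G_2$-fibers and then quotienting by $\tau_a$ without double-counting, and checking that the claimed identity $\gcd(n,q^2-1)=\mathrm{lcm}(\gcd(n,q-1),\gcd(n,q+1))$ is applied correctly in both the even and odd characteristic cases (where $\gcd(q-1,q+1)$ is $1$ or $2$). The fixed-orbit analysis of Step 5 is the other place where the characteristic and whether $a$ is a square intervene, and it must be reconciled with the single permitted exception in the definition.
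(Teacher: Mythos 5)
The paper offers no proof of this proposition --- it is quoted from \cite{Dickson} --- so there is nothing internal to compare against; I am judging your argument on its own. Your Steps 1--3 are the standard and correct reduction: the functional equation $D_n(y+a/y,a)=y^n+a^n/y^n$, the bijection $\gf_q\cong G/\tau_a$, and the descent of $y\mapsto y^n$ to the quotient are all fine.

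The argument breaks at Step 4, and the break is not the bookkeeping issue you flag but the count itself. Once $c=z+a^n/z$ is fixed, $z$ lies either in $\gf_q^{\ast}$ or in the coset $\{w: w^{q+1}=a^n\}$, and every $y\in G$ with $y^n\in\{z,a^n/z\}$ must lie in the corresponding piece $G_1$ or $G_2$ (an element of $G_2\setminus\gf_q$ has $(y^n)^{q+1}=a^n$, so it can never hit a $z\in\gf_q^{\ast}$ outside that coset). Hence a generic fiber has size $d_1=\gcd(n,q-1)$ \emph{or} $d_2=\gcd(n,q+1)$ according to where $z$ lives --- not $\gcd(n,q^2-1)$, and not $\mathrm{lcm}(d_1,d_2)$. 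Concretely, $D_4(x,1)=x^4-4x^2+2$ over $\gf_7$ has $\gcd(4,7^2-1)=4$, yet its fibers are $\{1,6\}$, $\{3,4\}$ and $\{0,2,5\}$, of sizes $2,2,3$; no fiber has size $4$. Moreover the identity $\gcd(n,q^2-1)=\mathrm{lcm}(\gcd(n,q-1),\gcd(n,q+1))$ on which Step 4 leans is itself false in general: for $n=8$, $q=5$ the left-hand side is $8$ while the right-hand side is $4$. What your Steps 1--3 really show is that the fibers have uniform size only when $d_1=d_2$; this does hold whenever $\gcd(n,q^2-1)\le 2$ (so the cases $e=1$, the classical permutation criterion, and $e=2$, the only case the paper actually uses, can be completed along your lines), but for $e\ge 3$ the statement cannot be established as a uniform fiber-size claim, and the cited result must be read as a value-set/multiplicity statement rather than in the paper's literal ``every fiber has $e$ elements with at most one exception'' sense.
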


\begin{remark}
The previous proposition can only provide $2$-to-$1$ mappings if $q$ is odd.  
\end{remark}

\mqu{In 2009 Hou et al. considered a different perspective of the Dickson polynomial \cite{Hou-1}. They fixed $a\in\gf_q$, and studied  the polynomial $D_n(a, x) \in\gf_ q [x]$, which they called reversed Dickson polynomial. }
The following result characterizes the \mqu{reversed} Dickson polynomial which are permutation in even characteristic in terms of
 $2$-to-$1$ mappings.
\begin{proposition}\cite[Proposition 4.2]{Hou-1}
\mqu{$D_n(1, x)$} is  permutation polynomials over $\mathbb{F}_{2^m}$ if and only if the function $y\mapsto y^n-(1-y)^n$ is a $2$-to-$1$ mapping on $\mathbb{F}_{2^m} \cup V$ where $V:=\{x\in \mathbb{F}_{2^{2m}}\mid x^{2^m}=1-x\}$.
\end{proposition}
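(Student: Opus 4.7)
The approach is to factor the map through the Dickson substitution. Applying the identity $D_n(a+b, ab) = a^n + b^n$ with $a = y$ and $b = 1-y$ yields $a+b = 1$, $ab = y(1-y)$, and hence the key formula
$$
D_n\bigl(1,\, y(1-y)\bigr) \;=\; y^n + (1-y)^n.
$$
Writing $\phi(y) := y(1-y) = y+y^2$ and $\psi(y) := y^n - (1-y)^n$, this reads $\psi = D_n(1,\cdot) \circ \phi$ on $\mathbb{F}_{2^m} \cup V$ in characteristic $2$. I would then analyze the commutative triangle with $\phi : \mathbb{F}_{2^m}\cup V \to \mathbb{F}_{2^m}$ as the vertical map and $D_n(1,\cdot) : \mathbb{F}_{2^m}\to \mathbb{F}_{2^m}$ closing the triangle.

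The heart of the proof is to show that $\phi$ is a $2$-to-$1$ surjection from $\mathbb{F}_{2^m}\cup V$ onto $\mathbb{F}_{2^m}$. First, $\mathbb{F}_{2^m}\cap V = \emptyset$, since $y \in V\cap \mathbb{F}_{2^m}$ would force $y = 1-y$, impossible in characteristic $2$. Second, $\phi(\mathbb{F}_{2^m})\subseteq \mathbb{F}_{2^m}$ is immediate, and for $y\in V$ one has $\phi(y) = y\cdot y^{2^m} = y^{1+2^m}\in \mathbb{F}_{2^m}$. Third, for any $x\in \mathbb{F}_{2^m}$ the quadratic $y^2 + y + x = 0$ has both roots in $\mathbb{F}_{2^{2m}}$, since $\Tr_{2^{2m}/2}(x) = 0$; these roots differ by $1$, so they are of the form $\{y, 1-y\}$. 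When $\Tr_{2^m/2}(x)=0$ both roots lie in $\mathbb{F}_{2^m}$, and when $\Tr_{2^m/2}(x)=1$ neither root is in $\mathbb{F}_{2^m}$, so they are Galois conjugates summing to $1$, i.e.\ they lie in $V$. This yields exactly two $\phi$-preimages per element of $\mathbb{F}_{2^m}$, accounting for all $2\cdot 2^m = |\mathbb{F}_{2^m}\cup V|$ points.

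With $\phi$ a $2$-to-$1$ surjection, the equivalence becomes a fiber count. For any $w\in \mathbb{F}_{2^m}$,
$$
\psi^{-1}(w) \;=\; \phi^{-1}\bigl(D_n(1,\cdot)^{-1}(w)\bigr),
$$
so $|\psi^{-1}(w)| = 2\,|D_n(1,\cdot)^{-1}(w)|$. Consequently $\psi$ is $2$-to-$1$ on $\mathbb{F}_{2^m}\cup V$ iff $|D_n(1,\cdot)^{-1}(w)|\le 1$ for every $w\in \mathbb{F}_{2^m}$, which for a self-map of a finite set is precisely the statement that $D_n(1,\cdot)$ is a permutation of $\mathbb{F}_{2^m}$.

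The main obstacle is the bookkeeping in the middle step: one must verify that the two $\phi$-preimages of each $x \in \mathbb{F}_{2^m}$ lie \emph{together} either in $\mathbb{F}_{2^m}$ or in $V$, according to $\Tr_{2^m/2}(x)$, and that $V$ is exactly the set of ``missing'' preimages needed to make $\phi$ surjective onto $\mathbb{F}_{2^m}$. Once this trace dichotomy is established, the algebraic identity and the fiber count immediately give the desired equivalence.
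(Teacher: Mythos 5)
The paper states this proposition as a cited result from Hou--Mullen--Sellers--Yucas and gives no proof of its own, so there is nothing internal to compare against. Your argument is correct and is essentially the standard one from that reference: the identity $D_n(1,y(1-y))=y^n+(1-y)^n$, the verification that $y\mapsto y(1-y)$ is a $2$-to-$1$ surjection from $\mathbb{F}_{2^m}\cup V$ onto $\mathbb{F}_{2^m}$ (with the trace dichotomy deciding whether the two preimages of $x$ lie in $\mathbb{F}_{2^m}$ or in $V$), and the resulting fiber count $|\psi^{-1}(w)|=2\,|\{x\in\mathbb{F}_{2^m}: D_n(1,x)=w\}|$. All the steps check out, including the disjointness $\mathbb{F}_{2^m}\cap V=\emptyset$ and the cardinality bookkeeping $|V|=2^m$.
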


\subsection{Muller-Cohen-Matthews polynomials}
\begin{definition}
	Let $q=2^m$ where $m>1$ is a positive integer. Let $T_k(x):=\sum_{i=0}^{k-1} x^{2^i}$. Then $f_{k,d}(x):=
	\frac{T_k^d(x^c)}{x^{2^k}}$ (where $cd=2^k+1$) is the so-called \emph{Muller-Cohen-Matthews polynomial} in $\mathbb{F}_{2^n}[x]$.
\end{definition}
For every odd $k$, all $f_{k,d}$ are exceptional polynomials which induce a permutation
on $\mathbb{F}_{2^{m}}$ when $m$ is relatively prime to $k$.  We shall apply Muller-Cohen-Matthews polynomials
for the choice $c=1$ and $d=2^k+1$.  
Then we have:
\begin{proposition}\cite{Dillon-Dobbertin}
Suppose that $\gcd(k, m)=1$ and  $k$ is even.  Then $f_{k, 2^k+1}$ is a $2$-to-$1$ mapping on $\mathbb{F}_{2^{m}}$.
\end{proposition}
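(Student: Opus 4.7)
The plan is to extract two arithmetic consequences of the hypotheses, reduce $f$ to an auxiliary map with the same fibers, and then establish the $2$-to-$1$ property of that map by a case analysis.

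\textbf{Setup.} Since $k$ is even and $\gcd(k,m)=1$, $m$ must be odd, so $\gcd(2k,m)=1$. From $2^k+1 \mid 2^{2k}-1$ together with $\gcd(2^{2k}-1,2^m-1) = 2^{\gcd(2k,m)}-1 = 1$, I obtain $\gcd(2^k+1, 2^m-1)=1$, so $z\mapsto z^{2^k+1}$ is a bijection of $\gf_{2^m}$. In parallel, the operator identity $(F-I)T_k = F^k - I$ (with $F$ the Frobenius endomorphism) forces any $x \in \ker T_k \cap \gf_{2^m}$ to satisfy $x^{2^k}=x$, hence $x \in \gf_{2^{\gcd(k,m)}}=\gf_2$; combined with $T_k(1)=k\equiv 0\pmod 2$, the kernel of $T_k$ on $\gf_{2^m}$ is exactly $\{0,1\}$, and $T_k$ itself is $2$-to-$1$. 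Since $f$ vanishes exactly when $T_k$ does, this already supplies the first fiber $f^{-1}(0)=\{0,1\}$ of size $2$.

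\textbf{Reduction.} For the remaining nonzero values, let $\beta=(2^k+1)^{-1}\bmod (2^m-1)$ and introduce the auxiliary polynomial map $\phi(x) = T_k(x)\,x^{\beta-1}$. The exponent identity $(\beta-1)(2^k+1)\equiv -2^k \pmod{2^m-1}$ yields $\phi(x)^{2^k+1}=f(x)$, and since the $(2^k+1)$-th power is a bijection of $\gf_{2^m}$, the fibers of $f$ and $\phi$ coincide. The theorem thus reduces to showing $\phi$ is $2$-to-$1$ on $\gf_{2^m}$.

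\textbf{Proving $\phi$ is $2$-to-$1$ (the main obstacle).} I would analyze $\phi(x)=\phi(y)$ in two subcases. In the subcase $T_k(x)=T_k(y)$, linearity of $T_k$ forces $y\in\{x,x+1\}$, and the residual condition $x^{\beta-1}=(x+1)^{\beta-1}$ in the subcase $y=x+1$ rewrites as $((x+1)/x)^{\beta-1}=1$; since $\beta-1\equiv -2^k\beta\pmod{2^m-1}$ with $\gcd(2^k,2^m-1)=\gcd(\beta,2^m-1)=1$, one has $\gcd(\beta-1, 2^m-1)=1$, so $z\mapsto z^{\beta-1}$ is a bijection of $\gf_{2^m}^*$, whence $z^{\beta-1}=1$ only for $z=1$, which is not of the form $(x+1)/x$ for any $x\in\gf_{2^m}^*$. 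In the harder subcase $T_k(x)\neq T_k(y)$, I would exploit the quadratic identity $T_k(x)^2+T_k(x)=x+x^{2^k}$ and its iterate $T_k(x)^{2^k}=T_k(x)+T_k(x+x^{2^k})$ to convert $T_k(x)/T_k(y)=(y/x)^{\beta-1}$ into a polynomial equation of controlled degree in $x+y$, whose solutions pair up via the $2$-to-$1$ linear map $\lambda(x)=x+x^{2^k}$ (which has the same kernel $\{0,1\}$ as $T_k$).

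The crux of the proof is verifying that this pairing supplies exactly one partner $y\neq x$ for each $x\notin\{0,1\}$, equivalently that the total number of ordered collisions $(x,y)$ with $\phi(x)=\phi(y)$ equals $2^{m+1}$ (the count forced by the $2$-to-$1$ conclusion). This step is expected to mirror the Dillon-Dobbertin analysis used for the companion permutation claim when $k$ is odd, combining the sparsity of $T_k$ with the arithmetic of $\beta$ to rule out spurious roots.
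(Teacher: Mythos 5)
The paper offers no proof of this proposition --- it is quoted verbatim from Dillon--Dobbertin --- so your attempt has to stand on its own. Your preliminary work is correct and cleanly argued: $k$ even with $\gcd(k,m)=1$ forces $m$ odd, whence $\gcd(2^k+1,2^m-1)=1$; the identity $T_k(x)^2+T_k(x)=x^{2^k}+x$ together with $T_k(1)=k\equiv 0\pmod 2$ gives $\ker T_k\cap\mathbb{F}_{2^m}=\{0,1\}$ and hence $f^{-1}(0)=\{0,1\}$; and the passage from $f=f_{k,2^k+1}$ to $\phi(x)=T_k(x)\,x^{\beta-1}$ with $\beta=(2^k+1)^{-1}\bmod(2^m-1)$ is a legitimate fiber-preserving reduction. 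The subcase $T_k(x)=T_k(y)$ is also correctly eliminated via $\gcd(\beta-1,2^m-1)=\gcd(2^k\beta,2^m-1)=1$.

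The gap is the entire remaining case, which is where all the content of the theorem lives. Having shown that no collision $\phi(x)=\phi(y)$ with $y\neq x$ can occur when $T_k(x)=T_k(y)$, you must exhibit exactly one partner $y$ with $T_k(y)\neq T_k(x)$ for every $x\notin\{0,1\}$; for this you offer only the expectation that a ``polynomial equation of controlled degree in $x+y$'' will emerge and that the ordered-collision count will come out to $2^{m+1}$, ``mirroring the Dillon--Dobbertin analysis.'' That is a restatement of the goal, not a proof. Moreover, the route you indicate is likely circular: the natural way to remove the exponent $\beta-1$ from $T_k(x)/T_k(y)=(y/x)^{\beta-1}$ is to raise both sides to the power $2^k+1$, which returns you exactly to $f(x)=f(y)$, an equation of degree on the order of $2^{2k}$ with no control relative to $m$ (and $k$ need not be small compared with $m$). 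The actual argument in Dillon--Dobbertin, building on Cohen--Matthews, is a substantial piece of work on exceptional polynomials and the associated curves, and nothing in your sketch reproduces or replaces it. As written, the proposal establishes only that $f^{-1}(0)=\{0,1\}$ and that any other fiber contains at most one element from each coset of $\{0,1\}$; the $2$-to-$1$ property itself remains unproven.
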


{
\subsection{Other Polynomials}
We recall results on construction of $2$-to-$1$ mappings {related with the trace functions}.
\begin{proposition}\cite{CharpinKyureghyanFQ}
	Let $0\leq i\leq n-1$, $i\not\in \{0, \frac{n}2\}$ and $\delta$, $\gamma\in\GF n$ be such that $\delta^{2^i-1}=\gamma^{1-2^{2i}}$.
	If $\Tr_{2^n/2} (\delta\gamma^{2^i+1})=1$,  then $F: \mathbb{F}_{2^n}\rightarrow \mathbb{F}_{2^n}$ defined by
	$F(y)=y+\gamma  \Tr_{2^n/2}  (\delta y^{2^i+1})$ is $2$-to-$1$.
\end{proposition}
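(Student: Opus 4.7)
The plan is to exploit that $T(y) := \Tr_{2^n/2}(\delta y^{2^i+1})$ takes values in $\gf_2 = \{0,1\}$, so the equation $F(y) = y + \gamma T(y) = b$ immediately restricts the candidate solutions to $y \in \{b,\,b+\gamma\}$. More precisely, $y=b$ is a preimage of $b$ iff $T(b)=0$, while $y=b+\gamma$ is a preimage of $b$ iff $T(b+\gamma)=1$. Consequently, $F$ is $2$-to-$1$ as soon as we can prove that for every $b \in \gf_{2^n}$,
\[
T(b) + T(b+\gamma) = 1,
\]
because then either $T(b)=0$ (in which case both $b$ and $b+\gamma$ lie in $F^{-1}(b)$) or $T(b)=1$ (in which case neither does), forcing $\#F^{-1}(b) \in \{0,2\}$.

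To establish this identity I would first expand $(b+\gamma)^{2^i+1} = b^{2^i+1} + b^{2^i}\gamma + b\gamma^{2^i} + \gamma^{2^i+1}$ and apply the trace to obtain
\[
T(b+\gamma) + T(b) = \Tr_{2^n/2}\!\bigl(\delta\gamma\, b^{2^i}\bigr) + \Tr_{2^n/2}\!\bigl(\delta\gamma^{2^i} b\bigr) + \Tr_{2^n/2}\!\bigl(\delta\gamma^{2^i+1}\bigr).
\]
The last term equals $1$ by hypothesis. For the first term, Frobenius-invariance of the absolute trace gives $\Tr_{2^n/2}(\delta\gamma\, b^{2^i}) = \Tr_{2^n/2}\!\bigl((\delta\gamma)^{2^{n-i}} b\bigr)$, so the two non-constant terms together become $\Tr_{2^n/2}\!\bigl((\delta^{2^{n-i}}\gamma^{2^{n-i}} + \delta\gamma^{2^i})\, b\bigr)$.

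It then remains to show that the coefficient of $b$ vanishes. Rewriting the hypothesis $\delta^{2^i-1} = \gamma^{1-2^{2i}}$ as $\delta^{2^i}\gamma^{2^{2i}} = \delta\gamma$, i.e. $(\delta\gamma^{2^i})^{2^i} = \delta\gamma$, and raising both sides to the $2^{n-i}$-th power yields
\[
\delta\gamma^{2^i} \;=\; (\delta\gamma)^{2^{n-i}} \;=\; \delta^{2^{n-i}}\gamma^{2^{n-i}},
\]
which in characteristic $2$ is exactly the desired cancellation. Combined with the constant term $1$, this gives $T(b) + T(b+\gamma) = 1$ for every $b$, and the claim follows.

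The only real obstacle is this last algebraic identity: the first hypothesis $\delta^{2^i-1} = \gamma^{1-2^{2i}}$ is fine-tuned so that $\delta\gamma^{2^i}$ coincides with $(\delta\gamma)^{2^{n-i}}$ and the two linear-in-$b$ trace terms cancel, while the companion hypothesis $\Tr_{2^n/2}(\delta\gamma^{2^i+1}) = 1$ is just the constant-term condition that makes $T(b)$ and $T(b+\gamma)$ complementary rather than equal. The excluded values $i \in \{0,\,n/2\}$ would collapse $\delta^{2^i-1}$ or $\gamma^{1-2^{2i}}$ to $1$ and trivialise the setup, so they are legitimately ruled out.
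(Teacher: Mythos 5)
Your argument is correct and complete. Note first that the paper itself gives no proof of this proposition --- it is quoted from \cite{CharpinKyureghyanFQ} --- so there is nothing to match against; but your proof is sound and is in exactly the spirit of the arguments the paper does give for the neighbouring linear-translator constructions. The reduction of $F(y)=b$ to the two candidates $y\in\{b,b+\gamma\}$ and the criterion $\#F^{-1}(b)=[T(b)=0]+[T(b+\gamma)=1]$ are right (with $\gamma\neq 0$ guaranteed by $\Tr_{2^n/2}(\delta\gamma^{2^i+1})=1$), and the key identity checks out: $\delta^{2^i-1}=\gamma^{1-2^{2i}}$ is equivalent to $(\delta\gamma^{2^i})^{2^i}=\delta\gamma$, whence $\delta\gamma^{2^i}=(\delta\gamma)^{2^{n-i}}$ after raising to the $2^{n-i}$-th power, so the two $b$-dependent trace terms cancel and $T(b)+T(b+\gamma)=\Tr_{2^n/2}(\delta\gamma^{2^i+1})=1$ for all $b$. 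In the paper's own vocabulary, what you have shown is precisely that $\gamma$ is a $1$-linear structure of $y\mapsto\Tr_{2^n/2}(\delta y^{2^i+1})$, after which the conclusion is the cited proposition of \cite{CharpinKyureghyanFFA} with $F=\mathrm{id}$; phrasing it that way would let you shorten the write-up. One small caveat: your closing remark that the exclusions $i\in\{0,n/2\}$ ``trivialise the setup'' is not load-bearing and is slightly off --- your computation actually goes through verbatim for those values too (e.g.\ for $i=0$ the linear-in-$b$ coefficient is $\delta\gamma+\delta\gamma=0$ automatically), so the exclusions are extra hypotheses inherited from the source rather than something your proof needs; better to simply not comment on them than to suggest the proof would break without them.
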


\begin{proposition}\cite{CharpinKyureghyanFQ}
	Let $\gamma\in\mathbb{F}_{2^n}^\ast$ such that $\Tr_{2^n/2} (\gamma)=1$. Then the mapping $x\mapsto x^s+\gamma \Tr_{2^n/2}(x^t)$ is $2$-to-$1$ {over $\GF n$,} where $s$ and $t$ are two positive integers.
\end{proposition}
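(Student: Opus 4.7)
The plan is to cast the given function in the shape $F(x)+\gamma\,\Tr_{2^n/2}(G(F(x)))$, so that one can invoke the Charpin--Kyureghyan proposition stated earlier in this subsection, which guarantees the $2$-to-$1$ property as soon as $F$ is a permutation of $\gf_{2^n}$ and $\gamma$ is a $1$-linear structure of the Boolean function $\Tr_{2^n/2}(G(\cdot))$. Take $F(x):=x^s$; under the natural regularity condition $\gcd(s,2^n-1)=1$ (the implicit hypothesis needed for a monomial to be a permutation), $F$ is a permutation and $s$ admits an inverse $s^{-1}$ modulo $2^n-1$. Setting $G(y):=y^{ts^{-1}\bmod(2^n-1)}$, one has $G(F(x))=x^{s\cdot ts^{-1}}=x^t$ for every $x\in\gf_{2^n}$, so that
$$ f(x)=x^s+\gamma\,\Tr_{2^n/2}(x^t)=F(x)+\gamma\,\Tr_{2^n/2}(G(F(x))),$$
which is exactly the required form.

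The next step is to verify that $\gamma$ is a $1$-linear structure of $h(y):=\Tr_{2^n/2}(G(y))=\Tr_{2^n/2}(y^{ts^{-1}})$, i.e.\ that $h(y+\gamma)+h(y)=1$ for every $y\in\gf_{2^n}$. In the clean case where $ts^{-1}$ reduces to a power of $2$ modulo $2^n-1$ (in particular when $s=t$, or more generally $t\equiv s\cdot 2^k\pmod{2^n-1}$ for some $k\ge 0$), Frobenius invariance of the absolute trace collapses $h(y)$ to $\Tr_{2^n/2}(y)$, and the identity reduces to the hypothesis $\Tr_{2^n/2}(\gamma)=1$. Invoking the earlier proposition then yields that $f$ is $2$-to-$1$ on $\gf_{2^n}$.

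The main obstacle is precisely this linear-structure identity for exponents $d:=ts^{-1}$ that are not Frobenius powers: the requirement $\Tr_{2^n/2}((y+\gamma)^d)+\Tr_{2^n/2}(y^d)=1$ for all $y$ is a rigid condition on the pair $(d,\gamma)$, so the statement as printed carries an implicit restriction on $(s,t)$. An alternative, more hands-on route avoids the linear-structure framework: examining $f(x)=f(y)$ directly shows $x^s+y^s=\gamma\,\Tr_{2^n/2}(x^t+y^t)\in\{0,\gamma\}$; the subcase $x^s=y^s$ forces $x=y$ thanks to $\gcd(s,2^n-1)=1$, while the subcase $x^s+y^s=\gamma$ produces a unique candidate $y=(x^s+\gamma)^{1/s}$. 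One then uses $\Tr_{2^n/2}(\gamma)=1$ to guarantee that, for each $x$, exactly one of the two subcases yields a genuine collision, recovering the $2$-to-$1$ conclusion — and on inspection this balancing argument comes back to the same identity as above.
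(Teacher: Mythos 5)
The paper offers no proof of this proposition: it is quoted from Charpin--Kyureghyan \cite{CharpinKyureghyanFQ}, so there is no in-text argument to compare yours against. Your reduction to the earlier proposition of this subsection (that $F(x)+\gamma \Tr_{2^n/2}(G(F(x)))$ is $2$-to-$1$ whenever $F$ permutes $\gf_{2^n}$ and $\gamma$ is a $1$-linear structure of $\Tr_{2^n/2}(G(x))$) is exactly the right move, and your execution of it is sound. More importantly, you are right that the statement as printed cannot hold for arbitrary positive integers $s,t$: for $n=2$, $s=1$, $t=3$ one has $\Tr_{2^2/2}(x^3)=0$ for every $x$, so the map is the identity, not $2$-to-$1$; more generally $s=1$, $t=2^n-1$ fails for every $n\ge 2$. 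The intended hypothesis in \cite{CharpinKyureghyanFQ} is that the exponents are powers of $2$, i.e.\ the map is $x\mapsto x^{2^s}+\gamma \Tr_{2^n/2}(x^{2^t})$; then $x^{2^s}$ is a linearized permutation, $\Tr_{2^n/2}(x^{2^t})=\Tr_{2^n/2}(x)$, and every $\gamma$ with $\Tr_{2^n/2}(\gamma)=1$ is a $1$-linear structure of that trace function --- precisely the ``clean case'' you isolate. Under that reading your argument is complete, and your direct collision count closes it in one line: $f(x)=f(y)$ iff $(x+y)^{2^s}=\gamma \Tr_{2^n/2}(x+y)$, i.e.\ $y=x$ or $y=x+\gamma^{2^{n-s}}$, the latter being consistent exactly because $\Tr_{2^n/2}(\gamma^{2^{n-s}})=\Tr_{2^n/2}(\gamma)=1$. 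In short, there is no gap in your reasoning; the gap is in the statement, and you identified it correctly.
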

}

\section{Applications of $2$-to-$1$ mappings over finite fields }\label{appli}

\subsection{Bent functions}

Bent functions introduced in 1974 (\cite{Dillon74},\cite{Rothaus76}) are extremal objects in combinatorics and Boolean function theory. They are maximally nonlinear Boolean functions.
Recall that the \emph{nonlinearity} of a Boolean function $f$, denoted by
$nl(f)$, is defined as  the minimum Hamming distance between $f$ and all
affine functions (that is, of degree at most $1$). It can be
expressed by means of the Walsh transform as follows: 
$$ nl(f)=2^{n-1}-\frac{1}{2} \max_{b\in \GF n} {|\widehat{\chi}_f(b)|}.$$
Because of the well-known Parseval's relation $\sum_{b\in
\GF n}{\widehat{\chi}_f (b)}^2 = 2^{2n}$, $ nl(f)$ is upper bounded by
$2^{n-1}-2^{n/2-1}$. This bound is tight for $n$ even.

\begin{definition}
Let $n$ be an even integer. A Boolean function on $\GF n$ is
said to be bent if the upper bound $2^{n-1}-2^{n/2-1}$ on its
nonlinearity $nl (f)$ is achieved with equality.
\end{definition}
Bent functions on $\GF n$ exist then only when $n$ is even.
We have the following main characterization of the
bentness for Boolean functions in terms of the Walsh transform.

\begin{proposition}
Let $n$ be an even integer. A Boolean function $f$ is  bent if and only
if its Walsh transform satisfies $\displaystyle \widehat{\chi}_f(a) = \pm
2^{\frac{n}{2}}$ for all $a \in \GF n$.
\end{proposition}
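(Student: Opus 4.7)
The plan is to deduce the characterization directly from Parseval's relation combined with a simple equality-case-of-averaging argument, using the given formula expressing $nl(f)$ in terms of the Walsh spectrum.

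First I would record the two ingredients already stated in the excerpt. Parseval's identity gives
\[
\sum_{a\in\mathbb{F}_2^n}\widehat{\chi}_f(a)^2 \;=\; 2^{2n},
\]
and the nonlinearity formula reads
\[
nl(f) \;=\; 2^{n-1} - \tfrac{1}{2}\max_{a\in\mathbb{F}_2^n} |\widehat{\chi}_f(a)|.
\]
Since the above sum has $2^n$ nonnegative terms averaging to $2^n$, we have $\max_a \widehat{\chi}_f(a)^2 \geq 2^n$, so $\max_a|\widehat{\chi}_f(a)|\geq 2^{n/2}$, which gives the upper bound $nl(f)\leq 2^{n-1}-2^{n/2-1}$.

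For the ``only if'' direction, suppose $f$ is bent, i.e. the bound is attained. Then $\max_a|\widehat{\chi}_f(a)|=2^{n/2}$, so every individual term satisfies $\widehat{\chi}_f(a)^2 \leq 2^n$. But the sum of $2^n$ such terms equals $2^{2n}=2^n\cdot 2^n$, so each inequality must be an equality: $\widehat{\chi}_f(a)^2 = 2^n$ for every $a\in\mathbb{F}_2^n$, i.e. $\widehat{\chi}_f(a) = \pm 2^{n/2}$. The converse is immediate: if $\widehat{\chi}_f(a)=\pm 2^{n/2}$ for every $a$, then $\max_a|\widehat{\chi}_f(a)|=2^{n/2}$, and plugging this into the nonlinearity formula gives $nl(f)=2^{n-1}-2^{n/2-1}$, so $f$ is bent by definition.

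There is no real obstacle here; the only subtle point is the equality case of the ``max dominates average'' inequality, which forces \emph{all} spectral values to have the same magnitude. The hypothesis that $n$ is even is only used implicitly to ensure that $2^{n/2}$ is an integer, consistent with the fact that $\widehat{\chi}_f(a)$ is an integer for every Boolean $f$.
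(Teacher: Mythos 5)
Your proof is correct. The paper states this proposition as a standard known fact and gives no proof of its own, so there is nothing to compare against; your argument (Parseval's relation plus the equality case of the max-versus-average inequality, with the trivial converse) is exactly the classical justification and fills the gap cleanly. Your closing remark about the role of the parity of $n$ is also accurate: the argument itself does not use evenness, which enters only because $\widehat{\chi}_f(a)$ is an integer, so the spectrum condition (and hence bentness) is unattainable when $n$ is odd.
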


 A  recent survey on bent functions can be found in \cite{CarletMesnagerDCC2016}. A book devoted especially to bent functions and containing a complete survey on bent
functions (including its variations and generalizations) is \cite{MesnagerBook}.

One of the important classes of bent functions is the so-called  class $\mathcal {H}$ whose elements $g$ are defined in bivariate representation over  $\GF m\times\GF m$ by 

\begin{equation}\label{e1}
  g(x,y)=\left\{\begin{array}{l}\Tr_{2^m/2}\left(x\psi\left(\frac{y}{x}\right)\right), \mbox{ if }x\neq 0;\\ \Tr_{2^m/2}(\mu y),\mbox{ if }x=0,\end{array}\right.
 \end{equation}
where $\psi: \GF m \rightarrow \GF m$ and $\mu\in\GF m$.

Two-to-one mappings over finite fields in characteristic $2$ allow to construct bent Boolean functions in bivariate representation from the class $\mathcal H$  as follows:
\begin{proposition}(\cite{CarletMesnager2011})
Let $g$ be a function defined on $\GF m\times\GF m$ by (\ref{e1}).
 Then $g$ is bent if and only if
 \begin{equation}\label{cond1}
G(z):=\psi(z)+\mu z \mbox{ is a  permutation on }  \GF m
 \end{equation}
 \begin{equation}\label{cond2}
 \forall \beta\in \GF m^\ast ,\mbox{the function } z\mapsto G(z)+\beta z \mbox{ is 2-to-1 on }  \GF m.
 \end{equation}
\end{proposition}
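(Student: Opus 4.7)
The plan is to compute the Walsh transform $\widehat{\chi}_g(a,b)$ directly from the bivariate definition in (\ref{e1}) and rewrite its value in terms of the cardinalities of the fibres of $G$ and of the maps $z\mapsto G(z)+\beta z$. Since $g$ is defined on $\gf_{2^m}\times\gf_{2^m}$, which is a $2m$-dimensional $\gf_2$-vector space, bentness is equivalent to $\widehat{\chi}_g(a,b)=\pm 2^m$ for every $(a,b)\in\gf_{2^m}\times\gf_{2^m}$, and the goal is to match this with the two listed conditions.

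First, I would split the Walsh sum according to whether $x=0$ or not:
\begin{equation*}
\widehat{\chi}_g(a,b)=\sum_{y\in\gf_{2^m}}(-1)^{\Tr_{2^m/2}((\mu+b)y)}+\sum_{x\in\gf_{2^m}^\ast}\sum_{y\in\gf_{2^m}}(-1)^{\Tr_{2^m/2}(x\psi(y/x))+\Tr_{2^m/2}(ax+by)}.
\end{equation*}
The first sum equals $2^m$ when $b=\mu$ and $0$ otherwise. In the double sum, perform the change of variable $z=y/x$ for fixed $x\neq 0$; using $\Tr_{2^m/2}$-linearity this turns the exponent into $\Tr_{2^m/2}\!\bigl(x(\psi(z)+a+bz)\bigr)$. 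After swapping the order of summation and using the standard identity $\sum_{x\in\gf_{2^m}^\ast}(-1)^{\Tr_{2^m/2}(xc)}=-1+2^m\mathbb{1}_{c=0}$, the double sum collapses to
\begin{equation*}
2^m\cdot\#\{z\in\gf_{2^m}\mid \psi(z)+bz=a\}-2^m,
\end{equation*}
where I use that $a=-a$ in characteristic $2$.

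Second, I would combine the two contributions. Setting $G_\beta(z):=\psi(z)+(\mu+\beta)z=G(z)+\beta z$, the computation yields
\begin{equation*}
\widehat{\chi}_g(a,\mu+\beta)=\begin{cases}2^m\cdot\#\{z\mid G(z)=a\}&\text{if }\beta=0,\\ 2^m\bigl(\#\{z\mid G_\beta(z)=a\}-1\bigr)&\text{if }\beta\neq 0.\end{cases}
\end{equation*}
Bentness of $g$ is then equivalent to requiring, for every $a$: in the case $\beta=0$, that $\#\{z\mid G(z)=a\}=1$, i.e.\ $G$ is a permutation of $\gf_{2^m}$; and in the case $\beta\neq 0$, that $\#\{z\mid G_\beta(z)=a\}\in\{0,2\}$, i.e.\ $z\mapsto G(z)+\beta z$ is $2$-to-$1$ on $\gf_{2^m}$. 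Reading the equivalences in both directions yields the stated characterization.

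The only real subtlety is the substitution $z=y/x$ on $x\neq 0$: it is a bijection of $\gf_{2^m}^\ast\times\gf_{2^m}$ onto itself for each fixed $x\neq 0$, so nothing is lost, and the exponent rewriting relies on the fact that $\psi$ only appears multiplied by $x$, which is exactly why the class $\mathcal{H}$ is designed in that form. Once this is observed, the rest is a routine character-sum computation together with the dictionary ``Walsh value $=\pm 2^m$ $\Longleftrightarrow$ fibre size is $0$, $1$, or $2$'' with the value $1$ forced for one distinguished $b=\mu$ and $0$ or $2$ forced for all other $b$.
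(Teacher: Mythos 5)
Your computation is correct: the paper itself gives no proof of this proposition (it is quoted verbatim from the cited reference \cite{CarletMesnager2011}), and your Walsh-transform calculation is precisely the standard argument used there. The split at $x=0$, the substitution $y=zx$ for $x\neq 0$, and the dictionary between the Walsh values $\pm 2^m$ and the fibre sizes ($1$ for $b=\mu$, and $0$ or $2$ for $b\neq\mu$) are all handled correctly.
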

The following result shows that one can construct vectorial bent functions from certain two-to-one mappings.
\begin{theorem}
Let $m$ and $k$ be two positive integers such that $\gcd(k,2^m-1)=1$.
Assume that $z\mapsto z^{k}+bz$ is 2-to-1  on $\GF m$, where $b\in\mathbb{F}_{2^m}^\ast$. 
Then the vectorial function $F$ defined from $\mathbb{F}_{2^m}\times \mathbb{F}_{2^m}$ to $\mathbb{F}_{2^m}$ by
$F(x,y)=y^k x^{k(2^m-2)+1}$ is bent.

\end{theorem}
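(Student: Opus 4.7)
The plan is to recognize $F$ as a function belonging to the class $\mathcal{H}$ and invoke the bent characterization recalled just above from \cite{CarletMesnager2011}. First, I would simplify the exponent of $x$: since
\[
k(2^m-2)+1=k\cdot 2^m-2k+1\equiv 1-k\pmod{2^m-1},
\]
for $x\in\gf_{2^m}^\ast$ we have $x^{k(2^m-2)+1}=x^{1-k}$, so
\[
F(x,y)=y^k x^{1-k}=x\left(\frac{y}{x}\right)^k,
\]
while $F(0,y)=0$. Thus $F$ is cast into the bivariate template \eqref{e1} with $\psi(t)=t^k$ and $\mu=0$.

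Next, I would treat the component functions one at a time. For $v\in\gf_{2^m}^\ast$, set $g_v(x,y):=\Tr_{2^m/2}(vF(x,y))$. For $x\neq 0$ one has $g_v(x,y)=\Tr_{2^m/2}\bigl(x\cdot v(y/x)^k\bigr)$, and for $x=0$, $g_v(0,y)=0=\Tr_{2^m/2}(0\cdot y)$. Hence $g_v$ has the shape \eqref{e1} with $\psi_v(t)=vt^k$ and $\mu_v=0$. The preceding proposition therefore asserts that $g_v$ is bent if and only if \eqref{cond1} and \eqref{cond2} hold for $G_v(z)=\psi_v(z)+\mu_v z=vz^k$.

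Finally, I would verify these two conditions. For \eqref{cond1}, the map $z\mapsto vz^k$ is a permutation of $\gf_{2^m}$ because $\gcd(k,2^m-1)=1$ and $v\neq 0$. For \eqref{cond2}, for every $\beta\in\gf_{2^m}^\ast$ the map $z\mapsto G_v(z)+\beta z=vz^k+\beta z=v\bigl(z^k+(\beta/v)z\bigr)$ differs by a nonzero multiplicative constant from $z\mapsto z^k+bz$ with $b=\beta/v$, and multiplication by a nonzero constant preserves the $2$-to-$1$ property; as $\beta$ runs through $\gf_{2^m}^\ast$ with $v$ fixed, $b$ also runs through $\gf_{2^m}^\ast$, so the standing assumption (read as covering every nonzero $b$, which is precisely what class $\mathcal{H}$ demands) delivers the required $2$-to-$1$ condition for every $\beta\in\gf_{2^m}^\ast$. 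Thus every nonzero component $g_v$ is bent, i.e.\ $F$ is a vectorial bent function. The conceptual work is concentrated in the exponent reduction that unveils the class $\mathcal{H}$ structure; afterwards, the earlier proposition does everything, and the only delicate point to flag is that the hypothesis must be invoked at every nonzero $b$, not just one, in order to match the quantifier in \eqref{cond2}.
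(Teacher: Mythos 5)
Your proof is correct, but it takes a genuinely different formal route from the paper's. You reduce the exponent modulo $2^m-1$ to exhibit $F(x,y)=x(y/x)^k$ for $x\neq 0$, recognize each component $\Tr_{2^m/2}(vF)$ as a class $\mathcal{H}$ function of the form (\ref{e1}) with $\psi_v(t)=vt^k$ and $\mu=0$, and then invoke the Carlet--Mesnager characterization: condition (\ref{cond1}) is exactly the hypothesis $\gcd(k,2^m-1)=1$, and condition (\ref{cond2}) is the $2$-to-$1$ hypothesis after factoring out the nonzero constant $v$. The paper instead computes the Walsh transform $\widehat{\chi}_{F_v}(a,b)$ from scratch: it splits off the $x=0$ term, substitutes $y=zx$, and reduces the transform to $2^m\delta_0(b)-2^m+2^m\,\#\{z\in\mathbb{F}_{2^m} : vz^k+bz+a=0\}$, using the $2$-to-$1$ hypothesis when $b\neq 0$ and the permutation property of $z\mapsto z^k$ when $b=0$. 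The two arguments are the same computation at bottom (the substitution $y=zx$ is precisely the class-$\mathcal{H}$ mechanism), but yours is shorter and makes the structural origin of the two hypotheses transparent, while the paper's is self-contained. You are also right to flag the quantifier: both proofs need $z\mapsto z^k+bz$ to be $2$-to-$1$ for \emph{every} $b\in\mathbb{F}_{2^m}^\ast$ (the paper's $b\neq 0$ case silently rescales $vz^k+bz$ to $z^k+(b/v)z$, which sweeps out all nonzero coefficients as $b$ and $v$ vary), so the theorem's hypothesis must be read with a universal quantifier on $b$, exactly as you do.
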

\begin{proof}
Recall that $F$ is bent if and only if all its components (Boolean) functions $F_v: x\mapsto \Tr_{2^m/2}(vF(x,y))$ ($v\in\mathbb{F}_{2^m}^\ast$)  are bent on $\GF m$. Let us compute the Walsh transform of $F_v$ at each element $(a, b)\in \mathbb{F}_{2^m}\times \mathbb{F}_{2^m}$. 

 We have:
\begin{displaymath}
\begin{split}
  \widehat{\chi}_{F_v}(a,b)&=\!\!\!\sum_{x\in\GF m}\!\!\!\sum_{y\in\GF m} (-1)^{\Tr_{2^m/2} (vy^k x^{k(2^m-2)+1})+\Tr_{2^m/2} (ax)+\Tr_{2^m/2} (by)}\\
  &=\sum_{y\in\GF m}\!\!\!(-1)^{\Tr_{2^m/2} (by)}+\!\!\!\sum_{x\in\GF m^\ast}\!\!\!\sum_{y\in\GF m}  (-1)^{\Tr_{2^m/2} (vy^k x^{k(2^m-2)+1})+\Tr_{2^m/2} (ax)+\Tr_{2^m/2}(by)}\\
  &=2^m\delta_0(b)+\!\!\!\sum_{x\in\GF m^\ast}\!\!\!\sum_{y\in\GF m}  (-1)^{\Tr_{2^m/2} (vy^k x^{k(2^m-2)+1})+\Tr_{2^m/2} (ax)+\Tr_{2^m/2} (by)},
\end{split}
\end{displaymath}
where $\delta_0(b)$ denotes $1$ if $b=0$ and $0$ if $b\neq0$.

Set $z:=x^{(2^m-2)}y$, that is, $y=zx$. Therefore,
\begin{displaymath}
\begin{split}
  \widehat{\chi}_{F_v}(a,b)&=2^m\delta_0(b)+\!\!\!\sum_{z\in\GF m}\!\!\!\sum_{x\in\GF m^\ast}  (-1)^{\Tr_{2^m/2} (vz^kx)+\Tr_{2^m/2} (ax)+\Tr_{2^m/2} (bzx)}\\
  &=2^m\delta_0(b)+\!\!\!\sum_{z\in\GF m}\!\!\!\sum_{x\in\GF m^\ast}  (-1)^{\Tr_{2^m/2} ((vz^k+bz+a)x)}\\
   &=2^m\delta_0(b)-2^m+2^m \#\{ z\in\GF m \mid vz^k+bz+a=0\}.
     \end{split}
\end{displaymath}

Now, if $b\not=0$ then the equation $vz^k+bz+a=0$ has $0$ or $2$ solutions in $\GF m$ since the mapping $z\mapsto z^{k}+bz$ is 2-to-1  on $\GF m$. Thus, $\widehat{\chi}_{F_v}(a,b)=2^m(\#\{ z\in\GF m \mid vz^k+bz+a=0\}-1)=\pm 2^m$.\\
If $b=0$ then the equation $vz^k+a=0$ has only one solution in $\GF m$ since $\gcd(k, 2^m-1)=1$. Hence, $\widehat{\chi}_{F_v}(a,b)=2^m\#\{ z\in\GF m \mid vz^k+a=0\}=2^m$. This completes the proof.
\end{proof}

\subsection{Semi-bent functions}
Semi-bent functions (or $2$-plateaued functions) on $\GF n$ exist only when $n$ is even. Semi-bent functions are defined as follows.

\begin{definition}
Let $n$ be an even integer. A Boolean function on $\GF n$ is
said to be semi-bent  if its Walsh transform satisfies $\displaystyle \widehat{\chi}_f(a) \in\{0, \pm  2^{\frac{n+2}2}\}$
 for all $a \in \GF n$.
\end{definition}
Recall that the Maiorana-McFarland's constructions are the best known primary constructions of bent functions   (\cite{Maiorana73, Dillon74}).
The \emph{Maiorana-McFarland class} is the set of all the Boolean functions on $\GF m\times \GF m$ 
of the form : $f(x,y) = \Tr_{2^m/2}(x  \pi (y)) +  g(y),$ where $ x, y \in \GF m, \pi$ is any permutation on~$\GF m$ and $g$ is any 
Boolean function on~$\GF m$. Any such function is bent (the bijectivity of~$\pi$ is a 
necessary and sufficient condition for $f$ being bent). By computing the Walsh transform, we see that $f$ is semi-bent on $\GF m\times \GF m$ \mqu{if} $\pi$ is a 2-to-1 mapping
from $\GF m$  to $\GF m$.
Therefore two-to-one mappings over finite fields in characteristic $2$ allow to construct semi-bent Boolean functions in bivariate representation from the Maiorana-McFarland class  as follows:

\begin{theorem}\label{semibent}
Let $\pi$ be a mapping from $\GF m$ to $\GF m$ and 
 $g$ be a  Boolean function on~$\GF m$.
Let $f$ be a Boolean function defined over $\GF m\times \GF m$ by $f(x,y)=\Tr_{2^m/2}(x \pi (y))+g(y)$. 
\mqu{If $\pi$ is $2$-to-$1$ on $\GF m$, then $f$ is semi-bent.}
\end{theorem}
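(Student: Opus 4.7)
The plan is to compute the Walsh transform $\widehat{\chi}_f(a,b)$ of $f$ at an arbitrary point $(a,b) \in \GF m \times \GF m$ and check that it takes only values in $\{0, \pm 2^{(n+2)/2}\} = \{0, \pm 2^{m+1}\}$, where $n=2m$. The key idea is that summing first over the variable $x$ collapses the exponential sum to a trivial character sum over $\GF m$, leaving only those $y$ whose image under $\pi$ equals $a$. Since $\pi$ is $2$-to-$1$, the cardinality of $\pi^{-1}(a)$ is always $0$ or $2$, so the remaining sum has at most two terms.

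More precisely, I would first write
\begin{equation*}
\widehat{\chi}_f(a,b) = \sum_{y\in\GF m}(-1)^{g(y)+\Tr_{2^m/2}(by)}\sum_{x\in\GF m}(-1)^{\Tr_{2^m/2}(x(\pi(y)+a))}.
\end{equation*}
The inner sum equals $2^m$ if $\pi(y)=a$ and $0$ otherwise, hence
\begin{equation*}
\widehat{\chi}_f(a,b) = 2^m\sum_{y\in\pi^{-1}(a)}(-1)^{g(y)+\Tr_{2^m/2}(by)}.
\end{equation*}
Then I would split into cases according to whether $\pi^{-1}(a)=\emptyset$ or $\pi^{-1}(a)=\{y_1,y_2\}$. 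In the first case the Walsh coefficient is $0$. In the second case the sum consists of two $\pm 1$ terms, giving a value in $\{0,\pm 2\}$, so $\widehat{\chi}_f(a,b)\in\{0,\pm 2^{m+1}\}$. Since $n=2m$, this is exactly $\{0,\pm 2^{(n+2)/2}\}$, which is the semi-bent condition.

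There is essentially no obstacle here: the argument is a direct computation exploiting only the orthogonality of additive characters of $\GF m$ and the defining property of a $2$-to-$1$ mapping. The only minor point to be careful about is that in the odd-preimage case allowed for $p$ odd the argument would fail, but in characteristic $2$ every fiber of a $2$-to-$1$ function has even cardinality, so the dichotomy $\#\pi^{-1}(a)\in\{0,2\}$ holds uniformly and the bound on $\widehat{\chi}_f(a,b)$ follows immediately.
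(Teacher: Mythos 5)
Your proof is correct and follows essentially the same route as the paper: sum over $x$ first using orthogonality of additive characters to reduce the Walsh transform to $2^m\sum_{y\in\pi^{-1}(a)}(-1)^{g(y)+\Tr_{2^m/2}(by)}$, then invoke the $2$-to-$1$ property to conclude that this lies in $\{0,\pm 2^{m+1}\}$. Your explicit remark that every fiber has even cardinality in characteristic $2$ is a small clarification the paper leaves implicit, but the argument is the same.
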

\begin{proof}
For every $(a,b)\in \GF m\times\GF m$,
  we have:
\begin{displaymath}
\begin{split}
  \widehat{\chi}_{f}(a,b)&=\!\!\!\sum_{x\in\GF m}\!\!\!\sum_{y\in\GF m} (-1)^{\Tr_{2^m/2} (x \pi (y))+g(y)+\Tr_{2^m/2} (ax)+\Tr_{2^m/2} (by)}\\
  &=\sum_{y\in\GF m}(-1)^{g(y)+\Tr_{2^m/2} (by)}\sum_{x\in\GF m} (-1)^{\Tr_{2^m/2} ((\pi (y)+a)x)}\\
  &=2^m\sum_{y\in\GF m\mid \pi (y)=a }(-1)^{g(y)+\Tr_{2^m/2} (by)}.\\
\end{split}
\end{displaymath}
Since the mapping $y\in\GF
m \mapsto \pi (y)$ is 2-to-1 for every $a\in\GF m$, we have
$\widehat{\chi}_{f}(a,b)\in\{0, \pm 2^{m+1}\}$, which completes the proof.
\end{proof}

The following statement illustrates an example of constructions of semi-bent functions via $2$-to-$1$ mappings in the line of the Maiorana-McFarland's method.
\begin{proposition}
Let $r$ be a positive integer. Set $m=2r-1$.  Let $g$ be any Boolean function over $\GF m$. Define over $\GF m \times \GF m$ a Boolean function by  $f(x,y)=\Tr_{2^m/2} (xy^{2^r+2}+ xy )+g(y)$, $\forall (x,y)\in \GF m\times \GF m$.
Then $f$ is semi-bent.
\end{proposition}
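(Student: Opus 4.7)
The plan is to apply Theorem~\ref{semibent} with $\pi(y):=y^{2^r+2}+y$, since $f(x,y)=\Tr_{2^m/2}(x\pi(y))+g(y)$ already has the Maiorana--McFarland shape required there. The whole argument therefore reduces to verifying that $\pi$ is a $2$-to-$1$ mapping on $\gf_{2^m}$ for $m=2r-1$, regardless of the choice of $g$.

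To check the $2$-to-$1$ property, I would first analyze $\pi(y+a)+\pi(y)=0$ for $a\in\gf_{2^m}^\ast$. Expanding $(y+a)^{2^r+2}=(y^{2^r}+a^{2^r})(y^2+a^2)$ in characteristic~$2$ yields
\[
a^2y^{2^r}+a^{2^r}y^2+a^{2^r+2}+a=0.
\]
The substitution $y=at$ followed by division by $a^{2^r+2}$ turns this into $(t^{2^{r-1}}+t)^2=1+a^{-(2^r+1)}$, i.e., $L(t)=c_a$ with $L(t):=t^{2^{r-1}}+t$ and $c_a:=(1+a^{-(2^r+1)})^{1/2}$. Two arithmetic inputs then drive the count. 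First, $\gcd(r-1,2r-1)=1$ gives $\ker L=\gf_2$, so $L$ is a $\gf_2$-linear $2$-to-$1$ map whose image is the hyperplane $\{v:\Tr_{2^m/2}(v)=0\}$. Second, $\gcd(2^r+1,2^{2r-1}-1)=1$ (by Euclid), so $a\mapsto a^{-(2^r+1)}$ permutes $\gf_{2^m}^\ast$. Because $\Tr_{2^m/2}$ is Frobenius-invariant and $\Tr_{2^m/2}(1)=1$ (as $m$ is odd), the condition $c_a\in\im L$ becomes $\Tr_{2^m/2}(a^{-(2^r+1)})=1$, satisfied by exactly $2^{m-1}$ elements $a\in\gf_{2^m}^\ast$. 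Each such $a$ produces a single pair $\{at_0,a(t_0+1)\}$ on which $\pi$ takes equal values, giving $2^{m-1}$ pairs in total.

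This already gives the moment identities $\sum_b\#\pi^{-1}(b)=2^m$ and $\sum_b(\#\pi^{-1}(b))^2=2^{m+1}$; Cauchy--Schwarz then yields $|\im\pi|\geq 2^{m-1}$, with equality forcing every nonzero fiber of $\pi$ to have cardinality~$2$, i.e., $\pi$ to be $2$-to-$1$. The crux, and the main obstacle of the proof, is to establish this equality, equivalently, that the $2^{m-1}$ pairs above partition $\gf_{2^m}$. I would argue by contradiction: if some $y$ were to lie in pairs with two distinct offsets $a_1\neq a_2$, then $L(y/a_1)=c_{a_1}$ and $L(y/a_2)=c_{a_2}$ would hold simultaneously. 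Using the skew relation $L(\mu u)=\mu L(u)+L(\mu)u^{2^{r-1}}$ (valid since $L$ is $\gf_2$-linear but not $\gf_{2^m}$-linear) with $\mu=a_1/a_2$ and $u=y/a_1$, one solves for $u$ in closed form and then enforces the back-consistency $L(u)=c_{a_1}$. The resulting polynomial identity in $a_1,a_2$ can be shown incompatible with both offsets satisfying $\Tr_{2^m/2}(a_i^{-(2^r+1)})=1$ together with $a_1\neq a_2$, completing the argument and hence, via Theorem~\ref{semibent}, the proof that $f$ is semi-bent.
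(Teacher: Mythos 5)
Your overall strategy coincides with the paper's: reduce to Theorem \ref{semibent} and verify that $\pi(y)=y^{2^r+2}+y$ is $2$-to-$1$ on $\gf_{2^m}$ for $m=2r-1$. The paper, however, disposes of this second point by citing Cusick and Dobbertin \cite{CusickDobbertin96}, whereas you attempt a self-contained proof. Your preliminary analysis is correct and goes beyond the paper: the reduction of $\pi(y+a)+\pi(y)=0$ to $L(t)=c_a$ with $L(t)=t^{2^{r-1}}+t$, the computation $\ker L=\gf_2$ from $\gcd(r-1,2r-1)=1$, the identification of the image of $L$ with the trace-zero hyperplane, the count of exactly $2^{m-1}$ admissible offsets $a$ via $\gcd(2^r+1,2^{2r-1}-1)=1$ and $\Tr_{2^m/2}(1)=1$, and the resulting moment identities $\sum_b\#\pi^{-1}(b)=2^m$ and $\sum_b(\#\pi^{-1}(b))^2=2^{m+1}$ are all sound.

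The gap is exactly where you locate it, and it is not closed. Writing $n_b=\#\pi^{-1}(b)$, the two moments only give $\sum_b n_b(n_b-2)=0$, which does not exclude, for instance, three fibers of size $1$ coexisting with one fiber of size $3$; so the pairwise disjointness of the $2^{m-1}$ collision pairs must genuinely be proved. Your proposed mechanism --- solve for $u^{2^{r-1}}$ from the skew relation $L(\mu u)=\mu L(u)+L(\mu)u^{2^{r-1}}$ and then assert that the back-substituted identity in $a_1,a_2$ is ``incompatible'' with $\Tr_{2^m/2}(a_i^{-(2^r+1)})=1$ for $a_1\neq a_2$ --- is only gestured at: no reason is given why the resulting equation has no solutions, and it is not a routine consequence of the two trace conditions. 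This disjointness is precisely the hard content of the Cusick--Dobbertin result (equivalently, the three-valuedness of the associated crosscorrelation), so as written the argument is incomplete at its decisive step. You should either carry out that final computation in full or, as the paper does, cite \cite{CusickDobbertin96} for the $2$-to-$1$ property.
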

\begin{proof}
The construction comes from Theorem \ref{semibent} and the fact that the mapping $y\in\GF
m \mapsto y^{2^r+2}+y$ is 2-to-1. (\cite{CusickDobbertin96}). 
 \end{proof}

Note that given an APN function, one can derive a  construction of semi-bent function in the sprit of  Maiorana-McFarland's method.

\subsection{Planar functions}
Let $q=p^n$ where $p$ is prime and $n$ is a positive integer.  A planar function is a function $f:\mathbb{F}_q\rightarrow \mathbb{F}_q $ such that, for every $a\in\mathbb{F}_q^*$, the function $c\mapsto f(c+a)-f(c)$ is a bijection on $\mathbb{F}_q$. Planar functions can be used to construct finite projective planes, and they have been studied by finite geometers since 1968.
 
 The following result highlights the importance of $2$-to-$1$ mappings for the constructions of planar functions of minimal size of their image set.
. 
  \begin{theorem}\cite{Kyurenghyan-Pott2008}
  Let $F :\mathbb{F}_q\rightarrow \mathbb{F}_q$ be a mapping and $\im (F)$ be its image set. Assume that $F$ is planar (which implies $\# \im (F)\geq \frac{q+1}2$). Then $F$ is $2$-to-$1$ if and only if $\# \im (F)= \frac{q+1}2$.
 \end{theorem}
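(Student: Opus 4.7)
The plan is to relate the image size of $F$ to the second moment $\sum_b (\#F^{-1}(b))^2$, which the planar hypothesis pins down exactly. Throughout I will write $n_b = \#F^{-1}(b)$ for $b \in \mathbb{F}_q$ and $N = \#\im(F) = \#\{b : n_b \geq 1\}$; recall also that the planar hypothesis forces $p$ to be odd (otherwise $c \mapsto F(c+a)-F(c)$ takes equal values at $c$ and $c+a$).

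First I would compute the second moment. The identity
\[
\sum_{b \in \mathbb{F}_q} n_b^2 \;=\; \#\{(x,y)\in\mathbb{F}_q^2 : F(x)=F(y)\}
\]
is immediate. Splitting the right-hand side into the diagonal $y=x$ (contributing $q$) and the off-diagonal pairs $(x,x+a)$ with $a \neq 0$ satisfying $F(x+a)-F(x)=0$, the planar hypothesis says that for each $a \neq 0$ there is exactly one such $x$, contributing $q-1$. Hence
\[
\sum_{b \in \mathbb{F}_q} n_b^2 \;=\; 2q-1.
\]
Combining this with the trivial $\sum_b n_b = q$, the Cauchy–Schwarz inequality $q^2 = \bigl(\sum_b n_b\bigr)^2 \leq N \cdot (2q-1)$ gives the claimed lower bound $N \geq (q+1)/2$ (using that $q$ is odd, so $(q+1)/2$ is an integer).

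Next, the key trick is to introduce the nonnegative test polynomial $(t-1)(t-2)$, which vanishes exactly at $t \in \{1,2\}$ and is positive on all other nonnegative integers. Summing over $b \in \im(F)$ and using the two moment identities,
\[
\sum_{b \in \im(F)} (n_b-1)(n_b-2) \;=\; \sum_b n_b^2 - 3\sum_b n_b + 2N \;=\; (2q-1) - 3q + 2N \;=\; 2N - q - 1.
\]
The left-hand side is a sum of nonnegative integers; it equals $0$ if and only if $N = (q+1)/2$, and in that case every $n_b$ with $b \in \im(F)$ satisfies $n_b \in \{1,2\}$. Counting the number $m$ of $b$ with $n_b=1$ and $k$ of $b$ with $n_b=2$ from $k+m = (q+1)/2$ and $2k+m = q$ yields $m=1$ and $k=(q-1)/2$, which is exactly the definition of a $2$-to-$1$ mapping in odd characteristic recalled in Section \ref{def}.

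For the converse direction, if $F$ is $2$-to-$1$ then by definition there is exactly one $b$ with $n_b=1$ and all other fibers have size $0$ or $2$; writing $k$ for the number of fibers of size $2$, the relation $2k+1 = q$ gives $k = (q-1)/2$ and hence $N = k+1 = (q+1)/2$. The main (and only) technical step is the clean moment identity $\sum_b n_b^2 = 2q-1$; everything else is formal manipulation with the polynomial $(t-1)(t-2)$, whose choice is guided by wanting the extremal image size to force fibers of size $\leq 2$.
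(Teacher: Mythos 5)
The paper does not prove this statement---it is quoted from Kyureghyan--Pott \cite{Kyurenghyan-Pott2008} without proof---so there is no in-paper argument to compare against; your proof is correct and self-contained, and it is essentially the standard second-moment counting argument for planar maps. The key identity $\sum_b n_b^2 = 2q-1$ is right (the diagonal gives $q$, and planarity gives exactly one zero of each difference map $c\mapsto F(c+a)-F(c)$ for $a\neq 0$, contributing $q-1$), and the identity $\sum_{b\in\im(F)}(n_b-1)(n_b-2)=2N-q-1$ together with the nonnegativity of each term cleanly yields both the equivalence and the fiber structure. One small remark: the Cauchy--Schwarz step is both redundant and slightly imprecise as stated, since $q^2/(2q-1)$ is strictly less than $(q+1)/2$, so that route needs the extra observation $q^2/(2q-1)>(q-1)/2$ plus integrality of $N$; but your $(t-1)(t-2)$ identity already gives $2N-q-1\geq 0$ directly, so you can simply drop the Cauchy--Schwarz paragraph.
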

 
A class of polynomials was described by Dembowski and Ostrom in \cite{DembowskiOstrom68}: the so-called Dembowski-Ostrom polynomials.
 For those polynomials, the property of being planar is equivalent to the property of being $2$-to-$1$. We first recall their definition.
 \begin{definition}\label{Dembowski-Ostrom_polynomials}
 The polynomial $P\in\mathbb{F}_q[x]$ is called a Dembowski-Ostrom polynomial if $P$ has the shape $P(x)=\sum_{i,j=0}^{n-1} a_{ij}x^{p^i+p^j}$.
 \end{definition}
 \begin{proposition}\cite{Chen-Polhill2011}
 Let $P: \mathbb{F}_q\rightarrow \mathbb{F}_q$ be given by a Dembowski-Ostrom polynomial. Then $P$ is planar if and only if $P$ is $2$-to-$1$.
 \end{proposition}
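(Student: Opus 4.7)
The plan is to exploit the bilinear structure of Dembowski-Ostrom polynomials. From $P(x)=\sum a_{ij}x^{p^i+p^j}$ a direct expansion yields
$$P(x+y)=P(x)+P(y)+B(x,y),$$
where $B(x,y)=\sum a_{ij}(x^{p^i}y^{p^j}+y^{p^i}x^{p^j})$ is a symmetric $\gf_p$-bilinear form. Consequently, for every $a\in\gf_q$, $\Delta_a P(x):=P(x+a)-P(x)=P(a)+L_a(x)$ is affine in $x$ with $\gf_p$-linear part $L_a(x):=B(x,a)$. Hence $P$ is planar precisely when $L_a$ is a bijection of $\gf_q$ for every $a\neq 0$. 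Since planarity forces $p$ to be odd, I work in odd characteristic; then every exponent $p^i+p^j$ is even, giving $P(-x)=P(x)$, and setting $y=x$ in the identity above yields $B(x,x)=2P(x)$.

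For the direction $P$ planar $\Rightarrow P$ is $2$-to-$1$, I plan to combine the Kyureghyan-Pott theorem with the evenness of $P$. The identity $P(-x)=P(x)$ pairs up the preimages of every nonzero value, so $\#P^{-1}(y)$ is even for $y\neq 0$ while $\#P^{-1}(0)$ is odd. Writing $k_y=\#P^{-1}(y)$, the count $\sum_y k_y=q$ then forces
$$q\geq 1+2\bigl(\#\im(P)-1\bigr),$$
hence $\#\im(P)\leq (q+1)/2$. Combined with the planarity lower bound $\#\im(P)\geq (q+1)/2$ supplied by the Kyureghyan-Pott theorem, this gives equality, and the same theorem concludes that $P$ is $2$-to-$1$.

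For the converse, I would count ordered pairs $(x,y)\in\gf_q^2$ with $x\neq y$ and $P(x)=P(y)$ in two different ways. The $2$-to-$1$ hypothesis forces the unique element with a single preimage to be $P(0)=0$ (since the evenness of $P$ makes every other fiber of even size), and hence produces exactly $\sum_y k_y(k_y-1)=2\cdot(q-1)/2=q-1$ such ordered pairs. Parametrizing by $z=y-x\neq 0$, the DO identity rewrites $P(x)=P(x+z)$ as the affine condition $L_z(x)=-P(z)$. The key observation is that
$$L_z(-z/2)=-\tfrac12\,L_z(z)=-\tfrac12\cdot 2P(z)=-P(z),$$
using the $\gf_p$-linearity of $L_z$ and the invertibility of $2\in\gf_p$; hence $-P(z)\in\im(L_z)$ for \emph{every} $z\neq 0$, so the fiber $\{x:L_z(x)=-P(z)\}$ has size $|\ker L_z|\geq 1$. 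Summing,
$$q-1=\sum_{z\neq 0}|\ker L_z|\geq q-1,$$
which forces $|\ker L_z|=1$, i.e.\ $L_z$ is bijective, for every $z\neq 0$. This is exactly planarity.

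The main obstacle lies in the converse: one must rule out that some $L_z$ has a nontrivial kernel while the fibers for other $z$ compensate. The identity $B(z,z)=2P(z)$ is precisely what makes every fiber simultaneously nonempty, turning the pair count into a tight equality. This is also where the odd characteristic is genuinely used, both for inverting $2$ and for the evenness $P(-x)=P(x)$ used in the forward direction.
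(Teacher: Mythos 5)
The paper offers no proof of this proposition: it is quoted from Chen and Polhill \cite{Chen-Polhill2011} as a citation only, so there is no in-paper argument to compare yours against. Your proof is correct and self-contained in odd characteristic, and the ingredients are exactly the right ones: the decomposition $P(x+y)=P(x)+P(y)+B(x,y)$ with $B$ symmetric and $\gf_p$-bilinear, the identity $B(x,x)=2P(x)$, and the consequent observation that $x=-z/2$ always solves $L_z(x)=-P(z)$. That last point is what makes every difference equation $P(x+z)=P(x)$ solvable, so your double count $\sum_{z\neq 0}\#\{x: L_z(x)=-P(z)\}=q-1$ is an equality of a sum of $q-1$ terms each at least $1$, forcing each $\ker L_z$ to be trivial; the converse direction is airtight. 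For the forward direction you route through the Kyureghyan--Pott theorem stated just above in the paper, which is legitimate, though your own identity already gives a shorter direct proof: planarity makes $x=-z/2$ the \emph{unique} solution of $L_z(x)=-P(z)$, so the fibers of $P$ are precisely the sets $\{x,-x\}$, whence $P$ is $2$-to-$1$ with exceptional fiber $\{0\}$.

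One caveat worth recording: as literally stated the proposition fails in characteristic $2$. For instance $P(x)=x^4+x^2$ is a Dembowski--Ostrom polynomial over $\gf_4$ inducing the $2$-to-$1$ map $x\mapsto x^2+x$, yet no planar function exists over $\gf_{2^n}$. So the hypothesis that $p$ is odd is genuinely needed for the ``$2$-to-$1$ implies planar'' direction and should be stated explicitly; your justification ``planarity forces $p$ odd'' covers only the forward implication, where planarity is the hypothesis, and not the converse, where it is the conclusion. This is a defect of the proposition as transcribed rather than of your argument, but it deserves a sentence.
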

 
\subsection{Permutation polynomials}

Permutation polynomials can also be constructed from $2$-to-$1$ mappings. 
\begin{proposition}
	Let $F:\gf_{2^n}\rightarrow \gf_{2^n}$ be a two-to-one mapping. 
	Denote by $\im (F)$ the image set of $F$. Let $\phi:\im(F)
\rightarrow \gf_{2^n}\setminus \im(F)$ be a bijection, and $\gf_{2^n}=S_1\cup S_2$ be a disjoint decomposition of $\gf_{2^n}$ such that $F(S_1)=F(S_2)=\im (F)$.
Define 
	\begin{equation*}
G(x) = \left\{\begin{array}{ll}
F(x), & \text{if}\  x\in S_1;\\
\phi(F(x)), & \text{if}\  x\in S_2. 
\end{array}\right. 
\end{equation*}
	Then $G$ is a permutation polynomial over $\gf_{2^n}$. 
\end{proposition}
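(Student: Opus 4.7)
The plan is to show that $G$ decomposes into two bijections on complementary pieces. First, I would observe that since $F$ is $2$-to-$1$ on a finite field of even cardinality $|\gf_{2^n}|=2^n$, every element of $\im(F)$ has exactly two preimages, which forces $\#\im(F)=2^{n-1}$ and hence $\#(\gf_{2^n}\setminus \im(F))=2^{n-1}$. This is what makes the bijection $\phi:\im(F)\to \gf_{2^n}\setminus \im(F)$ in the hypothesis well-defined on the nose.

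Next, I would extract the key consequence of the decomposition $\gf_{2^n}=S_1\cup S_2$ (disjoint) with $F(S_1)=F(S_2)=\im(F)$: each fiber $F^{-1}(b)$ for $b\in \im(F)$ has exactly two elements, one in $S_1$ and one in $S_2$ (otherwise one of $F(S_1), F(S_2)$ would miss $b$). Therefore $\#S_1=\#S_2=2^{n-1}$ and $F|_{S_i}:S_i\to \im(F)$ is a bijection for $i=1,2$. This is the small subtlety of the argument; it is a counting step but it is the heart of why the construction works.

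Finally, I would conclude by analyzing $G$ piece by piece. On $S_1$, we have $G|_{S_1}=F|_{S_1}$, which is a bijection onto $\im(F)$. On $S_2$, we have $G|_{S_2}=\phi\circ F|_{S_2}$, which is a composition of bijections $S_2 \xrightarrow{F} \im(F)\xrightarrow{\phi} \gf_{2^n}\setminus \im(F)$, hence a bijection onto $\gf_{2^n}\setminus \im(F)$. Since $\im(F)$ and $\gf_{2^n}\setminus \im(F)$ partition $\gf_{2^n}$ and their preimages $S_1, S_2$ partition $\gf_{2^n}$, the map $G$ is a bijection of $\gf_{2^n}$, i.e., a permutation. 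I do not anticipate any serious obstacle; the only step that requires a moment of thought is the forced bijectivity of $F|_{S_i}$ from the hypotheses, and it follows immediately from the pigeonhole/counting principle together with the $2$-to-$1$ property.
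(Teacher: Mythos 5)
Your proof is correct. The paper actually states this proposition without giving any proof, so there is nothing to compare against; your argument is the natural (and surely intended) one. You correctly identify and justify the one point that needs an argument, namely that the hypotheses $F(S_1)=F(S_2)=\im(F)$ together with the $2$-to-$1$ property force each fiber $F^{-1}(b)$, $b\in\im(F)$, to meet each of the disjoint sets $S_1,S_2$ in exactly one point, so that $F|_{S_1}$ and $F|_{S_2}$ are bijections onto $\im(F)$; the conclusion that $G$ glues two bijections onto the complementary sets $\im(F)$ and $\gf_{2^n}\setminus\im(F)$ then follows immediately. No gaps.
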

%
%

\section{Concluding remarks}
Many results presented in the literature highlight the importance of two-to-one mappings for designing cryptographic functions. Despite their importance, they have never been studied in detail in a general framework. Because of the gap between the interest of the notion of two-to-one mappings and the knowledge we have on it, our motivation was to bring  a systematic study on those mappings by providing several results including new tools, constructions and applications. From our criteria, we expected new constructions of cryptographic functions from two-to-one mappings.

\mqu{At last, we would like to note that most of the results of this paper can be easily generalized to $q$-to-$1$ mappings. $q$-to-$1$ mappings may also be useful in design theory, error-correcting codes, cryptography and others. We leave this generalization and the adventure to $q$-to-$1$ polynomials to interested readers.  }

 {\bf Acknowledgement.}  The authors deeply thank the Assoc. Edit. Prof. Xiaohu Tang  and the anonymous reviewers for their valuable comments and suggestions which have highly improved the manuscript.

\end{document}